\documentclass[a4paper,USenglish,cleveref, nameinlink,autoref, numberwithinsect]{lipics-v2021}
\hideLIPIcs
\usepackage[T1]{fontenc}
\def\doi#1{\href{https://doi.org/\detokenize{#1}}{\url{https://doi.org/\detokenize{#1}}}}
\usepackage{graphicx} 
\usepackage{amsmath}
\usepackage{amssymb}
\usepackage{booktabs}
\usepackage{enumerate}
\usepackage{bm}
\usepackage{nicefrac}

\usepackage[algo2e, vlined, ruled, algosection]{algorithm2e}

\newcommand{\cf}{\ensuremath{\mathcal{F}}}
\newcommand{\cl}{\ensuremath{\mathcal{L}}}

\newcommand{\B}{\ensuremath{\mathcal{B}}}
\newcommand{\R}{\ensuremath{\mathcal{R}}}
\newcommand{\I}{\ensuremath{\mathcal{I}}}
\newcommand{\A}{\ensuremath{\mathcal{A}}}
\newcommand{\Oc}{\ensuremath{\mathcal{O}}}
\newcommand{\sP}{\ensuremath{\mathsf{P}}}
\newcommand{\NP}{\ensuremath{\mathsf{NP}}}
\newcommand{\XP}{\ensuremath{\mathsf{XP}}}
\newcommand{\FPT}{\ensuremath{\mathsf{FPT}}}
\newcommand{\W}{\ensuremath{\mathsf{W[1]}}}
\newcommand{\WT}{\ensuremath{\mathsf{W[2]}}}
\newcommand{\N}{\ensuremath{\mathbb{N}}}

\newcommand{\fR}{\ensuremath{\mathfrak{R}}}
\newcommand{\D}{\ensuremath{\mathfrak{D}}}

\newcommand{\MinLeafF}{\textsc{Min-Leaf \cf-Tree}}
\newcommand{\MaxLeafF}{\textsc{Max-Leaf \cf-Tree}}

\newcommand{\MinLeafOhne}{\textsc{Min-Leaf}}
\newcommand{\MaxLeafOhne}{\textsc{Max-Leaf}}
\newcommand{\MinInOhne}{\textsc{Min-Internal}}
\newcommand{\MaxInOhne}{\textsc{Max-Internal}}

\newcommand{\MaxLeaf}{\textsc{Max-Leaf Spanning Tree}}

\newcommand{\MinInF}{\textsc{Min-Internal \cf-Tree}}
\newcommand{\MaxInF}{\textsc{Max-Internal \cf-Tree}}

\newcommand{\ConDom}{\textsc{Connected Dominating Set}}
\newcommand{\WCS}{\textsc{Weighted Circuit Satisfiability}}

\newcommand{\T}{\ensuremath{\mathfrak{T}}}
\newcommand{\Z}{\ensuremath{Z^*}}
\newcommand{\true}{\texttt{true}}
\newcommand{\false}{\texttt{false}}
\newcommand{\z}{\ensuremath{\overset{\scriptscriptstyle{Z*}}{\longrightarrow}}}

\newcommand{\BIP}{\textsc{One-Sided Grundy Total Domination}}

\newcommand{\ZFS}{\textsc{Zero Forcing Set}}

\usepackage{tikz}
\usetikzlibrary{arrows}
\usetikzlibrary{math}
\usetikzlibrary{calc}
\tikzstyle{treeedge}=[-stealth', ultra thick]
\tikzstyle{normaledge}=[black]
\tikzset{vertex/.style = {draw, circle, thick,minimum size = 5pt,inner sep=0pt}}
\tikzset{vertexKlein/.style = {minimum size = 0pt,inner sep=0pt}}

\crefname{observation}{Observation}{Observations}
\Crefname{observation}{Observation}{Observations}

\title{Breadth-First Search Trees with Many or Few Leaves}

\author{Jesse Beisegel}{Institute of Mathematics, Brandenburg University of Technology, Cottbus, Germany}{jesse.beisegel@b-tu.de}{https://orcid.org/0000-0002-8760-0169}{}
\author{Ekkehard Köhler}{Institute of Mathematics, Brandenburg University of Technology, Cottbus, Germany}{ekkehard.koehler@b-tu.de}{}{}
\author{Robert Scheffler}{Institute of Mathematics, Brandenburg University of Technology, Cottbus, Germany}{robert.scheffler@b-tu.de}{https://orcid.org/0000-0001-6007-4202}{}
\author{Martin Strehler}{Department of Mathematics, Westsächsische Hochschule Zwickau, Zwickau, Germany}{martin.strehler@fh-zwickau.de}{https://orcid.org/0000-0003-4241-6584}{}

\authorrunning{J. Beisegel, E. Köhler, R. Scheffler, and M. Strehler} 

\keywords{graph search, spanning tree, parameterized complexity, leaves of trees}

\ccsdesc[500]{Mathematics of computing~Graph algorithms} 

\ccsdesc[500]{Theory of computation~Parameterized complexity and exact algorithms}

\nolinenumbers

\begin{document}

\maketitle              
\begin{abstract}
The Maximum (Minimum) Leaf Spanning Tree problem asks for a spanning tree with the largest (smallest) number of leaves. As spanning trees are often computed using graph search algorithms, it is natural to restrict this problem to the set of search trees of some particular graph search, e.g., find the Breadth-First Search (BFS) tree with the largest number of leaves. We study this problem for Generic Search (GS), BFS  and Lexicographic Breadth-First Search (LBFS) using search trees that connect each vertex to its first neighbor in the search order (\emph{first-in trees}) just like the classic BFS tree. In particular, we analyze the complexity of these problems, both in the classical and in the parameterized sense. 

Among other results, we show that the minimum and maximum leaf problems are in \FPT{} for the first-in trees of GS, BFS and LBFS when parameterized by the number of leaves in the tree. However, when these problems are parameterized by the number of internal vertices of the tree, they are \W-hard for the first-in trees of GS, BFS and LBFS.

\keywords{graph search  \and spanning tree \and parameterized complexity}
\end{abstract}

\section{Introduction}

The study of the number of leaves in a spanning tree is a well-established area within graph theory. On the one hand, the problem of minimizing the number of leaves is related to the \textsc{Hamiltonian Path} problem: a spanning tree of a graph $G$ with exactly one leaf corresponds to a Hamiltonian path in $G$.\footnote{\label{fn:leaf}Note that we only consider rooted spanning trees here and we do not count the root as a leaf even if it has degree~1, following the convention in~\cite{bergougnoux2025parameterized}.} More broadly, this problem is equivalently known as the \textsc{Maximum Internal Spanning Tree} problem (MIST). For this problem, there are both Fixed-Parameter Tractable (\FPT) results when parameterized by the number of leaves ~\cite{fomin2013linear,li2017deeper,prieto2003either} as well as polynomial-time algorithms for specific graph classes. Linear-time algorithms exist for, e.g., interval graphs, block graphs, cactus graphs, or chain graphs~\cite{li2022simple,sharma2022algorithms}.

On the other hand, maximizing the number of leaves is known as the 
\textsc{Maximum Leaf Spanning Tree} problem (MLST). This particular problem is NP-hard not only for general graphs (problem ND2 in \cite{GareyJohnson}), but also for specific graph classes such as planar cubic graphs \cite{reich16complexity}. When viewed from the perspective of parameterized complexity, there is a wide range of \FPT{} algorithms for this problem when parameterized by the number of leaves~\cite{bodlaender1993linear,
bonsma2008spanning,fellows1992well,fellows2000coordinatized,kneis2011new,raible2010amortized,zehavi2018k-leaf}. The parametric dual of the problem, i.e., finding a spanning tree with minimal number of internal vertices, is equivalent to the \WT-complete problem of finding a minimum connected dominating set~\cite{fujie2003exact}.

Another line of research has focused on studying the properties of search orderings and their associated search trees (see \cite{corneil2008unified} for a survey on graph searches). Graph searches\footnote{Note that in this paper, the term \emph{graph searches} always refers to \emph{connected searches}, i.e., each prefix of a search ordering induces a connected graph.} like \emph{Breadth-First Search} (BFS), \emph{Depth-First Search} (DFS), or their variants, are crucial components in a multitude of graph algorithms. For example, \emph{Lexicographic BFS} (LBFS) or \emph{Maximum Cardinality Search} (MCS) can be used to recognize chordal graphs in linear time~\cite{rosetarjanlueker76,tarjan1984linear}. Furthermore, the search ordering derived from LBFS or MCS on a chordal graph can also be used to compute an optimal coloring, a maximum independent set or a maximum clique in linear time. Here, the end vertex of the search plays a pivotal role, as it is proven to be simplicial and a valid starting vertex for a perfect elimination ordering (see~\cite{golumbic1980algorithmic} for details). As a consequence, the end vertex problem has become a key area of study within this context, see, e.g.,~\cite{beisegel2019end,charbit2014influence,corneil2010end,kratsch2015end,rong2022graph,rong2026linear}.

Another structure related to graph searches are their search trees. The complexity of deciding whether a spanning tree can be constructed by a certain graph search was first studied in the 1980s for BFS and DFS~\cite{hagerup1985biconnected,hagerup1985recognition,korach1989dfs,manber1990recognizing}. More recently, this problem has also been considered for searches such as LBFS and MCS~\cite{beisegel2021recognition,beisegel2024graph}. As observed in~\cite{scheffler2025partial}, this research unveiled a strong relation to the end vertex problem. This relation also lead to the study of leaf recognition, i.e., the question whether a vertex can be a leaf in a search tree~\cite{scheffler2025leaves}.

Recently, Bergougnoux et al.~\cite{bergougnoux2025parameterized} have combined the study of spanning trees with few or many leaves with the concept of search trees.
They examine the algorithmic problems of finding DFS trees with either few or many leaves, highlighting potential applications in network design. They showed that the problems are hard when parameterized by the numbers of leaves but in \FPT{} when parameterized by the number of internal vertices. It therefore seems logical to explore similar questions for other searches and their corresponding trees. 

Note that BFS trees and DFS trees differ significantly in their structure. While DFS trees connect each vertex $v$ to the last neighbor visited before~$v$, BFS trees connect each vertex to its first visited neighbor. Following~\cite{beisegel2021recognition}, we call the first concept the \emph{last-in tree} (or $\cl$-tree) of the search ordering, whereas the second is called the \emph{first-in tree} (or $\cf$-tree).  DFS trees tend to have few leaves as, for example, every Hamiltonian path is a DFS tree. In contrast, BFS trees tend to have many leaves. For example, if we start the BFS in a universal vertex, then all vertices except the root are leaves. It therefore seems to be a reasonable hypothesis that the parameterized complexity of finding BFS trees with many or few leaves differs significantly from that of DFS trees for the natural parameters \emph{number of leaves} and \emph{number of internal vertices}.

\subparagraph{Our results.}
We study the problem of finding $\cf$-trees with few or many leaves. For a given graph $G$ and a search paradigm $\A$ such as BFS, we call an ordering constructed by $\A$ when applied to $G$ an \emph{$\A$-ordering} of $G$. We consider both the minimizing and maximizing versions of the following two problems, which differ in their use of parameter $k$. The first is the \textsc{Min-Leaf (Max-Leaf) \cf-Tree} of search $\A$, where given a graph $G$ and a parameter $k$ we ask for an $\A$-ordering whose $\cf$-tree has at most (least) $k$~leaves.

The second is the \textsc{Min-Internal (Max-Internal) \cf-Tree} of search $\A$, where given a graph $G$ and a parameter $k$ we ask for an $\A$-ordering of $G$ whose $\cf$-tree has at most (least) $k$ internal vertices. Note that even though the parameter $k$ does not appear in the problem name, it is essential in differentiating these two  problems, as -- without parameterization -- minimizing the number of leaves is the same as maximizing the number of the internal vertices and vice versa.

In particular, we consider GS, BFS, LBFS and related graph searches that share the so-called clique-starter property. Our results presented in this paper are summarized in \cref{tab:results-tree}. We are not aware of any previous work on these problems for these searches.

\begin{table}[t]
	\centering
	\caption{Results given in this paper. The first part of an entry concerns hardness results and the second entry concerns algorithms. For all considered cases, \W- and \WT-hardness also implies \NP-hardness of the non-parameterized problem.}\label{tab:results-tree}
    \resizebox{\textwidth}{!}{
	\begin{tabular}{l c c c c c c c c}
		\addlinespace
		\toprule
		results   &~~~& GS &~~~& BFS                 &~~~& LBFS           &~~~& other clique starters     \\[0.8ex] \toprule
		\MinLeafOhne{}     & & 
		  NP-compl. / FPT & & 
		NP-compl. / FPT & &
		NP-compl. / FPT & & 
		NP-hard / ?     
		   \\[0.8ex]
		\MaxLeafOhne{}  & & 
		NP-compl. / FPT & & 
		NP-compl. / FPT       & & 
		NP-compl. / FPT       & & 
		NP-hard / ?  
		\\[0.8ex]
        \MaxInOhne{}      & &
		  W[1]-compl. / XP    & & 
		W[1]-hard / XP & &
		W[1]-hard / ? & &
		W[1]-hard / ?        
		 \\[0.8ex]
		\MinInOhne{} & &
		W[2]-compl. / XP & & 
		W[2]-hard / XP & & 
		para-NP-hard & & 
		W[2]-hard / ? 
        \\[0.8ex] 
		\bottomrule \addlinespace
	\end{tabular}
    }
\end{table}

\section{Preliminaries}\label{sec:prelim}

All graphs in this paper are undirected, simple, connected, and non-empty. Given a graph~$G$, we denote the set of \emph{vertices} by $V(G)$ and the set of \emph{edges} by $E(G)$. As is customary, we use $n=|V|$ and $m=|E|$ to represent the number of vertices and edges, respectively. We use $N_G(v)$ to denote the \emph{(open) neighborhood} of a vertex $v$ in the graph $G$, while $N_G[v] = N_G(v) \cup \{v\}$ is the \emph{closed neighborhood} For further graph theoretic concepts, we refer to~\cite{west2001introduction}. 

A \emph{graph search} $\cal A$ is a procedure to traverse all vertices of a connected graph. The associated \emph{search ordering} $\sigma=(v_1,\dots,v_n)$ represents the order in which the vertices are visited. The \emph{bandwidth of a vertex ordering} $\sigma$ is the maximum distance of adjacent vertices in the ordering, i.e., $\operatorname{bw}(\sigma)=\max_{v_iv_j\in E(G)} |i-j|$. The \emph{bandwidth of a graph} $\operatorname{bw}(G)$ is the minimum bandwidth over all orderings.

In this paper, we consider the following searches $\cal A$. \emph{Generic Search} (GS) imposes no other conditions on the search than connectivity, meaning that after the first node is visited each newly visited node must be a neighbor of an already visited node. BFS refers to the standard \emph{Breadth-First Search} implemented via a queue data structure~\cite{Kleinberg06algorithmdesign}. 
\emph{Lexicographic Breadth-First Search} (LBFS) is similar to BFS, but uses partition refinement as an improved tie-breaker~(see \cref{lbfs} and \cite{habib2000lexbfs,rosetarjanlueker76} for details). 

\begin{algorithm2e}
	\KwIn{Connected graph $G=(V,E)$ and a distinguished vertex $ s \in V $}
	\KwOut{A vertex ordering $ \sigma $}
	\Begin{
		$ label(s) \leftarrow (n) $\;
		
		\lForEach{vertex $v \in V\setminus\{s\}$}{assign to $ v $ the empty label}
		
		\For{$ i \leftarrow 1 $ to $ n $}{pick an unnumbered vertex $ v $ with lexicographically largest label\;
			$ \sigma(i) \leftarrow v$\;
			\lForEach{unnumbered vertex $ w \in N(v) $}{append $ (n-i) $ to $ label(w) $}}		
	}\caption{Lexicographic Breadth First Search}\index{LBFS}
	\label{lbfs}
\end{algorithm2e}

An ${\cal A}^+_\rho$-search is a search following the search paradigm $\cal A$ using a given linear ordering $\rho$ of the vertices as a tie-breaker. Whenever $\cal A$ faces a tie, i.e., several vertices are valid choices for the next vertex, it uses the leftmost of these vertices in $\rho$ to proceed. Such $+$-searches usually appear in multi-sweep algorithms where $\rho$ was obtained through a preceding search.

We say that a graph search $\A$ is a \emph{clique starter} if for every graph $G$, every clique $C$ of $G$ and every ordering $\sigma$ of the vertices in $C$, there is an $\A$-ordering of $G$ that starts with $\sigma$. Most of the graph searches considered in the literature, in particular GS, BFS, and LBFS, are clique starters.

A spanning tree of $G$ rooted at a vertex $r\in V(G)$ is denoted by $(T,r)$ or simply by $T$ whenever the root is clear from context. A vertex $v \in V(T)$ is a \emph{leaf} if it has no descendants. Otherwise, it is called an \emph{internal vertex}. Note that we do not consider the root to be a leaf, even if it has degree one. A spanning tree $T$ of $G$ is an \emph{\cf-tree} associated with search ordering $\sigma=(r=v_1,v_2,\dots,v_n)$ if it is rooted in $r$ and for any vertex $v_i$, $1<i\le n$, among all vertices in $N(v_i)$, the parent of $v_i$ in the tree $T$ is leftmost in $\sigma$. 

A graph is a \emph{split graph} if its vertex set can be partitioned into a clique and an independent set. A graph $G$ is \emph{weakly chordal} if neither $G$ nor its complement $\overline{G}$ contain an induced cycle of length greater than four. The following two lemmas present graph operations that result in a weakly chordal graph if and only if the original graph is weakly chordal.

\begin{lemma}[Spinrad and Sritharan~\cite{spinrad1995algorithms}]\label{lemma:2-pair}
Let $G$ be a graph with a two pair $\{u, w\}$, i.e., a pair of vertices such that every induced $u$-$w$-path in $G$ has length~2. Then $G$ is weakly chordal if and only if the graph that is constructed by adding the edge~$uw$ to~$G$ is weakly chordal.
\end{lemma}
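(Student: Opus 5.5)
The plan is to argue via the forbidden structures directly: a graph is weakly chordal iff it has no hole (induced cycle of length at least five) and no antihole (complement of such a cycle). Since $\{u,w\}$ is a two pair and induced $u$-$w$-paths must have length~$2$, $u$ and $w$ are nonadjacent in $G$. Let $G' = G + uw$. A set $H$ inducing a hole or antihole in one of $G, G'$ but not both must contain both $u$ and $w$, because otherwise $G[H]=G'[H]$. So in each direction I only have to handle structures through $u$ and $w$. Write such a structure as a cycle $c_1,\dots,c_k$ with $k\ge 5$, where the cycle lives in the graph itself for a hole and in the complement for an antihole. For $k=5$ an antihole is a hole, so for antiholes I may assume $k\ge 6$.

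\emph{$G'$ weakly chordal $\Rightarrow$ $G$ weakly chordal.} Suppose $G$ has a hole $H$ through $u,w$. Then $H$ splits into two induced $u$-$w$-paths of $G$. By the two-pair property both have length~$2$, so $|H|=4$, a contradiction. Suppose instead $G$ has an antihole through $u,w$. Since $uw\notin E(G)$, $u$ and $w$ are consecutive on the complement cycle, say $u=c_1$ and $w=c_2$. Then $c_1,c_3,c_k,c_2$ is an induced path of length~$3$ in $G$. Indeed, $c_1c_3$, $c_3c_k$ and $c_kc_2$ are non-consecutive pairs, hence edges of $G$. The pairs $c_1c_k$, $c_3c_2$ and $c_1c_2$ are consecutive, hence non-edges. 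This contradicts the two-pair property.

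\emph{$G$ weakly chordal $\Rightarrow$ $G'$ weakly chordal.} Suppose $G'$ has a hole through $u,w$. It uses the edge $uw$, and deleting that edge leaves a chordless $u$-$w$-path of length at least $4$ in $G$, again a contradiction. Suppose instead $G'$ has an antihole through $u,w$. Now $u,w$ are adjacent in $G'$, so they are non-consecutive on the complement cycle: $u=c_1$, $w=c_j$ with $3\le j\le k-1$. In $G$ the complement consists of this cycle plus the chord $c_1c_j$. I then exhibit an induced $u$-$w$-path $c_1,x,y,c_j$ of length $3$ in $G$. This needs $x\in\{c_{j-1},c_{j+1}\}$ and $y\in\{c_2,c_k\}$, chosen so that $c_1x$, $xy$ and $yc_j$ are edges of $G$, i.e.\ non-consecutive on the cycle and different from the chord.

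The case analysis for this choice is where I expect the real work. For $4\le j\le k-2$, take $x=c_{j+1}$ and $y=c_2$. For $j=3$, take $x=c_4$ and $y=c_k$; this uses $k\ge 6$ to guarantee $c_4c_k\in E(G)$. The case $j=k-1$ is symmetric, with $x=c_{k-2}$ and $y=c_2$. In each case one checks that $c_1x$, $xy$ and $yc_j$ are edges of $G$. The remaining pairs $c_1y$, $xc_j$ and $c_1c_j$ are non-edges, since the first two are consecutive on the cycle and the last is the chord. So the path is induced, contradicting the two-pair property, and the equivalence follows.
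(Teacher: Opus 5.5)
Your proof is correct and complete. The paper gives no proof of this lemma; it is quoted from Spinrad and Sritharan and used as a black box in the weak-chordality argument of Theorem~\ref{thm:np-lbfs}. So there is no in-paper argument to compare against. Your argument is a valid self-contained proof: any hole or antihole that distinguishes $G$ from $G+uw$ must contain both $u$ and $w$, and each such structure yields an induced $u$-$w$ path of length $\neq 2$. I checked the index bookkeeping in each case.
\begin{itemize}
\item \emph{Antihole in $G$.} The path $c_1,c_3,c_k,c_2$ is induced for every $k\ge 5$, so your restriction to $k\ge 6$ is harmless there.
\item \emph{Antihole in $G+uw$.} Your three choices of $(x,y)$ cover all $j\in\{3,\dots,k-1\}$. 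The assumption $k\ge 6$ is used exactly where it is needed: for $c_4c_k\in E(G)$ when $j=3$, and for $c_{k-2}c_2\in E(G)$ when $j=k-1$.
\end{itemize}
The argument also never assumes that a $u$-$w$ path exists, so it covers the degenerate case where the two-pair condition holds vacuously.
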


\begin{lemma}[Beisegel et al.~\cite{beisegel2021recognition}]\label{lemma:simplicial}
Let $G$ be a graph and $v \in V(G)$ such that $v$ is simplicial ($N_G(v)$ induces a clique in $G$) or adjacent to at least $n(G) - 2$ vertices of $G$. Then $G$ is weakly chordal if and only if $G - v$ is weakly chordal.
\end{lemma}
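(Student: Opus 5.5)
The statement is \cref{lemma:simplicial}: if $v$ is simplicial or adjacent to at least $n(G)-2$ vertices, then $G$ is weakly chordal if and only if $G-v$ is. I would argue directly from the definition, that neither $G$ nor $\overline{G}$ contains an induced cycle of length at least five.

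\textbf{Only if.} This direction is immediate. $G-v$ is an induced subgraph of $G$, and $\overline{G-v}=\overline{G}-v$ is an induced subgraph of $\overline{G}$. An induced long cycle in either of these would already be an induced long cycle in $G$ or in $\overline{G}$.

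\textbf{If.} Assume $G-v$ is weakly chordal. Any induced cycle of length at least five in $G$ or in $\overline{G}$ that avoids $v$ would lie in $G-v$ or in $\overline{G-v}$, which is impossible. So it suffices to show that no such cycle passes through $v$. I would treat the two hypotheses separately.

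\emph{Case 1: $v$ is simplicial in $G$.}
\begin{itemize}
\item In $G$: on an induced cycle $C$ of length at least four through $v$, the two cycle-neighbours of $v$ are nonadjacent. But they lie in $N_G(v)$, which is a clique, a contradiction.
\item In $\overline{G}$: let $C$ be an induced cycle of length $k\ge 5$ through $v$. Its vertices other than $v$ and $v$'s two cycle-neighbours number $k-3\ge 2$. Each of them is nonadjacent to $v$ in $\overline{G}$, hence adjacent to $v$ in $G$, so it lies in $N_G(v)$. Any two of them are therefore adjacent in $G$, i.e.\ nonadjacent in $\overline{G}$. But $k-3\ge 2$ vertices of $C$ that form a contiguous path on $C$ include two consecutive cycle vertices, which are adjacent in $\overline{G}$. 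This is a contradiction.
\end{itemize}

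\emph{Case 2: $v$ has at most one non-neighbour in $G$.} Then in $\overline{G}$ the vertex $v$ has degree at most $1$, so it lies on no cycle of $\overline{G}$ at all. For $G$, note that an induced cycle of length $k\ge5$ contains $k-3\ge 2$ vertices nonadjacent to $v$ in $G$. Since $v$ has at most one non-neighbour, no such cycle can pass through $v$.

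The argument has no real obstacle; the only step needing care is the complement count in the simplicial case.
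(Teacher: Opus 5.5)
Your proof is correct and complete. The \emph{only if} direction follows because induced long cycles survive in supergraphs and $\overline{G-v}=\overline{G}-v$. For the \emph{if} direction, you rightly reduce to long induced cycles through $v$ in $G$ or $\overline{G}$, and both cases rule these out.

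The paper gives no proof of this lemma; it only cites Beisegel et al. Your direct argument, that such a $v$ can lie on neither a hole nor an antihole, is the natural way to establish it, so there is no gap to report.
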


For a detailed presentation of the concepts of parameterized complexity and the definition of the considered width-parameters, we refer the reader to \cite{cygan2015param,downey2013fundamentals}. 

\section{Number of Leaves as Parameter}\label{sec:leaves}

Here, we will present parameterized algorithms for both \MinLeafF{} and \MaxLeafF{} for several search paradigms. Note that deciding whether a particular vertex can be a leaf of an \cf-tree is \NP-complete for almost all considered searches~\cite{scheffler2025leaves}. This also holds for sets of $k$ fixed vertices with $k \geq 2$ by just appending leaves to the graph. Therefore, this problem seems to require a more sophisticated approach than simply checking all possible sets of $k$ vertices. 

To see the difference between the general spanning tree problems and the \cf-tree problems, we compare them on the examples given in~\cref{fig:path-triangles-star-ladders}. By definition, any \cf-tree is also a spanning tree. Therefore, when considering the minimization problems, any minimum leaf number for spanning trees is a lower bound for the number of leaves of an \cf-tree. However, the solutions can be arbitrarily far apart from each other: In the left graph of~\cref{fig:path-triangles-star-ladders}, we give a family of graphs where the minimum number of leaves in a general spanning tree is~$1$ and the minimum number of leaves in a BFS-tree is $\operatorname{\Omega}(n)$. For the maximization problems, any solution of the \cf-tree problem is clearly a lower bound for the spanning tree problem. However, the right graph of~\cref{fig:path-triangles-star-ladders} shows an example where the maximum number of leaves in a BFS-tree is $\Oc(\sqrt{n})$, while there are spanning trees with $\operatorname{\Omega}(n)$ leaves. We leave open whether this gap can be increased, i.e., whether there is a family of graphs in which the maximum number of leaves in any \cf-tree is constant and there are spanning trees with $\operatorname{\Omega}(n)$ leaves.

\begin{figure}
\begin{minipage}{0.38\linewidth}
\centering
  \begin{tikzpicture}[scale=0.4]
\footnotesize
\node[vertex] (1) at (0,0) {};
\node[vertex] (2) at (1,1.5) {};
\node[vertex] (3) at (2,0) {};
\node[vertex] (4) at (3,1.5) {};
\node[vertex] (5) at (4,0) {};
\node (6) at (5,0) {};
\node (7) at (5,1.5) {};
\draw[normaledge] (5) -- (6);
\draw[normaledge] (5) -- (7);
\node (88) at (6,0) {};
\node (89) at (6,1.5) {};
\node[vertex] (9) at (7,0) {};
\draw[normaledge] (88) -- (9);
\draw[normaledge] (89) -- (9);
\node[vertex] (10) at (8,1.5) {};
\node[vertex] (11) at (9,0) {};
\node[align = center] at ($(5)!.5!(9)$) {\ldots};

\draw[normaledge] (1) -- (2);
\draw[normaledge] (1) -- (3);
\draw[normaledge] (2) -- (3);
\draw[normaledge] (3) -- (4);
\draw[normaledge] (3) -- (5);
\draw[normaledge] (4) -- (5);
\draw[normaledge] (9) -- (10);
\draw[normaledge] (9) -- (11);
\draw[normaledge] (10) -- (11);

\end{tikzpicture}

\end{minipage}\hfill
\begin{minipage}{0.6\linewidth}
\centering
  \begin{tikzpicture}[scale=0.23]
  \footnotesize

  \node[vertex] (1) at (2,-1) {};

  \node[vertex] (a1) at (-2,-2) {};
  \node[vertex] (a2) at (-2,-4) {};
  \node[vertex] (a3) at (-4,-2) {};
  \node[vertex] (a4) at (-4,-4) {};
  \node[] (a11) at (-6,-3) {};
  \node[vertexKlein] (a5) at (-6,-2) {};
  \node[vertexKlein] (a6) at (-6,-4) {};
  \node[vertexKlein] (a55) at (-8,-2) {};
  \node[vertexKlein] (a66) at (-8,-4) {};
  \node[vertex] (a7) at (-10,-2) {};
  \node[vertex] (a8) at (-10,-4) {};
  \node[] (a12) at (-8,-3) {};
  \node[vertex] (a9) at (-12,-2) {};
  \node[vertex] (a10) at (-12,-4) {};
\node[align = center] at ($(a11)!.5!(a12)$) {\ldots};

  \node[vertex] (b1) at (-2,2) {};
  \node[vertex] (b2) at (-2,0) {};
  \node[vertex] (b3) at (-4,2) {};
  \node[vertex] (b4) at (-4,0) {};
  \node[] (b11) at (-6,1) {};
  \node[vertexKlein] (b5) at (-6,2) {};
  \node[vertexKlein] (b6) at (-6,0) {};
  \node[vertexKlein] (b55) at (-8,2) {};
  \node[vertexKlein] (b66) at (-8,0) {};
  \node[vertex] (b7) at (-10,2) {};
  \node[vertex] (b8) at (-10,0) {};
  \node[] (b12) at (-8,1) {};
  \node[vertex] (b9) at (-12,2) {};
  \node[vertex] (b10) at (-12,0) {};
\node[align = center] at ($(b11)!.5!(b12)$) {\ldots};

  \node[vertex] (c1) at (6,-4) {};
  \node[vertex] (c2) at (6,-2) {};
  \node[vertex] (c3) at (8,-4) {};
  \node[vertex] (c4) at (8,-2) {};
  \node[] (c11) at (10,-3) {};
  \node[vertexKlein] (c5) at (10,-4) {};
  \node[vertexKlein] (c6) at (10,-2) {};
  \node[vertexKlein] (c55) at (12,-4) {};
  \node[vertexKlein] (c66) at (12,-2) {};
  \node[vertex] (c7) at (14,-4) {};
  \node[vertex] (c8) at (14,-2) {};
  \node[] (c12) at (12,-3) {};
  \node[vertex] (c9) at (16,-4) {};
  \node[vertex] (c10) at (16,-2) {};
\node[align = center] at ($(c11)!.5!(c12)$) {\ldots};

\draw[normaledge] (1) -- (a1);
\draw[normaledge] (1) -- (a2);
\draw[normaledge] (a1) -- (a2);
\draw[normaledge] (a1) -- (a3);
\draw[normaledge] (a2) -- (a4);
\draw[normaledge] (a3) -- (a4);
\draw[normaledge] (a3) -- (a5);
\draw[normaledge] (a4) -- (a6);
\draw[normaledge] (a55) -- (a7);
\draw[normaledge] (a66) -- (a8);
\draw[normaledge] (a7) -- (a8);
\draw[normaledge] (a7) -- (a9);
\draw[normaledge] (a8) -- (a10);
\draw[normaledge] (a9) -- (a10);

\draw[normaledge] (1) -- (b1);
\draw[normaledge] (1) -- (b2);
\draw[normaledge] (b1) -- (b2);
\draw[normaledge] (b1) -- (b3);
\draw[normaledge] (b2) -- (b4);
\draw[normaledge] (b3) -- (b4);
\draw[normaledge] (b3) -- (b5);
\draw[normaledge] (b4) -- (b6);
\draw[normaledge] (b55) -- (b7);
\draw[normaledge] (b66) -- (b8);
\draw[normaledge] (b7) -- (b8);
\draw[normaledge] (b7) -- (b9);
\draw[normaledge] (b8) -- (b10);
\draw[normaledge] (b9) -- (b10);

\draw[normaledge] (1) -- (c1);
\draw[normaledge] (1) -- (c2);
\draw[normaledge] (c1) -- (c2);
\draw[normaledge] (c1) -- (c3);
\draw[normaledge] (c2) -- (c4);
\draw[normaledge] (c3) -- (c4);
\draw[normaledge] (c3) -- (c5);
\draw[normaledge] (c4) -- (c6);
\draw[normaledge] (c55) -- (c7);
\draw[normaledge] (c66) -- (c8);
\draw[normaledge] (c7) -- (c8);
\draw[normaledge] (c7) -- (c9);
\draw[normaledge] (c8) -- (c10);
\draw[normaledge] (c9) -- (c10);

\begin{scope}[xshift=2cm,yshift=-1cm]

  \node[vertexKlein] (d1) at (110:3.4cm) {};
  \node[vertexKlein] (d2) at (90:3.3cm) {};

  \draw[normaledge] (1) -- (d1);
  \draw[normaledge] (1) -- (d2);

  \node[vertexKlein] (f1) at (10:3.3cm) {};
  \node[vertexKlein] (f2) at (30:3.3cm) {};

  \draw[normaledge] (1) -- (f1);
  \draw[normaledge] (1) -- (f2);

  \draw[dotted, very thick] (50:3cm) arc [start angle=50, end angle=75, radius=3cm];  
\end{scope}

\end{tikzpicture}

\end{minipage}
    \caption{In the left graph the minimum number of leaves in a spanning tree is 1, whereas any BFS-tree has at least $\nicefrac{n}{2}$ leaves.
    The right graph is a star of $k$ ladders with $2k$ vertices each. This graph has a spanning tree with $k^2$ leaves while every BFS-tree has at most $3k$ leaves.}
    \label{fig:path-triangles-star-ladders}
\end{figure}
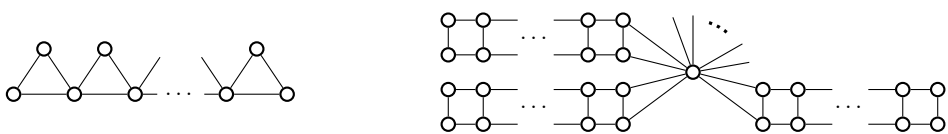

To start with, we focus on \emph{layered} searches~\cite{corneil2016tie}, i.e., graph searches that traverse the distance layers of the start vertex in increasing order. Well-known examples of such searches are BFS and LBFS. 

\begin{observation}\label{obs:layer}
    For graphs with at least two vertices, the number of leaves of the \cf-tree of some layered search ordering is at least the maximum number of vertices in a layer.
\end{observation}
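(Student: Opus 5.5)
The plan is to show that the vertices of the last nonempty distance layer are all leaves. Then I compare every other layer with the layer directly below it, using an injection from the internal vertices of one layer into the next layer.

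\textbf{Setup.} Let $\sigma$ be an ordering of a layered search starting at $r$, and let $T$ be its \cf-tree. Let $L_0=\{r\},L_1,\dots,L_d$ be the distance layers of $r$, where $d\ge 1$ because $G$ is connected and has at least two vertices.

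\textbf{Parents lie one layer up.} Let $v\in L_i$ with $i\ge 1$. The search is layered, so all vertices of $L_{i-1}$ are visited before all vertices of $L_i$. Hence the leftmost neighbour of $v$ in $\sigma$ lies in $L_{i-1}$. It follows that every tree edge joins consecutive layers and that every child of a vertex in $L_i$ lies in $L_{i+1}$.

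\textbf{The last layer.} Vertices of $L_d$ have no children, so all of them are leaves. Also $r$ is not a leaf, since $d\ge1$ and $r$ therefore has children; moreover $|L_0|=1\le |L_1|$. This settles the layers $L_0$ and $L_d$.

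\textbf{A layer $L_i$ with $1\le i<d$.} The internal vertices of $L_i$ have pairwise disjoint, nonempty sets of children in $L_{i+1}$. So the number of internal vertices of $L_i$ is at most $|L_{i+1}|$. I do not think this alone bounds $|L_i|$ by the number of leaves, so I would look at the whole tree instead.

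\textbf{Counting over the whole tree.} Fix a layer $L_j$. Each vertex of $L_j$ roots a subtree of $T$, and these subtrees are pairwise disjoint. Each such subtree contains at least one leaf: follow children downward until reaching a vertex with no children, which must happen by $L_d$ at the latest. For $j\ge1$ the vertices of $L_j$ are not the root, so the leaf found this way is an ordinary leaf even when it is the vertex of $L_j$ itself. This gives $|L_j|$ distinct leaves.

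\textbf{Conclusion.} For every $j\ge1$ the number of leaves is at least $|L_j|$, and for $j=0$ we have $|L_0|=1\le |L_1|$. Hence the number of leaves is at least the maximum layer size. I expect no real obstacle; the only point needing care is the convention that the root is not counted as a leaf, which is why layer $0$ is handled separately.
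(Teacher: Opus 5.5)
Your proof is correct. The paper states this as an observation without proof, and your argument is the reasoning it implicitly relies on: $\cf$-tree parents lie one layer up, so the layers are the depth levels of the tree; the disjoint subtrees rooted at $L_j$ then give $|L_j|$ distinct leaves for every $j\ge 1$, and $|L_0|=1\le|L_1|$ covers the root layer. Your abandoned injection step would also have worked by downward induction on the layer index, giving $|L_i|\le$ the number of leaves in layers $\ge i$, but it is not needed.
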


It is a well-known fact that for layered searches the edges of a graph only connect vertices in the same layer or in consecutive layers. Thus, the observation above implies the following.

\begin{corollary}
    Let $G$ be a graph with at least two vertices. If the \cf-tree of a layered graph search ordering $\sigma$ of $G$ has at most $k$ leaves, then the bandwidth of $\sigma$ is at most~$2k - 1$.
\end{corollary}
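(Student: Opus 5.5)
The plan is to combine \cref{obs:layer} with the standard fact that, in a layered search ordering, every edge joins vertices in the same distance layer or in two consecutive layers. Write $L_0=\{r\},L_1,\dots,L_d$ for the distance layers of the start vertex $r=v_1$. Since $\sigma$ is layered, $\sigma$ lists all of $L_0$, then all of $L_1$, and so on, so every layer occupies a contiguous block of positions in $\sigma$.

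By \cref{obs:layer}, if the \cf-tree of $\sigma$ has at most $k$ leaves, then $|L_i|\le k$ for every $i$. Now take an edge $v_av_b$ with $a<b$. If both endpoints lie in the same layer $L_i$, they lie in one block of at most $k$ consecutive positions, so $b-a\le k-1\le 2k-1$. Otherwise $v_a\in L_i$ and $v_b\in L_{i+1}$ for some $i$, because edges never skip a layer. The positions from $a$ to $b$ then lie inside the union of the two consecutive blocks of $L_i$ and $L_{i+1}$. This union consists of at most $2k$ consecutive positions, so $b-a\le 2k-1$. Taking the maximum over all edges gives $\operatorname{bw}(\sigma)\le 2k-1$.

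The step needing the most care is the appeal to \cref{obs:layer}, together with the convention that the root is not a leaf. The size bound is trivial for $L_0$, which has one vertex, and $k\ge 1$ holds because a tree on at least two vertices has a leaf. For $i\ge1$ the bound comes from the observation itself. If one wanted to reprove it, the argument would be: the last layer $L_d$ consists entirely of leaves. More generally, for every layer $L_i$ one can choose $|L_i|$ distinct leaves by following, from each vertex of $L_i$, a path down the tree to a leaf. Distinct vertices of one layer have disjoint subtrees, since neither is an ancestor of the other, so these leaves are distinct. This injection gives $|L_i|\le k$, and with it the corollary follows as above.
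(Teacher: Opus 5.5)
Your proof is correct and follows the paper's argument: \cref{obs:layer} bounds every layer by $k$, the layers occupy contiguous blocks of $\sigma$, and edges only join vertices in the same or in consecutive layers, so any edge spans at most $2k$ consecutive positions. Your optional re-proof of the observation also works; what makes two vertices of the same layer non-ancestral is that the \cf-parent of a vertex in $L_i$ lies in $L_{i-1}$, so tree depth equals layer index, and this step deserves one explicit sentence.
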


We can improve this bound when we consider BFS orderings.

\begin{theorem}    
    If the \cf-tree $T$ of a BFS ordering $\sigma$ of graph $G$ has at most $k$ leaves, then the bandwidth of $\sigma$ is at most~$k$.
\end{theorem}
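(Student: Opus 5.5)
The plan is to show directly that every edge $v_iv_j$ of $G$ with $i<j$ satisfies $j-i\le k$. I would do this by exhibiting $j-i$ vertices whose subtrees in $T$ are pairwise disjoint. Each such subtree contains at least one leaf, and that leaf is never the root, so $j-i\le k$ follows.

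The first step is a standard structural fact about BFS \cf-trees, which I would state and prove first. For a vertex $v\neq v_1$, let $p(v)$ denote the index in $\sigma$ of its parent in $T$, that is, of its leftmost neighbor in $\sigma$. The claim is that $p(v_2)\le p(v_3)\le\dots\le p(v_n)$. This follows from the queue implementation. A vertex $w$ is enqueued exactly when the first of its neighbors is dequeued, and that neighbor is its leftmost neighbor in $\sigma$, because vertices are dequeued in the order of $\sigma$. Hence vertices appear in $\sigma$ in the order of their enqueueing, which is nondecreasing in $p$. One should check that no vertex is enqueued before its leftmost visited neighbor is processed; this holds since enqueueing happens only when a neighbor is dequeued.

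With monotonicity in hand, fix an edge $v_iv_j$ with $i<j$ and let $W=\{v_{i+1},\dots,v_j\}$. Since $v_i$ is a neighbor of $v_j$, we get $p(v_j)\le i$, and by monotonicity $p(v_l)\le i$ for every $v_l\in W$. So every vertex of $W$ has its parent outside $W$, among $v_1,\dots,v_i$. This implies that no vertex of $W$ is a proper descendant of another vertex of $W$. Indeed, every descendant of a vertex $v_l\in W$ lies to the right of $v_l$ in $\sigma$, so a path in $T$ from $v_l$ down to some $v_{l'}\in W$ would have to stay inside $W$, contradicting that the parent of $v_{l'}$ is outside $W$. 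Hence the subtrees of $T$ rooted at the vertices of $W$ are pairwise disjoint.

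Each of these $j-i$ subtrees contains a leaf of $T$. This leaf is not the root, since the root is $v_1\notin W$ and $W$ only contains vertices of index greater than $i\ge 1$. Therefore $T$ has at least $j-i$ leaves, giving $j-i\le k$ and hence $\operatorname{bw}(\sigma)\le k$. I expect the main obstacle to be stating and justifying the parent-monotonicity lemma cleanly in terms of the queue, since everything else is elementary. The only other detail to check is that descendants in an \cf-tree always occur later in $\sigma$, which is immediate because parents precede their children.
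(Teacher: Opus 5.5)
Your proof is correct and follows the paper's own argument. The paper likewise takes the vertices strictly after $v_i$ up to $v_j$, notes that their parents all lie at or before $v_i$, concludes that none is a descendant of another, and so gets $j-i$ distinct leaves. You additionally prove the parent-monotonicity of BFS \cf{}-trees and spell out the disjoint-subtree step, both of which the paper uses without comment.
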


\begin{proof}
    Let $v$ be a vertex in $G$ and let $w$ be the rightmost neighbor of $v$ in $\sigma$. Let $\sigma' = (v_1, \dots, v_\ell = w)$ be the subordering of $\sigma$ between $v$ and $w$, i.e., $v_1$ is the direct successor of $v$ in $\sigma$. The parents of all vertices in $\sigma'$ lie to the left of $v_1$ in~$\sigma$. Thus, none of the vertices in $\sigma'$ can have a descendant in $T$ that is also in $\sigma'$. Therefore, there are at least $\ell$ leaves in $T$. Thus, $\ell \leq k$.
\end{proof}

As we will see later (\cref{thm:max-leaf-F-NPc,thm:max-int-f-w-hard}), both the \MinLeafF{} and the \MaxLeafF{} problem is \NP-hard for BFS and for LBFS. However, we can show that these problems are in \FPT.

\begin{theorem}
    \MinLeafF{} and \MaxLeafF{} of BFS and LBFS can be solved in \FPT{} time.
\end{theorem}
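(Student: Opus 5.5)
We handle the two problems separately. The approach for \MinLeafF{} is a dynamic program over the search ordering, using the bandwidth bound just proved. The approach for \MaxLeafF{} is a trivial-answer argument combined with a brute-force fallback.

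\emph{Min-Leaf.} We guess the start vertex $s$, which costs a factor $n$. We compute the distance layers $L_0,\dots,L_d$ of $s$. By \cref{obs:layer}, if some layer has more than $k$ vertices we reject for this $s$, so from now on $|L_i|\le k$ for all $i$. We then build the ordering $\sigma$ from left to right with a dynamic program, in the style of bandwidth algorithms for fixed width.

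A state stores the following:
\begin{itemize}
\item the current pair of consecutive layers $L_{i}\cup L_{i+1}$, together with which vertices of it are already visited and in which order;
\item the position of each vertex in the BFS queue, which is determined by the order of $L_i$ and the visited prefix of $L_{i+1}$;
\item for each visited vertex of $L_i$ and $L_{i+1}$, a flag recording whether it already has a child;
\item the number of leaves committed so far, capped at $k+1$.
\end{itemize}
Since the layers have size at most $k$, there are at most $(2k)!\cdot 2^{O(k)}\cdot (k+2)$ states per layer pair.

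A transition appends the next vertex $v$. For BFS, $v$ must be an unvisited neighbor of the vertex $u$ at the queue head; if $u$ has no unvisited neighbors, we pop $u$. For LBFS, $v$ must have a lexicographically maximal label, and labels of vertices in $L_{i+1}$ are determined by the ordered neighborhoods in $L_i$, so they are computable from the state. The parent of $v$ is its leftmost neighbor in $\sigma$, which is the first vertex of $L_i$ in the stored order adjacent to $v$; we set that parent's child flag. When we move from the pair $(L_i,L_{i+1})$ to $(L_{i+1},L_{i+2})$, every vertex of $L_i$ without a child becomes a committed leaf. At the end, the childless vertices of the last layer are added as leaves. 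Accept iff some final state has at most $k$ leaves. The running time is $f(k)\cdot n^{O(1)}$.

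\emph{Max-Leaf.} Every \cf-tree of a layered search has at least $\max_i |L_i|$ leaves. More strongly, every internal vertex $u\ne s$ of $L_i$ has a child in $L_{i+1}$, and each vertex has exactly one parent. Hence the last layer consists entirely of leaves, and the number of leaves is at least $n-\sum_i(\text{internal vertices})$. We split into two cases.
\begin{itemize}
\item If some BFS has a layer of size at least $k$, answer yes, since by \cref{obs:layer} every tree from that start vertex has at least $k$ leaves.
\item Otherwise all layers for every start vertex have size less than $k$, which is exactly the bounded-width situation above. We run the same dynamic program, now maximizing the leaf count (capped at $k$), and accept iff the value reaches $k$.
\end{itemize}

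The main obstacle is making the LBFS transitions local. The label of a vertex depends on the entire prefix of $\sigma$, so we must argue that tie-breaking among unvisited vertices of $L_{i+1}$ (and remaining vertices of $L_i$) depends only on neighborhoods inside $L_{i-1}\cup L_i$. This holds because a vertex of $L_{i+1}$ has neighbors only in $L_i\cup L_{i+1}\cup L_{i+2}$, and the label entries from earlier layers are identical for all candidates in the same layer. For the candidates in $L_i$ themselves, the relevant labels come from $L_{i-1}$. So the state must also keep the order of $L_{i-1}$, a three-layer window, which is still bounded in $k$. I would verify this locality carefully first.
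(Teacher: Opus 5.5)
Your proposal is correct and follows essentially the paper's route: for each start vertex, use \cref{obs:layer} to reject (Min-Leaf) or accept immediately (Max-Leaf) when some layer is too large, and otherwise run a dynamic program over orderings of the bounded-size layers. The differences are cosmetic. You advance one vertex at a time with child flags, whereas the paper moves layer by layer and reads off the leaves of layer $i$ directly from its ordering $\tau$. Your three-layer window for LBFS is also unnecessary: all of $L_i$ is placed before any vertex of $L_{i+1}$, and those vertices have no neighbors in $L_{i-1}$, so checking pairs of consecutive layer orderings suffices.
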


\begin{proof}
    We first consider \MinLeafF. For every possible start vertex $r$, we do the following. We first compute the layers of the BFS. If there is a layer with more than $k$ vertices, then we can reject that start vertex, due to \cref{obs:layer}. Otherwise, we use dynamic programming to solve the problem. Let $V_{\leq i}$ be the union of all layers with index $\leq i$. For every ordering $\sigma$ of the $i$-th layer, we have the value $M[i,\sigma]$ that contains $\infty$ if there is no (L)BFS ordering starting with $r$ that contains $\sigma$ as subordering. Otherwise, it contains the minimal number of leaves in $V_{\leq i}$ of the \cf-tree of an (L)BFS ordering of $V_{\leq i+1}$ starting in $r$ that has $\sigma$ as subordering.
    
    Layer~0 has exactly one ordering $(r)$ and we have $M[0,(r)] = 0$. So assume that for some value $i$, we have computed the values $M[i-1,\sigma]$ for all orderings $\sigma$ of layer~$(i-1)$. Now consider an entry $M[i,\tau]$. If layer~$i$ is the last layer, then all vertices of the layer are leaves in the $\cf$-tree of every BFS ordering starting in $r$. If it is not the last layer, then the ordering $\tau$ directly implies which of the vertices of layer~$i$ has a child in layer~$i+1$. If some vertex does not have a child, then it has to be a leaf in the \cf-tree of every BFS ordering starting in $r$ that contains $\tau$ as subordering. Therefore, we compute the number $\ell(i, \tau)$ of leaves in layer~$i$ for the ordering $\tau$. To compute $M[i,\tau]$, we now check for every ordering $\sigma$ of layer~$i-1$ whether $\tau$ can be the ordering of layer~$i$ when $\sigma$ is the ordering of layer~$i-1$. Note that this can be done in polynomial time. If this is the case, then the value $M[i-1,\sigma] + \ell(i,\tau)$ gives the number of leaves in $V_{\le i}$ if $\sigma$ and $\tau$ are the orderings of layer $i-1$ and layer~$i$, respectively. We minimize this value over all suitable orderings~$\sigma$. The minimum value is chosen for $M[i, \tau]$.

    If we have computed all entries of $M$, we check for the last layer $i$ whether there is an entry $M[i, \sigma] \leq k$. If so, then we return ``Yes''. If no start vertex works, then we return ``No''. Maximization works analogously with the only difference being that $M[i,\sigma]$ contains the maximum value of leaves or $0$ if there is no possible (L)BFS ordering. Note that we can stop the maximization procedure directly with a positive answer as soon as we find a layer with $\geq k$ elements or an entry $M[i, \sigma] \geq k$. The minimization procedure for a fixed start vertex can be stopped if every entry $M[i,\sigma]$ of some layer $i$ is larger than~$k$.

    Finally, let us consider the running time of our algorithm. We have to check at most $n$ start vertices. For every of the at most $n$ layers, there are at most $k!$ entries of $M$. To compute such an entry, we have to check at most $k!$ other entries and have to do a computation polynomial in $k$. So the final running time is in $\Oc(k!^2 \cdot k^{\Oc(1)} \cdot n^2)$. 
\end{proof}

Next we show that \MaxLeafF{} of GS is equivalent to the problem \MaxLeaf.

\begin{theorem}\label{thm:tree-dom-equiv}
    Let $G$ be a connected graph and let $S$ be a set of vertices of $G$. The following statements are equivalent. %
    \begin{enumerate}
        \item There is a GS \cf-tree where all elements of $S$ are leaves.
        \item There is a spanning tree of $G$ where all elements of $S$ are leaves.
        \item The set $V(G) - S$ forms a connected dominating set of $G$.
    \end{enumerate}
\end{theorem}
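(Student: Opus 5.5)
The plan is to prove the cycle of implications (1)~$\Rightarrow$~(2)~$\Rightarrow$~(3)~$\Rightarrow$~(1). First, a remark on the degenerate case: since the root is never a leaf, (1) and (2) both force $V(G)-S\neq\emptyset$, and (3) forces the same because the graph is non-empty and the empty set dominates nothing. So we may assume $V(G)-S$ is non-empty throughout, and in (2) we read ``spanning tree'' as rooted at some vertex, as everywhere in the paper.

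\emph{(1)~$\Rightarrow$~(2).} This is immediate, since every \cf-tree of a GS ordering is by definition a spanning tree of $G$.

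\emph{(2)~$\Rightarrow$~(3).} Let $(T,r)$ be a spanning tree in which all elements of $S$ are leaves, and let $I$ be the set of internal vertices of $T$ together with $r$. Then $I\subseteq V(G)-S$.
\begin{itemize}
\item $I$ induces a connected subtree of $T$: removing leaves from a tree keeps it connected, and $r$ is never removed.
\item $I$ dominates $G$: every leaf has its parent in $I$.
\item $V(G)-S$ is itself connected: each vertex of $(V(G)-S)\setminus I$ is a leaf of $T$ whose parent lies in $I$, and the edges of $T$ are edges of $G$. Adding such vertices to the connected set $I$ keeps it connected.
\end{itemize}
Any superset of a dominating set is dominating, so $V(G)-S$ is a connected dominating set.

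\emph{(3)~$\Rightarrow$~(1).} This is the constructive direction and the only step needing care. Set $D=V(G)-S$ and pick any $r\in D$. Since $G[D]$ is connected, we can run a generic search on $G[D]$ from $r$, which visits all of $D$. Then we append the vertices of $S$ in arbitrary order. The result $\sigma$ is a GS ordering of $G$: every vertex of $D$ other than $r$ has an earlier neighbor by construction, and every $s\in S$ has a neighbor in $D$ by domination, and all of $D$ precedes $s$.

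It remains to check that no vertex of $S$ has a child in the \cf-tree of $\sigma$.
\begin{itemize}
\item For $d\in D\setminus\{r\}$: the leftmost neighbor of $d$ in $\sigma$ lies before $d$. Everything before $d$ is in $D$, so the parent of $d$ is in $D$.
\item For $s\in S$: $s$ has a neighbor in $D$, and all of $D$ precedes every vertex of $S$. Hence the leftmost neighbor of $s$ is in $D$, so the parent of $s$ is in $D$.
\end{itemize}
Thus every parent lies in $D$, so all vertices of $S$ are leaves, which proves (1). The main point to get right is exactly this observation that placing $D$ entirely before $S$ forces all first-in parents into $D$; everything else is routine.
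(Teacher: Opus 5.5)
Your proof is correct and follows the same cycle (1)$\Rightarrow$(2)$\Rightarrow$(3)$\Rightarrow$(1) as the paper, including the key construction: a GS ordering of $G-S$ extended to all of $G$, so that every first-in parent lies in $V(G)-S$. Your remark that the root is never a leaf (hence never in $S$) and your explicit connectivity argument in (2)$\Rightarrow$(3) just spell out what the paper compresses into ``$T-S$ forms a subtree of $T$''.
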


\begin{proof}
    As every GS \cf-tree is a spanning tree, the first direction holds trivially. 
    
    Let $T$ be a spanning tree of $G$ where the vertices of $S$ are leaves. This implies that $T - S$ forms a subtree of $T$ and, thus, $G - S$ is connected. As the parent of every vertex of $S$ in $T$ is in $G - S$, it holds that $G - S$ forms a connected dominating set of $G$.
    
    Let $V(G) - S$ be a connected dominating set. Let $\sigma$ be a GS ordering of $G - S$ and let $\sigma'$ be a GS ordering of $G$ starting with $\sigma$. Every vertex in $S$ has a neighbor in $G - S$ and, thus, in the $\cf$-tree of $\sigma'$ all vertices of $S$ are leaves.
\end{proof}

Note that the equivalence of the \MaxLeaf{} problem and the connected dominating set problem has been already observed earlier (see, e.g., \cite{fujie2003exact}). \cref{thm:tree-dom-equiv} implies the following.

\begin{corollary}\label{corol:f-max-gs}
    Let $G$ be a connected graph with $n$ vertices. The following statements are equivalent.%
    \begin{enumerate}
        \item There is a GS \cf-tree of $G$ with $\geq k$ leaves.
        \item There is a spanning tree of $G$ with $\geq k$ leaves.
        \item There is a connected dominating set of $G$ with $\leq n - k$ vertices.
    \end{enumerate}
\end{corollary}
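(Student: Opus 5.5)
The plan is to derive the corollary directly from \cref{thm:tree-dom-equiv}, quantifying over sets $S$ of size $k$. The only additional ingredient is monotonicity: having at least $k$ leaves means some $k$-element set of vertices consists entirely of leaves, and conversely. We also need the fact that a connected dominating set of size at most $n-k$ can be enlarged to one of size exactly $n-k$.

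First, (1)$\Rightarrow$(2) is immediate, since every GS \cf-tree is a spanning tree. For (2)$\Rightarrow$(3), take a spanning tree with at least $k$ leaves and let $S$ be any $k$ of them. By \cref{thm:tree-dom-equiv} (2)$\Rightarrow$(3), the set $V(G)-S$ is a connected dominating set, and it has exactly $n-k$ vertices.

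The step needing a small argument is (3)$\Rightarrow$(1). Let $D$ be a connected dominating set with $|D|\le n-k$. Extend $D$ to a set $D'$ with $|D'|=n-k$ that is still connected and dominating. To do this, repeatedly add a vertex outside the current set; such a vertex is adjacent to the current set because the set is dominating. Adding it keeps the set connected, and supersets of dominating sets remain dominating. Now set $S=V(G)-D'$, so $|S|=k$. By \cref{thm:tree-dom-equiv} (3)$\Rightarrow$(1), there is a GS \cf-tree in which all $k$ vertices of $S$ are leaves, and hence one with at least $k$ leaves.

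One should check the edge case $k\ge n$, together with the convention that the root is never a leaf. A tree has at most $n-1$ leaves, and a connected dominating set is non-empty, so for $k\ge n$ all three statements are false. For $k\le 0$ all three are trivially true. Hence the equivalence holds in these cases as well, and the padding argument above is only needed for $1\le k\le n-1$.
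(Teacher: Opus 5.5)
Your proof is correct and follows the paper's route: the paper derives the corollary directly from \cref{thm:tree-dom-equiv} by choosing the set $S$ of prescribed leaves, as you do. The padding step in (3)$\Rightarrow$(1) is harmless but unnecessary, since applying \cref{thm:tree-dom-equiv} directly to $S = V(G) \setminus D$ already gives a GS \cf-tree with at least $|S| = n - |D| \geq k$ leaves.
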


As already mentioned in the introduction, there is a wide range of \FPT{} results for \MaxLeaf{} when parameterized by the number of leaves~\cite{bodlaender1993linear,
bonsma2008spanning,fellows1992well,fellows2000coordinatized,kneis2011new,raible2010amortized}. The currently best known algorithm runs in $\Oc^*(3.188^k)$ time~\cite{zehavi2018k-leaf}. By \cref{corol:f-max-gs}, this implies the following.

\begin{corollary}
    \MaxLeafF{} of GS is in \FPT.
\end{corollary}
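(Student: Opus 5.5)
The statement follows directly from \cref{corol:f-max-gs}, so the plan is a reduction argument. Given an instance $(G,k)$ of \MaxLeafF{} of GS, I output $(G,k)$ unchanged as an instance of \MaxLeaf{} parameterized by the number of leaves. \Cref{corol:f-max-gs} says that $G$ has a GS \cf-tree with at least $k$ leaves if and only if $G$ has a spanning tree with at least $k$ leaves. Hence the two instances have the same answer, and the parameter is preserved exactly.

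Next I invoke a known \FPT{} algorithm for \MaxLeaf{}, for instance the $\Oc^*(3.188^k)$ algorithm of~\cite{zehavi2018k-leaf}. Running it on $(G,k)$ decides our problem in $\Oc^*(3.188^k)$ time, which is \FPT. One subtlety concerns leaf conventions. This paper counts leaves of a rooted tree and never counts the root, whereas the classical problem counts degree-one vertices. The equivalence in \cref{corol:f-max-gs} was derived from \cref{thm:tree-dom-equiv} through connected dominating sets, and I would check that this route is insensitive to the convention. The argument is that any spanning tree with at least $k$ degree-one vertices can be rooted at an internal vertex (when $n\ge 3$), so it keeps at least $k$ non-root leaves; tiny graphs can be handled by brute force.

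If a constructive answer is wanted, i.e.\ an actual search ordering, I would follow the proof of \cref{thm:tree-dom-equiv}. From the spanning tree returned by the algorithm, let $S$ be its set of non-root leaves. Run a GS on $G-S$, which is connected, and then extend it to all of $G$. In the resulting \cf-tree every vertex of $S$ is a leaf.

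The only real obstacle is the convention check on counting leaves; everything else is immediate from the corollary and the literature.
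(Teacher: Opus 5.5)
Your proposal is correct and matches the paper's argument: combine the equivalence in \cref{corol:f-max-gs} with a known \FPT{} algorithm for \MaxLeaf{}, such as the $\Oc^*(3.188^k)$ algorithm of~\cite{zehavi2018k-leaf}. Your check on the leaf-counting convention is sound and makes explicit a point the paper leaves implicit. For $n\ge 3$, rooting at a vertex of degree at least two turns every degree-one vertex into a non-root leaf, and conversely every non-root leaf has degree one.
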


To complement this result, we will prove next that \MinLeafF{} of GS is also in \FPT{}. We start with the observation that graphs having GS \cf-trees with a small number of leaves have also small pathwidth. The proof follows the same lines as the proofs of Theorem~2.2.1 in~\cite{aazami2008hardness} and Lemma~1 in~\cite{bhyravarapu2025parameterized}.

\begin{lemma}\label{lemma:pw}
    If a graph $G$ has an GS \cf-tree with at most $k$ leaves, then the pathwidth of $G$ is at most $k$. This bound is tight.
\end{lemma}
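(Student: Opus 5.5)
The plan is to turn the search ordering itself into a path decomposition. Fix a GS ordering $\sigma=(v_1,\dots,v_n)$ whose \cf-tree $T$ has at most $k$ leaves. For every $t$, let $A_t$ be the set of vertices $v_i$ with $i\le t$ that have a neighbor $v_j$ with $j>t$. As bags I take $B_t = A_{t-1}\cup\{v_t\}$ for $t=1,\dots,n$.

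First I check the decomposition axioms, which are standard for this "vertex separation" construction. For an edge $v_iv_j$ with $i<j$, the vertex $v_i$ lies in $A_{j-1}$, so both endpoints lie in $B_j$. A vertex $v_i$ belongs to $B_i$ and to $B_{t}$ for $i< t\le j^*$, where $j^*$ is the position of its rightmost neighbor. These are consecutive bags, so the bags containing $v_i$ form an interval. Hence the width is at most $\max_t |A_{t-1}|$, and it suffices to show $|A_t|\le k$ for every $t$.

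For the bound on $|A_t|$ I use two structural facts. First, every root-to-leaf path of $T$ is increasing in $\sigma$, since a parent is always visited before its child. Second, the \cf-property says that for any edge $v_iv_j$ with $i<j$, the parent of $v_j$ lies at position at most $i$.

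The goal is an injection from $A_t$ into the leaves of $T$. Take $u\in A_t$ that has a child $c_u$ at a position after $t$. Then the whole subtree of $c_u$ lies after $t$, and I map $u$ to any leaf of that subtree. For two such vertices $u,w$, the subtrees of $c_u$ and $c_w$ are disjoint. If $u$ is an ancestor of $w$, then $w$ sits at a position $\le t$ and so is not in the subtree of $c_u$, while $c_u$ is not a descendant of $w$. If neither is an ancestor of the other, disjointness is clear.

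The main obstacle is the vertices of $A_t$ that have a later neighbor but no child after $t$, since the parent of that later neighbor may be an earlier vertex. My first attempt is to charge such a $u$ to a leaf in the subtree of $u$ restricted to positions $\le t$ that is not already used. Alternatively, I could refine the bags to the set of vertices at positions $\le t$ having a child after $t$, together with the current vertex and the "frontier" vertex of each root-to-leaf path, namely its last vertex at position $\le t$. Each of the at most $k$ root-to-leaf paths then contributes at most one vertex, which gives bags of size at most $k+1$. With this refinement I must still re-verify edge coverage using the \cf-property. If the counting is set up so that at most $k$ active vertices coexist, both approaches give pathwidth at most $k$. Following the cited proofs (Theorem~2.2.1 of~\cite{aazami2008hardness}), I expect the per-path frontier argument to be the one that goes through.

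For tightness, take $K_{k+1}$. Every GS \cf-tree of it is a star rooted at the first vertex, so it has exactly $k$ leaves. On the other hand, the pathwidth of $K_{k+1}$ is exactly $k$.
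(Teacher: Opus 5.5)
There is a genuine gap: the inequality $|A_t|\le k$, on which your whole width bound rests, is false. Take the path $p_1p_2\cdots p_n$ plus a universal vertex $u$ (the paper's own example right after this lemma) and the GS ordering $\sigma=(p_1,\dots,p_n,u)$. Its \cf-tree is the path $p_1\to p_2\to\cdots\to p_n$ together with the edge $p_1u$, so it has only the two leaves $p_n$ and $u$. Yet $A_{n-1}=\{p_1,\dots,p_{n-1}\}$, because every $p_i$ is adjacent to $u$. So no charging scheme for the vertices of $A_t$ without a child after $t$ can exist, and your bags $A_{t-1}\cup\{v_t\}$ grow linearly in $n$.

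Your alternative refinement also fails, on edge coverage. For $t\le n$ the vertex $u$ is in no bag. For $t=n+1$ the bag is $\{p_n,u\}$. Hence no bag contains both $p_1$ and $u$. The underlying problem is that you keep a vertex alive from its own position until its \emph{rightmost} neighbour appears. The \cf-property only controls the \emph{leftmost} neighbour of a vertex, namely its parent, so it gives you no handle on the left side of the cut.

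The missing idea is to look at the right side of the cut. Let $R_t$ be the set of vertices at positions $>t$ that have a neighbour at position $\le t$. By the \cf-property, the parent of each $y\in R_t$ lies at position $\le t$. Consider the subtrees of $T$ rooted at the vertices of $R_t$:
\begin{itemize}
\item they lie entirely after $t$;
\item they are pairwise disjoint, since if $y'$ were a proper descendant of $y$, the parent of $y'$ would lie after $t$;
\item each contains a leaf, since its root is not the root of $T$.
\end{itemize}
Hence $|R_t|\le k$. The bags $R_t\cup\{v_t\}$ then form a path decomposition of width at most $k$. An edge $v_iv_j$ with $i<j$ is covered by the $i$-th bag, and $v_j$ occurs exactly in the bags from the position of its parent up to $j$. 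Equivalently, your vertex-separation argument does work, but for the reverse of $\sigma$.

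This is exactly the paper's construction. After moving the leaves to the end, it scans $\sigma$ from right to left, starting with the bag of all leaves, introducing each internal vertex $v$ and then forgetting the children of $v$. Your tightness argument with $K_{k+1}$ is correct.
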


\begin{proof}
    First observe that the tightness follows from complete graphs. The bound proof follows the same lines as the proofs of Theorem~2.2.1 in \cite{aazami2008hardness} and Lemma~1 in~\cite{bhyravarapu2025parameterized}. Let $\sigma$ be a GS ordering whose \cf-tree $T$ has $k$ leaves. W.l.o.g.~we may assume that $\sigma$ ends with the $k$ leaves. Let $S$ be the set of leaves of $T$. Let $t = |V(G)| - k$. In the following, we construct a path decomposition $(X_0, \dots, X_{2t})$.

    We start with bag $X_0$ which contains the vertices of $S$. Obviously, $|X_0| \leq k$. Now assume we have already compute bag $X_{2i}$ and this bag has size at most $k$. Let $v$ be the vertex at position $t - i$ in $\sigma$ and let $L$ be the set of children of $v$ in $T$. We define $X_{2i+1} = X_{2i} \cup \{v\}$ and $X_{2i+2} = X_{2i+1} \setminus L$. Since $|X_{2i}| \leq k$ and $|L| \geq 1$, it holds that $|X_{2i+1}| \leq k+1$ and $|X_{2i+2}| \leq k$. Repeating this process constructs an ordering of bags $(X_0, \dots, X_{2t})$.

    We claim that this ordering forms a path decomposition of $G$ of width at most $k$. The claim about the width follows directly from the fact that every bag has size at most $k+1$. It is obvious that every vertex is an element of some bag. Furthermore, no vertex is introduced twice, so the bags containing a certain vertex form a contiguous subsequence. It remains to show that every edge $xy$ appears in some bag. Let $x \prec_\sigma y$. Since $\sigma$ ends with the leaves, vertex $y$ is not introduced after $x$ to its first bag. Vertex $y$ is forgotten not before its parent in $T$ is introduced. However, the parent of $y$ is either $x$ or some vertex to the left of $x$. Hence, $x$ is introduced before $y$ is forgotten and, thus, $x$ and $y$ have a common bag.
\end{proof}

Note that -- in contrast to layered searches -- we cannot bound the bandwidth of such graphs.
Consider the path with $n$~vertices and an additional universal vertex. This graph has a GS \cf-tree with two leaves, but its bandwidth is~$\geq \lceil \frac{n}{2} \rceil$.

Due to \cref{lemma:pw}, it is sufficient to show that \MinLeafF{} of GS can be solved in \FPT{} time when parameterized by pathwidth to also show that \MinLeafF{} of GS can be solved in \FPT{} time when parameterized by the number of leaves. Even more general, we will solve \MinLeafF{} of GS parameterized by the treewidth. To this end, we relate the \cf-trees of GS to two concepts that are well studied in the literature.

The first concept are \emph{dominating sequences} which are sequences $(v_1, \dots, v_k)$ of vertices where every vertex dominates some vertex that was not dominated by any vertex to the left of it. More formal, for all $i \in [k]$ it holds that $N\langle v_i\rangle_1 \setminus \bigcup_{j = 1}^{i-1} N\langle v_j\rangle_2 \neq \emptyset$, where $N\langle \cdot \rangle_1$ and $N\langle \cdot \rangle_2$ are replaced by $N(\cdot)$ or $N[\cdot]$, depending on the type of dominating sequence (see~\cite{bresar2017grundy,bresar2014dominating,bresar2016total}). Here, we are interested in so-called \emph{$Z$-sequences}, where $N\langle \cdot \rangle_1 = N(\cdot)$ and $N\langle \cdot \rangle_2 = N[\cdot]$. These sequences were introduced in \cite{bresar2017grundy}. We adapt them in such a way that the sequence forms a prefix of a GS ordering.

\begin{definition}
Let $G$ be a graph. A sequence $(v_1, \dots, v_k)$ of vertices is a \emph{generic $Z$-sequence} of $G$ if it is the prefix of a GS ordering of $G$ and if for all $i \in [k]$ it holds that $N(v_i) \setminus \bigcup_{j=1}^{i-1} N[v_j]$ is not empty.
\end{definition}

A concept strongly related to dominating sequences are \emph{zero forcing sets}. Such a set is a subset of vertices which are initially colored blue while all other vertices are colored white. Further, there is a set of certain color change rules that allow to change the color of white vertices to blue. There are many different color change rule schemes in the literature~\cite{barioli2013parameters,hogben2022inverse,lin2019zero}. One of these rules, called \emph{$Z$-rule} in~\cite{scheffler2025parameterized}, allows a blue vertex $v$ to color a white neighbor $w$ blue if and only if $w$ is the only white neighbor of $v$. We adapt this rule as follows.

\begin{description}
    \item[Z$^*$-rule] If $v$ is a blue vertex and $v$ has exactly one white neighbor $w$ and $w$ is either the unique white vertex in $G$ or $w$ has at least one white neighbor $x$, then change the color of $w$ to blue. We write $v \z w$.
\end{description}

Given a graph $G$, we define a set $S \subseteq V(G)$ to be an \emph{$Z^*$-forcing set} of $G$ if the following procedure is able to color all vertices of $G$ blue:
\begin{enumerate}
    \item Color the vertices of $S$ blue and all vertices of $V(G) \setminus S$ white.
    \item Iteratively apply the $Z^*$-rule to the vertices of $G$.
\end{enumerate}

The concepts generic $Z$-sequences, $Z^*$-forcing sets and minimum leaf \cf-trees of GS are strongly related. 

\begin{theorem}\label{thm:forcing-domination-tree}
    The following statements are equivalent for an $n$-vertex graph~$G$ with $n \geq 2$:
    \begin{enumerate}
        \item There is a GS \cf-tree of $G$ with $\leq k$ leaves.\label{item:max-gs1}
        \item There is a $Z^*$-forcing set of size $\leq k$.\label{item:max-gs2}
        \item There is a generic $Z$-sequence of length $\geq n - k$.\label{item:max-gs3}
    \end{enumerate}
\end{theorem}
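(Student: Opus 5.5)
The plan is to prove the cycle \ref{item:max-gs1}$\Rightarrow$\ref{item:max-gs3}$\Rightarrow$\ref{item:max-gs1}, and then to tie \ref{item:max-gs2} in via \ref{item:max-gs3}$\Leftrightarrow$\ref{item:max-gs2}. Everything rests on one reformulation of internal vertices. Let $\sigma=(v_1,\dots,v_n)$ be a GS ordering. Then $v_i$ has a child in the \cf-tree if and only if some vertex $w$ has $v_i$ as its leftmost neighbor in $\sigma$. This happens if and only if
\[
N(v_i)\setminus\bigcup_{j<i}N[v_j]\neq\emptyset .
\]
Such a $w$ is automatically to the right of $v_i$, because it is neither equal nor adjacent to any earlier vertex. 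In particular the root is internal since $n\ge 2$, and the parent of an internal vertex is internal. So the internal vertices span a subtree containing the root.

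For \ref{item:max-gs1}$\Rightarrow$\ref{item:max-gs3}, take $\sigma$ with at most $k$ leaves and list its internal vertices in $\sigma$-order. Every prefix of this list is connected, since the internal vertices form a subtree rooted at $v_1$ and each parent precedes its child. Hence the list is a prefix of some GS ordering. Each internal vertex has a private neighbor with respect to \emph{all} earlier vertices of $\sigma$, so a fortiori with respect to the earlier internal ones. The list therefore has the generic $Z$-property, and its length is at least $n-k$.

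For \ref{item:max-gs3}$\Rightarrow$\ref{item:max-gs1}, extend a generic $Z$-sequence $(v_1,\dots,v_t)$ to a GS ordering $\sigma$. For each $i$, fix a witness $w_i\in N(v_i)\setminus\bigcup_{j<i}N[v_j]$. Its leftmost neighbor in $\sigma$ is $v_i$, so $v_i$ is internal. Hence there are at most $n-t\le k$ leaves.

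For \ref{item:max-gs3}$\Rightarrow$\ref{item:max-gs2}, use the standard duality between $Z$-sequences and forcing. Take the witnesses $w_i$ as above; they are pairwise distinct. Set $S=V(G)\setminus\{w_1,\dots,w_t\}$, so $|S|\le k$, and force in the order $v_1\z w_1,\dots,v_t\z w_t$. At step $i$ the white vertices are $w_i,\dots,w_t$. For $j>i$ we have $w_j\notin N[v_i]$, so $w_i$ is the unique white neighbor of $v_i$. Moreover $v_i$ is already blue: it is not some $w_j$ with $j\ge i$, since each such $w_j$ lies outside $N[v_i]$ (for $j>i$) or is a neighbor of $v_i$ (for $j=i$).

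The extra $Z^*$ condition is the main obstacle. It requires that $w_i$ has a white neighbor unless it is the last white vertex. To secure it I would choose the witnesses cleverly, using the \cf-tree obtained from \ref{item:max-gs1}. Let $v_1,\dots,v_t$ be the internal vertices in $\sigma$-order, and take $w_i$ to be a child of $v_i$. Whenever possible, choose $w_i$ to be an internal child, so that $w_i$'s own witness is a later, still white, neighbor. If this greedy choice fails for vertices all of whose children are leaves, I would first reorder $\sigma$ within the freedom GS allows, for instance by processing the internal subtree along a path decomposition into root-to-leaf chains. After the reordering, each forced leaf should be the end of a chain whose next white vertex is adjacent to it, or should be the very last white vertex.

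For \ref{item:max-gs2}$\Rightarrow$\ref{item:max-gs3}, take a $Z^*$-forcing chain $v_1\z w_1,\dots,v_t\z w_t$ with $t\ge n-k$. When $v_j$ forces, $w_i$ for $i>j$ is still white and is not its unique white neighbor. Hence $w_i\notin N[v_j]$ for $j<i$, which gives the $Z$-property. It remains to show that $(v_1,\dots,v_t)$ can be taken as a GS prefix, and here the $Z^*$ clause is exactly what I would use. It guarantees that the white vertices keep a neighbor structure which, read backwards, links each $v_i$ to an earlier forcer. I would prove connectivity of the prefixes by induction on $i$, reordering independent forces where needed.
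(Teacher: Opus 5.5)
Your arguments for $(1)\Rightarrow(3)$ and $(3)\Rightarrow(1)$ are correct. However, both implications involving $(2)$ rest on the wrong orientation of the sequence/forcing duality, and this is a genuine gap, not a missing detail. You let the sequence vertices be the forcers and the witnesses be the forced vertices. The extra clause of the $Z^*$-rule constrains the \emph{forced} vertex: it must still have a white neighbour. So the clause links forced vertices to later-forced ones and says nothing about forcers being adjacent to each other.

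Concretely, take the spider with centre $c$ and legs $c a_1 a_2$, $c b_1 b_2$, $c d_1 d_2$. An end vertex outside $S$ is forced by its unique neighbour and then has no white neighbour, so it must be the last white vertex. Hence every $Z^*$-forcing set contains at least two ends, and $\{a_2,b_2\}$ is one ($b_2\z b_1$, $a_2\z a_1$, $a_1\z c$, $c\z d_1$, $d_1\z d_2$). In every $Z^*$-chain from $\{a_2,b_2\}$, $a_2$ forces $a_1$, $b_2$ forces $b_1$, $c$ forces $d_1$, and exactly one of $a_1,b_1$ forces $c$. So the set of forcers always leaves $a_2$ or $b_2$ isolated and can never be ordered into a GS prefix; your $(2)\Rightarrow(3)$ step fails. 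For the same reason, your $(3)\Rightarrow(2)$ construction can never yield a $Z^*$-forcing set of size $2=n-5$ here: up to symmetry it would be a chain from $\{a_2,b_2\}$ whose forcers form a GS prefix. This holds however you choose witnesses or reorder $\sigma$, even though $(d_2,d_1,c,a_1,b_1)$ is a generic $Z$-sequence of length $5$. So the repair sketched in your $(3)\Rightarrow(2)$ paragraph cannot be carried out.

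The missing idea is to reverse the orientation. From a chain $u_1\z v_1,\dots,u_\ell\z v_\ell$, take the forced vertices backwards, $(v_\ell,\dots,v_1)$, with $u_i$ as the witness for $v_i$. Since $v_j$ ($j>i$) was white when $u_i$ forced, $u_i\notin N[v_j]$. The white neighbour of $v_i$ required by the $Z^*$-clause is forced later, hence precedes $v_i$ in the reversed sequence, which is exactly the GS-prefix property. Conversely, from a generic $Z$-sequence $(v_1,\dots,v_t)$ with witnesses $w_i$, set $S=V(G)\setminus\{v_1,\dots,v_t\}$ and apply $w_t\z v_t,\dots,w_1\z v_1$. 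The earlier neighbour of $v_i$ in the prefix is the white neighbour the clause needs, and $v_1$ is the last white vertex. The paper does this directly from the \cf-tree: with $\sigma$ ending in the leaves, the leaves form the forcing set. Each internal vertex is then forced, in decreasing $\sigma$-order, by one of its children, with its parent serving as the required white neighbour.
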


\begin{proof}
    First we prove that \ref{item:max-gs3} implies \ref{item:max-gs1}. Let $(v_1, \dots, v_{n-k})$ be a generic $Z$-sequence of $G$ of length $n-k$. We extend this sequence to a GS ordering $\sigma = (v_1, \dots, v_n)$. Then in the \cf-tree $T$ of $\sigma$ every vertex $v_i \in \{v_1, \dots, v_{n-k}\}$ is an internal vertex since all vertices in $N(v_i) \setminus \bigcup_{j=1}^{i-1} N[v_j]$ are children of $v_i$ in $T$. Thus, $T$ has at most $k$ leaves.

    Next, we show that \ref{item:max-gs1} implies \ref{item:max-gs2}. Let $\sigma = (v_1, \dots, v_n)$ be a GS ordering whose \cf-tree $T$ has at most $k$ leaves. W.l.o.g. we may assume that $\sigma$ ends with the leaves. Let $L$ be the set of leaves of $T$. Note that $|L| \leq k$. We claim that $L$ is a $Z^*$-forcing set of $G$. We color the $v_i$ blue in descending order of their indices starting with $i = n - |L|$. Since $v_i$ is not a leaf in $T$, it has a child $v_j$ in $T$. Since $j > i$, vertex $v_j$ is already colored blue and no vertex to the left of $v_i$ is adjacent to $v_j$. Thus, $v_i$ is the only white neighbor of $v_j$ and, if $i \neq 1$, then $v_i$ has a (white) parent $v_\ell$ in $T$ with $\ell < i$. Hence, we can apply rule $v_j \z v_i$.

    Finally, we show that \ref{item:max-gs2} implies \ref{item:max-gs3}. Let $S$ be a $Z^*$-forcing set of size $\leq k$ and let $(R_1, \dots, R_\ell)$ be the sequence of the applied rules with $R_i = u_i \z v_i$. Note that $|S| + \ell = n$ and, thus, $\ell \geq n - k$. We claim that $\sigma = (v_\ell, \dots, v_1)$ is a generic $Z$-sequence of $G$. Every vertex $v_i \neq v_\ell$ had a white neighbor $v_j$ when rule $R_i$ was applied. Vertex $v_j$ was colored blue after $v_i$ and, thus, $j > i$ and $v_j \prec_\sigma v_i$. This implies that $v_i$ has a neighbor to its left in $\sigma$ and, therefore, $\sigma$ is the prefix of a GS ordering of $G$. Since all elements of $N[u_i] \setminus \{v_i\}$ were blue when rule $R_i$ was applied, all these vertices do not lie to the left of $v_i$ in $\sigma$. Hence, $u_i \in N(v_i) \setminus \bigcup_{v_j \prec_\sigma v_i} N[v_j]$ and $\sigma$ is a generic $Z$-sequence of length $\geq n - k$.
\end{proof}

The problem \ZFS{} which uses the $Z$-rule mentioned above can be solved in \FPT{} time when parameterized by the treewidth of $G$~\cite{bhyravarapu2025parameterized,scheffler2025parameterized}. We adapt the algorithm given in~\cite{scheffler2025parameterized} to also find a minimal $Z^*$-forcing set in \FPT{} time when parameterized by the treewidth of $G$. As this algorithm and the proof of correctness is mainly a rephrasing of the results given in~\cite{scheffler2025parameterized}, we deferred them to the appendix (see \cref{appendix}).

\begin{theorem}\label{thm:tw}
    Given an $n$-vertex graph $G$ of treewidth $d$, we can compute a smallest $Z^*$-forcing set of $G$ in $2^{\Oc(d^2)} \cdot n$ time.
\end{theorem}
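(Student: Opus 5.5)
The plan is to follow the dynamic programming of~\cite{scheffler2025parameterized} for \ZFS{} over a nice tree decomposition of width $d$, and to change only the parts that encode the forcing rule. First we recast a $Z^*$-forcing process as a static object. Let $S$ be a forcing set and $u_i \z v_i$ the applied rules. The forces define a time function $t\colon V(G)\to\{0,\dots,n\}$, with $t(v)=0$ for $v\in S$ and $t(v_i)=i$ otherwise, together with a forcing map that assigns to each non-$S$ vertex the vertex that forced it.

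This process is valid if and only if every force $u\z w$ at time $t(w)$ satisfies the following conditions:
\begin{enumerate}
\item $t(u)<t(w)$.
\item Every $x\in N(u)\setminus\{w\}$ has $t(x)<t(w)$.
\item Either $t(w)=\max t$, or some neighbor $x$ of $w$ has $t(x)>t(w)$.
\end{enumerate}
Each vertex forces at most once, since after its force it has no white neighbor. Conversely, any timing satisfying these local conditions can be realized by applying the rules in order of $t$. Only the relative order of times matters, so the conditions are invariant under order-preserving relabelings. This is the fact that lets the dynamic program remember times only up to their relative order within a bag.

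The table then works as follows. For a bag $X_b$ we record:
\begin{itemize}
\item the relative order (a weak ordering) of the times of the vertices in $X_b$;
\item for each bag vertex, whether it is in $S$, whether it has already been assigned as the forcer of some vertex, and, if it has been forced, whether its forcer has been seen yet;
\item the status flags for the conditions above.
\end{itemize}
Condition 2 is handled by a flag per bag vertex $u$ that forces a bag vertex $w$. Whenever an edge $ux$ is introduced (in the introduce node of the later endpoint), we check $t(x)<t(w)$ if $x\ne w$. If the target $w$ is forgotten before $u$, we store, instead of $w$, the relative position of its time among the bag times, as in the original algorithm.

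Condition 3, the new $Z^*$ part, adds one more bit per vertex: "a later-timed neighbor has been seen". A vertex may be forgotten only when this bit is set, or when it is declared to have the maximum time. For the latter we keep one global bit recording whether the maximum-time vertex has been fixed, and we enforce that at most one vertex carries that time at the root.

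The number of weak orderings on $d+1$ elements, combined with the stored relative positions of forgotten targets, is $2^{\mathcal{O}(d\log d)}$ per vertex. The bound $2^{\mathcal{O}(d^2)}$ leaves ample room for the pairwise comparisons stored per bag. Introduce, forget and join nodes are handled as in~\cite{scheffler2025parameterized}. At a join node the orderings of the two children must coincide, and the flags are combined by OR, with consistency checks for "forcer assigned" so that no vertex forces twice. Each table entry keeps the minimum $|S|$ restricted to the forgotten vertices. With constant work per state transition, the total running time is $2^{\mathcal{O}(d^2)}\cdot n$.

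The main obstacle is Condition 2 when the target $w$ of $u$ has been forgotten while $u$ is still in the bag. Every later neighbor $x$ of $u$ must still be compared against $t(w)$, but $w$ itself is gone. I would handle this as the original algorithm does for the $Z$-rule: keep for $u$ a "threshold" equal to the position of $t(w)$ inserted into the current weak ordering, and update it whenever vertices are introduced or forgotten. The key point is that the modification for Condition 3 only touches the state of $w$ itself. It therefore does not interact with this threshold mechanism, and the correctness proof carries over essentially verbatim.
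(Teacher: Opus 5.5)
There is a genuine gap in your handling of Condition~2. You compare a neighbor $x$ of a forcer $u$ with the force time $t(w)$ only once $u$'s target $w$ is fixed, i.e.\ once $w$ is in the bag or its time is stored as a threshold. Before that moment nothing in your state represents $t(w)$. So a neighbor $x$ of $u$ that is introduced and forgotten before $w$ appears (earlier along a path, or in the other child of a join node) is never checked.

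This already fails on $C_4$ with vertices $a,b,c,d$ in cyclic order and the path decomposition with bags $\{a,c,d\}$, $\{a,b,c\}$. Take $S=\{a\}$ and the forces $a\z b$, $b\z c$, $c\z d$ with times $0,1,2,3$. Your DP checks these as follows:
\begin{itemize}
\item While $d$ is in the bag, $a$ has no target yet, so nothing is compared.
\item When $b$ arrives and $a\z b$ is fixed, $d$ is already gone.
\item The threshold stored for $c$'s target $d$ correctly admits $b$.
\item $b$ and $c$ each have a later neighbor, and $d$ is the maximum.
\end{itemize}
Every check passes and the DP returns $1$. However, the smallest $Z^*$-forcing set of $C_4$ has size $2$: a single blue vertex has two white neighbors, and equivalently every GS \cf-tree of $C_4$ has two leaves. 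The unchecked constraint is $t(d)<t(b)$ for the force $a\z b$.

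The missing idea is that the force time of $u$ must be part of the state from the moment $u$ is introduced, before its target is known. Concretely, you guess whether $u$ forces and insert a placeholder for its force time into the bag's order. You then compare every neighbor sharing a bag with $u$ against this placeholder, and later require the target's time to coincide with it.

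This is exactly the role of the paper's event node $\phi_v$. It is created in the introduce node of $v$. The arcs $(\gamma_r,\phi_p)$ for all $r\in N(p)\setminus\{q\}$ (\cref{lemma:rule-arcs}) are added in the rule node of whichever of $r$ and $p$ is forgotten first. A threshold created only when the target is forgotten is not what the algorithm rephrased in the appendix does.

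With this placeholder your weak-order variant should go through, since all events live on subtrees of the decomposition and consistent bag orders glue to a global order. It would need only $2^{\mathcal{O}(d\log d)}$ bag orders, whereas the paper stores arbitrary dependency DAGs on the event nodes, which is where its $2^{\mathcal{O}(d^2)}$ comes from.

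You should also tighten the exceptional vertex in Condition~3. A global bit enforcing ``at most one'' such vertex does not make the declared vertex actually last. You need to require it to lie above every other event in each bag containing it, and then argue that a maximal element can be placed last in a linear extension. The paper does this with $\lambda$, with the arcs into $\gamma_v$ when $\Phi(v)=E$, and with the sink property in \cref{lemma:rule-arcs}.
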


As pointed out above, this implies an \FPT{} algorithm for \MinLeafF{}.

\begin{theorem}\label{thm:min-leaf-fpt}
    \MinLeafF{} of GS can be solved in $2^{\Oc(d^2)} \cdot n$ time where $d$ is the treewidth of the graph; it can be solved in $2^{\Oc(k^2)} \cdot n$ time where $k$ is the number of leaves of the tree.
\end{theorem}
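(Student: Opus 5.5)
The theorem follows by chaining \cref{thm:forcing-domination-tree}, \cref{thm:tw} and \cref{lemma:pw}. No new combinatorics is needed.

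For the treewidth statement, let $G$ be the input graph with treewidth $d$. If $n = 1$, the single vertex is the root, so the tree has zero leaves and we answer ``Yes'' trivially. Otherwise $n \geq 2$, and by \cref{thm:forcing-domination-tree} there is a GS \cf-tree with at most $k$ leaves if and only if $G$ has a $Z^*$-forcing set of size at most $k$. By \cref{thm:tw} we compute a smallest $Z^*$-forcing set $S$ in $2^{\Oc(d^2)} \cdot n$ time. We answer ``Yes'' exactly when $|S| \leq k$.

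If an actual ordering is required, we use the constructive direction of the proof of \cref{thm:forcing-domination-tree}. From the sequence of forcing steps that colours $G$ blue from $S$, we read off the generic $Z$-sequence $(v_\ell, \dots, v_1)$ and extend it greedily to a GS ordering. Its \cf-tree has at most $|S|$ leaves. This takes only polynomial (in fact near-linear) extra time, provided the algorithm of \cref{thm:tw} also returns the forcing chronology. If it does not, we would reconstruct the chronology by simulating the $Z^*$-rule from $S$.

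For the parameter $k$, first note that we cannot compute a tree decomposition with a bounded number of leaves for free. So we proceed as follows:
\begin{enumerate}
\item Try to compute a tree decomposition of width at most $k$ using Bodlaender's linear-time algorithm, which runs in $2^{\Oc(k^3)} n$ time, or use the $2^{\Oc(k)} n$-time $5$-approximation to get width $\Oc(k)$.
\item If it reports that the treewidth exceeds $k$, then the pathwidth also exceeds $k$. By \cref{lemma:pw}, no GS \cf-tree with at most $k$ leaves exists, and we answer ``No''.
\item Otherwise the treewidth is $d = \Oc(k)$, and we run the algorithm above in $2^{\Oc(k^2)} \cdot n$ time.
\end{enumerate}

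The main point to watch is this decomposition step. To stay within the claimed $2^{\Oc(k^2)} n$ bound, we must use the single-exponential approximation (width $\leq 5k+4$, time $2^{\Oc(k)} n$) rather than an exact algorithm, since $2^{\Oc((5k)^2)} = 2^{\Oc(k^2)}$. We should also check that \cref{thm:tw} assumes the decomposition is given, or computes it within its stated time.
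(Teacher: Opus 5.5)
Your proposal is correct and follows the paper's proof. The treewidth bound comes from chaining \cref{thm:forcing-domination-tree} with \cref{thm:tw}. The $k$-parameterized bound comes from a single-exponential treewidth approximation: reject via \cref{lemma:pw} when the treewidth exceeds $k$, otherwise run the treewidth algorithm with $d = \Oc(k)$.

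The differences are cosmetic. The paper uses Korhonen's $2$-approximation (width $\le 2k+1$) where you use the $5$-approximation (width $\le 5k+4$), both in $2^{\Oc(k)} \cdot n$ time. You also handle the trivial case $n=1$, which \cref{thm:forcing-domination-tree} excludes and the paper does not treat.
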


\begin{proof}
     The first statement follows directly from \cref{thm:forcing-domination-tree,thm:tw}. 
     
     If we want to solve the problem parameterized by $k$, then we apply Korhonen's 2-approximation for treewidth~\cite{korhonen2021single} to the input $(G,k)$ which needs $2^{\Oc(k)} \cdot n$ time. If it outputs that the treewidth of $G$ is larger than~$k$, then there is no GS \cf-tree with $\leq k$ leaves, due to \cref{lemma:pw}. Otherwise, we can solve the problem in $2^{\Oc(k^2)} \cdot n$ time using our algorithm for the treewidth case.
\end{proof}

\section{Number of Internal Vertices as Parameter}\label{sec:internals}

\subsection{Hardness for Search Trees with Few Internal Vertices}

As we have seen in \cref{corol:f-max-gs}, \MinInF{} is equivalent to \ConDom{}. This problem is known to be \NP-complete and \WT-complete when parameterized by the size of the dominating set~\cite{cygan2015param}. This implies that \MinInF{} of GS is \WT-complete and \NP-complete and \MaxLeafF{} is \NP-complete. As shown next, we can generalize these hardness results to all connected graph searches that are clique starters.

\begin{theorem}\label{thm:max-leaf-F-NPc}
    Let $\A$ be a connected graph search that is a clique starter. Then 
        \begin{enumerate}
        \item \MaxLeafF{} of $\A$ is \NP-hard on split graphs and
        \item \MinInF{} of $\A$ is \WT-hard and \NP-hard on split graphs.
    \end{enumerate}
\end{theorem}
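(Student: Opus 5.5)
We reduce from \ConDom{} on split graphs, which is \NP-complete and, parameterized by solution size, \WT-hard. The standard reduction from \textsc{Set Cover} (elements form the independent set, sets form a clique) gives this hardness. It stays valid for the connected version, since any dominating set of a split graph can be moved into the clique without increasing its size, and a subset of the clique is automatically connected. The plan is to show that for a split graph $G$ with clique $C$ and independent set $I$, and for every clique starter $\A$, the following holds: $G$ has an $\A$-ordering whose \cf-tree has at most $k$ internal vertices iff $G$ has a connected dominating set of size at most $k$. The reduction is the identity map with the same $k$, so both claimed hardness results follow at once. For \MaxLeafF{} we use the parameter $n-k$ in the same equivalence, which gives \NP-hardness.

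The direction from trees to dominating sets is immediate. The internal vertices of any spanning tree (together with the root, if $n\ge 2$) form a connected dominating set. The only care needed is with the root convention. If the root has children it is internal, so the set of internal vertices is connected and dominates everything, since every leaf has its parent in the set. This part does not depend on $\A$ at all.

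For the converse, take a connected dominating set $D$ of size at most $k$. We may assume $D \subseteq C$: replace each $v\in D\cap I$ by a clique neighbour, and note that $D$ is nonempty. If $D=\{d\}$ is dominating but some vertex of $C$ is not dominated by it, this cannot happen, so domination is preserved. Since $D$ is a clique, the clique-starter property gives an $\A$-ordering $\sigma$ beginning with the vertices of $D$ in any order.

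Next we check which vertices can be internal in the \cf-tree of $\sigma$. Every vertex outside $D$ has a neighbour in $D$, so its first neighbour in $\sigma$ lies in $D$. Hence every vertex of $V\setminus D$ is a child of a vertex of $D$, and no vertex outside $D$ has children. The internal vertices are therefore contained in $D$, giving at most $k$ of them.

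The main obstacle is making sure the equivalence is exact for the chosen problem formulations. The point to handle is that the root counts as internal when it has children and that $D$ has at least one vertex, so in the case $|D|=1$ the count still matches. Once that is settled, the \WT-hardness transfers because the parameter is preserved. \NP-hardness of \MaxLeafF{} follows by asking for at least $n-k$ leaves, since the vertices outside $D$ are exactly leaves.
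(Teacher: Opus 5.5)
Your proof is correct and is essentially the paper's reduction: composing the standard \textsc{Set Cover}-to-\ConDom{} reduction with your identity map gives exactly the split graph the paper builds, and the forward direction (put the chosen clique vertices first via the clique-starter property) is the same. The only difference is the backward direction, where you use the generic fact that the internal vertices of any rooted spanning tree form a connected dominating set and then push that set into the clique; this handles a root in the independent set without the paper's extra normalization that no element lies in every set (the paper instead argues that the second vertex of $\sigma$ must be internal).
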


\begin{proof}
    We reduce from \textsc{Set Cover}. An instance of this problem consists of a finite set $U = \{u_1, \dots, u_n\}$ and a family $\mathcal{W} = \{W_1, \dots, W_p\}$ of subsets of $U$ as well as an integer $k \leq p$. The question is whether there are $\ell \leq k$ sets $W_{i_1}, \dots, W_{i_\ell}$ such that $\bigcup_{j=1}^\ell W_{i_j} = U$. This problem is both \NP-complete~\cite{karp21} and \WT-complete~\cite{downey1995fixed} when parameterized by $k$. W.l.o.g.~we may assume that $\bigcup_{j=1}^{p} W_{j} = U$, i.e., every element of $U$ is in at least one set $W_j$. Furthermore, we assume that no element of $U$ is contained in every set $W_i \in \mathcal{W}$ since such an element has no influence on the size of the minimum set cover.

    We build the following split graph $G$. For every $W_i$, we construct a vertex $c_i$ and for every $u_i$ we construct a vertex $a_i$. We call the set of $c$-vertices $C$ and the set of $a$-vertices $A$. We add edges to $G$ such that the set $C$ induces a clique while the set $A$ induces an independent set. Furthermore, vertex $c_i$ is made adjacent to vertex $a_j$ if and only if the set $W_i$ contains the element $u_j$. We claim that there is a set cover of size less than or equal to $k$ if and only if there is an \cf-tree of search $\A$ on $G$ with no more than $k$ internal vertices. 

    First assume there is a set cover $W_{i_1}, \dots W_{i_\ell}$ with $\ell \leq k$. Since $\A$ is a clique starter, there is an $\A$-ordering $\sigma$ with the prefix $(c_{i_1}, \dots, c_{i_\ell})$. Since we consider a set cover, all vertices that are not part of this prefix have a neighbor in the prefix. This implies that all these vertices are leaves in the $\cf$-tree of $\sigma$. Hence, this \cf-tree has at most $k$ internal vertices.

    Now assume that there is an $\A$-ordering $\sigma$ whose $\cf$-tree $T$ has at most $k$ internal vertices. Let $S$ be the set of internal vertices of $T$ that are part of $C$. We claim that the family $\mathcal{W}' = \{W_i \mid c_i \in S\}$ forms a set cover of $U$. Let $u_j$ be an arbitrary element of $U$. First assume that the vertex $a_j$ is not the root of $T$. Then the parent of $a_j$ is in $C$. Let us call the parent $c_i$. As $c_i$ is an internal vertex of $T$, it is contained in $S$. Thus, $W_i \in \mathcal{W}'$ and $u_j$ is covered by~$\mathcal{W}'$. Now assume that $a_j$ is the root of $T$. As $\A$ is a connected search, the second vertex of $\sigma$ is a neighbor of $a_j$ and, thus, part of $C$. Due to our assumption, $u_j$ is not contained in every set $W_i$ and, thus, there is at least one vertex $c_\ell \in C$ that is not adjacent to $a_j$. Let $c_s$ be the second vertex of $\sigma$. Vertex $c_s$ is different from $c_\ell$ and has $c_\ell$ as its child in the \cf-tree of $\sigma$. Hence, $c_s$ is contained in $S$ and $u_j$ is covered by $W_s$.
\end{proof}

Note that we leave open whether there are clique starters other than GS for which \MinInF{} is not just \WT-hard but also \WT-complete. The main difference between many of these searches and GS is the fact that the previously visited vertices have a significant influence on the ordering of the following vertices. In particular, the ordering of leaves of the \cf-tree might influence the ordering of some internal vertices (see end of \cref{sec:inner-algorithm}). Nevertheless, the problem is \NP-complete for a search as long as its orderings can be recognized in polynomial time. This holds, e.g., for BFS. Adapting proofs given in \cite{scheffler2025leaves}, we can strengthen this result for LBFS.

\begin{theorem}\label{thm:np-lbfs}
    For every fixed value $k\geq3$, \MinInF{} of LBFS is \NP-complete on weakly chordal graphs.
\end{theorem}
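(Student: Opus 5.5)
Membership in \NP{} is routine: we guess an ordering $\sigma$, check in polynomial time that it is an LBFS ordering (e.g.\ by the standard vertex-ordering characterization of LBFS via the $4$-point condition), compute its \cf-tree and count the internal vertices. The real work is the hardness part, for every fixed $k \geq 3$. I plan to adapt the reductions from~\cite{scheffler2025leaves} and reduce from the LBFS end-vertex problem. This problem is known to be \NP-complete on weakly chordal graphs~\cite{beisegel2019end}: given $G$ and $t \in V(G)$, decide whether some LBFS ordering of $G$ ends in $t$.

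\emph{Construction.} Starting from $(G,t)$, I build $G'$ as follows.
\begin{itemize}
\item Add a vertex $s$ adjacent to all of $V(G)$. Then LBFS starting at $s$ visits $V(G)$ next, and the restriction of the ordering to $V(G)$ is exactly an LBFS ordering of $G$, because all labels share the prefix contributed by $s$.
\item Add a small gadget of new vertices whose only neighbour in $V(G)$ is $t$ or $V(G)\setminus\{t\}$, chosen so that the leftmost-neighbour rule does the work. The parent of a gadget vertex should become a new internal vertex unless $t$ is the last vertex of $V(G)$. Concretely, I would add a vertex $y$ adjacent to $V(G)\setminus\{t\}$ and a vertex $x$ adjacent to $t$ and $y$. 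If $t$ is visited last among $V(G)$, then all of $V(G)$ are children of $s$ and hence leaves. Then $y$ and $x$ are ordered in a controlled way, giving exactly $s$, one vertex of $G$, and possibly $y$ as internal vertices, that is at most $3$.
\item Pad with further pendant-type vertices attached to $s$ or $y$ to reach any fixed $k \geq 3$ exactly.
\end{itemize}

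\emph{Correctness.} The forward direction follows directly from the construction. For the backward direction I first show that a tree with $\leq k$ internal vertices forces the start vertex to be $s$, or a vertex behaving like $s$. To do this I would make $s$ the unique vertex dominating enough of the graph, for instance by attaching many private leaves to the gadget so that any other root produces too many internal vertices. Then I show that if $t$ is not last among $V(G)$, the vertex after $t$ in $V(G)$ forces $x$ or $y$ to hang below an additional internal vertex, which exceeds the budget.

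\emph{Weakly chordal.} I check that $G'$ is weakly chordal iff $G$ is, using \cref{lemma:simplicial} and \cref{lemma:2-pair}. Delete in turn:
\begin{itemize}
\item simplicial pendant vertices;
\item vertices adjacent to all but at most one other vertex, such as $s$, after adding appropriate two-pair edges.
\end{itemize}
This reduces $G'$ back to $G$.

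The main obstacle is the backward direction. It requires arguing that every LBFS \cf-tree with few internal vertices must start at $s$ and must place $t$ last among $V(G)$. The gadget has to be tuned so that no alternative start, for example at $y$, at $t$, or at a vertex of $G$ with large neighbourhood, also achieves $\leq k$ internal vertices. I would first try to settle this by a case distinction on the root, counting forced internal vertices via the layer structure (every non-last BFS layer contains an internal vertex). If that does not suffice, I would enlarge the gadget with twin copies so that any wrong root creates at least $k+1$ internal vertices.
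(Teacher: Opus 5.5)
Your \NP-membership argument is fine, but the hardness part has a genuine gap. The concrete gadget does not encode the end-vertex condition, and nothing else in the proposal does. Take any LBFS ordering $\sigma$ of $G'$ that starts at $s$. Layer~1 is $V(G)$ and layer~2 is $\{x,y\}$.
The parent of $x$ is the leftmost vertex of $N(x)=\{t,y\}$. This is always $t$, because $t$ lies in layer~1. The parent of $y$ is the first vertex of $V(G)\setminus\{t\}$ in $\sigma$. All of $V(G)$ are children of $s$, and $y$ has no child.
So the \cf-tree has exactly the three internal vertices $s$, $t$ and the first vertex of $G-t$, wherever $t$ sits among $V(G)$. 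In particular, your forward-direction claim that ``$t$ last'' makes all of $V(G)$ leaves is false. For $k=3$, every instance $(G,t)$, no-instances included, is mapped to a yes-instance.

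This is not an accident of this particular gadget. In a first-in tree, a vertex $z$ gets parent $t$ exactly when $t$ is leftmost in $N(z)$. So vertices attached to $t$ can certify that $t$ is \emph{early} relative to some set. But ``$t$ is the last vertex of $V(G)$'' is a universal statement: $v\prec_\sigma t$ for all $v\neq t$.
The obvious test for it, one vertex $z_v$ adjacent to $v$ and $t$ for each $v$, makes every $v\neq t$ internal precisely in the yes-case. So the ``tuning'' you defer is the entire content of the proof, as are the backward direction and forcing the root, and the proposal gives no mechanism for any of it.

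The paper avoids the end-vertex problem and reduces directly from 3-SAT. It enforces a condition of exactly the kind an \cf-tree can detect:
\begin{itemize}
\item There are two non-adjacent vertices $b_1,b_2$. For every clause vertex $c_i$, two vertices $q^1_i,q^2_i$ are adjacent to $c_i$ and to $b_1$ resp.\ $b_2$. Hence $c_i$ avoids becoming internal only if both $b_1$ and $b_2$ precede it.
\item Two pendant leaves on each of $r$, $b_1$, $b_2$ make these the only possible internal vertices. A simple layer argument then pins the root to $r$.
\item Every $c_i$ is adjacent to $b_1$ while $b_2$ is not. So the LBFS label rule forces, for each $c_i$, a vertex visited before $b_1$ that is adjacent to $b_2$ but not to $c_i$. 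This must be a literal vertex of its clause.
\item Since $x_j$ and $\overline{x}_j$ are non-adjacent, these literal vertices form a consistent satisfying assignment.
\end{itemize}
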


\begin{proof}
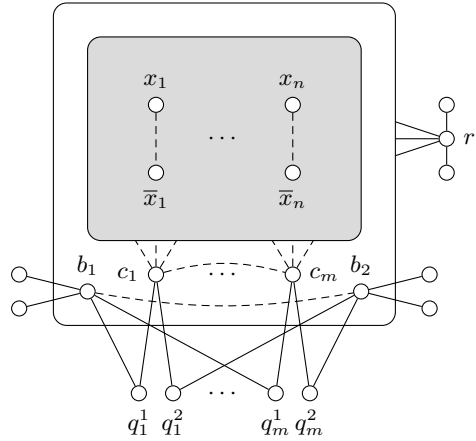
\begin{figure}
	\centering
    	\begin{tikzpicture}[scale=0.90, vertex/.style={inner sep=2pt,draw,circle, fill=white},
 	noedge/.style={dash pattern=on 3pt off 2pt}
 	]
 	\small
 	\draw[rounded corners=5pt, fill=white] (-1.5,1.5) rectangle (3.5,-3.25);
 	\draw[rounded corners=5pt, fill=black!14!white] (-1,1) rectangle (3,-2);

 	\node[vertex, label={90:$x_1$}] (x1) at (0,0) {};
 	\node[vertex, label={-90:$\overline{x}_1$}] (nx1) at (0,-1) {};
 	\node[vertex, label={90:$x_n$}] (xn) at (2,0) {};
 	\node[vertex, label={-90:$\overline{x}_n$}] (nxn) at (2,-1) {};
 	
 	\node[vertex, label={0:$r$}] (r) at (4.25,-0.5) {}
 	edge (3.5,-0.25)
    edge (3.5,-0.5)
    edge (3.5,-0.75);
 	
 	\node[vertex, label={180:$c_1$}] (c1) at (0,-2.5) {}
 	edge[noedge] (-0.3,-2.)
    edge[noedge] (0,-2.)
    edge[noedge] (0.3,-2.);
 	\node[vertex, label={0:$c_m$}] (cm) at (2,-2.5) {}
 	edge[noedge] (1.7,-2.)
    edge[noedge] (2,-2.)
    edge[noedge] (2.3,-2.);
    
    \draw[noedge, bend angle=15, bend left] (c1) to (cm);
    
    \node[vertex, label={-90:$q^1_1$}] (q11) at (-0.25,-4.25) {}
    edge (c1);
    \node[vertex, label={-90:$q^2_1$}] (q12) at (0.25,-4.25) {}
    edge (c1);
    \node[vertex, label={-90:$q^1_m$}] (qm1) at (1.75,-4.25) {}
    edge (cm);
    \node[vertex, label={-90:$q^2_m$}] (qm2) at (2.25,-4.25) {}
    edge (cm);
 	
 	\draw[noedge] (x1) -- (nx1);
 	\draw[noedge] (xn) -- (nxn);

    \node[vertex] (lr1) at (4.25,-0) {}
    edge (r);

    \node[vertex] (lr2) at (4.25,-1) {}
    edge (r);
    
    \node[vertex] (lb11) at (-2,-2.5) {};
    \node[vertex] (lb12) at (-2,-3) {};

    \node[vertex] (lb21) at (4,-2.5) {};
    \node[vertex] (lb22) at (4,-3) {};
 	
 	\node[vertex, label={90:$b_1$}] (b1) at (-1,-2.75) {}
    edge (q11)
    edge (qm1)
    edge (lb11)
    edge (lb12);

    \node[vertex, label={90:$b_2$}] (b2) at (3,-2.75) {}
    edge (q12)
    edge (qm2)
    edge (lb21)
    edge (lb22);

    \draw[noedge, bend angle=10, bend right] (b1) to (b2);
 	
 	\node (p) at (1,-0.5) {$\cdots$};
 	\node (p) at (1,-2.5) {$\cdots$};
 	\node (p) at (1,-4.25) {$\cdots$};

 	\end{tikzpicture}
 	\caption{The construction of the proof of \cref{thm:np-lbfs}. In the boxes, only non-edges are displayed by dashed lines. The clause vertices have exactly three non-neighbors in the gray box. The connection of a vertex with a box means that the vertex is adjacent to all vertices in the box.}\label{fig:np-lbfs}
\end{figure}
We adapt the proofs of Theorems 4.16 and 4.17 in {\cite{scheffler2025leaves}}. 
We reduce from 3-SAT. Let $\I$ be an instance of 3-SAT. We construct the corresponding graph $G(\I)$ as follows (see~Figure~\ref{fig:np-lbfs}). Let $ X=\{x_1, \dots, x_k,\overline{x}_1,\ldots,\overline{x}_k\} $ be the set of vertices representing the literals of $ \I $. The graph $G(\I)[X]$ forms the complement of the matching in which $x_i$ is matched to $ \overline{x}_i $ for every $ i \in \{1,\ldots, k\} $. Let $ C = \{c_1, \ldots ,c_\ell\} $ be the set of vertices representing the clauses of $ \I $. The set $ C $ forms an independent set in $G(\I)$ and every clause vertex $ c_i $ is adjacent to each vertex of $ X $ whose corresponding literal is \emph{not} contained in the clause associated with $ c_i $. We have two vertices $b_1$ and $b_2$ that are adjacent to all vertices in $X \cup C$ but are not adjacent to each other. Additionally, we add a vertex $r$ that is adjacent to all vertices in $X \cup C \cup \{b_1,b_2\}$. For every vertex $c_i$, we add two vertices $q^1_i$ and $q^2_i$. For $j \in \{1,2\}$, vertex $q^j_i$ is adjacent to $c_i$ as well as $b_j$. Finally, we append to each of the vertices $r$, $b_1$ and $b_2$ two leaves.

We claim that $\I$ has a satisfying assignment if and only if $G(\I)$ has a LBFS ordering whose \cf-tree has exactly three internal vertices. First assume that $\I$ has a satisfying assignment $\B$. We start the LBFS ordering in $r$. Then we visit all literal vertices whose literal is satisfied by $\B$. Note that this is possible since the visited vertices form a clique in $G[\I]$. As $\B$ is satisfying, every vertex $c_i$ has some visited non-neighbor. The same holds for all remaining literal vertices since their inverse literal vertex, which is not adjacent to them, has been visited. This implies that $b_1$ and $b_2$ are the only vertices that can be visited next. Visiting these two vertices fixes the parents of all vertices. In fact, every vertex of $X \cup C \cup \{b_1,b_2\}$ has parent $r$, every vertex $q^1_i$ has parent $b_1$ and every vertex $q^2_i$ has parent $b_2$. The leaves have their respective neighbor as parent. Therefore, every LBFS ordering following the described prefix has an \cf-tree with exactly three internal vertices.

Now assume $\sigma$ is an LBFS ordering of $G(\I)$ whose \cf-tree has exactly three internal vertices. Since $r$, $b_1$ and $b_2$ have two adjacent leaves, they must be the unique internal vertices. In particular, this implies that both $b_1$ and $b_2$ are to the left of all vertices $c_i$ since otherwise $c_i$ is the parent of either $q^1_i$ or $q^2_i$. W.l.o.g.~we may assume that $b_1 \prec_\sigma b_2$. Therefore, either $r$ or $b_1$ is the start vertex of $\sigma$. If this is $b_1$, then $b_2$ is visited after all clause vertices; a contradiction. Thus, we know that $\sigma$ starts in $r$. Since $b_2$ is not adjacent to $b_1$ but all clause vertices are adjacent to $b_1$, all vertices $c_i$ must have some non-neighbor to the left $b_1$ and this non-neighbor cannot be another clause vertex as these are all to the right of $b_1$. Since this non-neighbor must be adjacent to $r$, the only possible vertices are the literal vertices whose literals appear in the clause of $c_i$. Thus, at least one of these three literal vertices is to the left of $b_1$ in $\sigma$. Note that not both literal vertices of the same variable can be to the left of $b_1$ since there are non-adjacent and, thus, one of the two will be visited after $b_1$. Summarizing, the literal vertices to the left of $b_1$ induce an assignment of a subset of the variables of $\I$ that satisfied~$\I$. This concludes the proof of correctness of the reduction.

To see that the graph $G[\I]$ is weakly chordal, we apply \cref{lemma:simplicial} to delete vertices whose existence or non-existence does not influence the property of being weakly chordal. The leaves are simplicial and, thus, can be deleted. The same holds for all the vertices $q^1_i$ and $q^2_i$. Now $r$ is adjacent to all vertices and can also be deleted. Vertices $b_1$ and $b_2$ are adjacent to all vertices but one and can also be deleted. Now we observe that every pair $\{x_i,\overline{x}_i\}$ forms a two-pair of the remaining graph. Hence, we can add the edge between them, due to \cref{lemma:2-pair}. The resulting graph is a split graph and, thus, weakly chordal. This implies that $G(\I)$ is also weakly chordal, due to \cref{lemma:simplicial,lemma:2-pair}.

For values of $k$ that are larger than three, we can use the same construction and just append a path containing $k - 2$ new vertices to $r$.
\end{proof}

Note that the bound for $k$ is tight. For all clique starters, it is easy to see that they have an \cf-tree with at most $k \leq 2$ internal vertices if and only if the graph has a connected dominating set of size at most~$k$.

\subsection{Hardness for Search Trees with Many Internal Vertices.}

To prove that \MaxInF{} is \W-hard for all clique starters, we consider special vertex orderings of graphs that are related to the generic $Z$-sequences used in~\cref{sec:leaves}. A~sequence $\sigma = (v_1, \dots, v_k)$ of vertices of a graph $G$ is called \emph{total dominating sequence}\footnote{Note that a total dominating sequence does not necessarily dominate the graph. Nevertheless, it can always be extended to such a sequence.} of $G$ if for every $i \in [k]$ it holds that $N(v_i) \setminus \bigcup_{j=1}^{i-1} N(v_j)$ is not empty. This notion was introduced by Bre\v{s}ar et al.~\cite{bresar2016total} where it is shown that the corresponding optimization problem that looks for a total dominating sequence of size at least $k$ is $\NP$-complete. Scheffler~\cite{scheffler2025parameterized} considered a bipartite version called \BIP{}. Here, a bipartite graph $G$ is given with $V(G) = X \dot \cup Y$ and one asks whether there is a total dominating sequence of size $k$ that only contains vertices of $X$. It is shown in~\cite{scheffler2025parameterized} that \BIP{} is both \W-complete and \NP-complete. This allows us to prove the following result.

\begin{theorem}\label{thm:max-int-f-w-hard}
      Let $\A$ be a connected graph search that is a clique starter. Then 
        \begin{enumerate}
        \item \MaxInF{} of $\A$ is \W-hard and \NP-hard on split graphs and
        \item \MinLeafF{} of $\A$ is \NP-hard on split graphs.
    \end{enumerate}
\end{theorem}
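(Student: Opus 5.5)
We reduce from \BIP{}, parameterized by the solution size $k$, which is \W-complete and \NP-complete~\cite{scheffler2025parameterized}. Let $(G,X,Y,k)$ be an instance with bipartition $V(G)=X\dot\cup Y$. We build a split graph $H$ by turning $X$ into a clique $C$ and keeping $Y$ as an independent set $A$. The edges between $C$ and $A$ are exactly the edges of $G$. The guiding idea is that in the \cf-tree of a search on $H$, a clique vertex $x$ gets a child in $A$ exactly when $x$ is the first visited vertex of $C$ adjacent to some $y\in A$. If the ordering starts inside $C$, this happens precisely when $N_G(x)$ contains a vertex not dominated by earlier clique vertices. Thus the clique vertices that are internal because of $A$-children form a total dominating sequence in $G$ consisting only of $X$-vertices.

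Clique vertices can also be internal because they are parents of other clique vertices. If the search starts at a clique vertex $c$, then $c$ is the parent of every other clique vertex, so this contributes only one extra internal vertex. To control this cleanly, I would modify the construction. One option is to add a universal-to-$C$ vertex $z$ in the clique with a pendant leaf; another is to add a few dummy clique vertices. Either way, a fixed number of internal vertices (the root or $z$) is accounted for, and the parameter becomes $k+1$ or $k+2$. The vertices of $A$ have no neighbours in $A$, so an $A$-vertex is internal only if it is the root. I would handle this by attaching to every $y\in A$ its own private clique vertex, or by arguing as in the proof of \cref{thm:max-leaf-F-NPc}. In that argument, if the root lies in $A$, the second vertex is a clique vertex that becomes the parent of all other clique vertices, and the count shifts only by a constant.

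\emph{Forward direction.} Given a one-sided total dominating sequence $(x_1,\dots,x_k)$, I use the clique-starter property to obtain an $\A$-ordering beginning with $(x_1,\dots,x_k)$. Each $x_i$ with $i\ge 2$ has an $A$-child by the sequence property. The vertex $x_1$ is the root and is internal because $C$ has further vertices, which can be guaranteed by padding. This gives at least $k$ internal vertices.

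\emph{Backward direction.} Given an ordering with at least $k+c$ internal vertices, where $c$ is the constant overhead above, I discard the at most $c$ structural internal vertices, namely the root and possibly the second vertex. The remaining internal vertices are clique vertices with an $A$-child. Listed in $\sigma$-order, they form an $X$-only total dominating sequence in $G$: an $A$-child $y$ of $x$ has no earlier neighbour, so it is newly dominated by $x$. The main obstacle is making the constant overhead exact, so that the sequence has length exactly $k$ rather than $k-1$. I expect to fix this by choosing the gadget so that the root is forced to be an auxiliary vertex that is always internal; for example, an auxiliary clique vertex adjacent to all of $C$ with a private pendant leaf, while arguing that starting elsewhere is never better.

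Part~2 follows from the same reduction for the unparameterized problem. Since $H$ has $n$ vertices and leaves are the complement of internal vertices, at least $k$ internal vertices is equivalent to at most $n-k$ leaves. \NP-hardness of \BIP{} then transfers directly to \MinLeafF{} on split graphs.
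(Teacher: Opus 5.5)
There is a genuine gap, and it sits exactly where you flag the main obstacle. You never fix a gadget whose overhead is the same constant in both directions, and the gadget you finally commit to does not work. That gadget is an auxiliary clique vertex $z$ adjacent to all of $C$, with a pendant leaf $\ell$.

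\begin{itemize}
\item The pendant makes $z$ internal whenever $z$ is not the root, so the overhead depends on the start vertex.
\item Starting at $z$ or inside $C$ costs one extra internal vertex.
\item Starting at $\ell$ costs two: $z$ is forced second and becomes the parent of all of $C$.
\item Starting at some $y\in Y$ also costs two: the root $y$ is extra, and so is $z$ as the parent of $\ell$, on top of the second vertex heading the sequence.
\end{itemize}

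So ``starting elsewhere is never better'' is false. With threshold $k+1$ the backward direction fails. Take $G$ to be a single edge $x_1y_1$ and $k=2$. Your graph is the path $\ell\,z\,x_1\,y_1$, and every search started at $\ell$ gives three internal vertices. Yet no one-sided total dominating sequence of length $2$ exists. More generally, for GS or BFS the ordering $(\ell,z,x_1,\dots)$ always achieves the maximum sequence length plus two. With threshold $k+2$ the forward direction would need a prefix like $(\ell,z,x_1,\dots,x_k)$, which is not a clique. The clique-starter property therefore gives you nothing for a general $\mathcal{A}$.

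Your other option also fails: attaching a private clique vertex $p_y$ to each $y\in Y$. Visiting all $p_y$ first is a clique prefix, and it makes every $p_y$ internal with child $y$, regardless of the domination structure of $G$.

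The repair is to drop the pendant, which is the paper's construction. Add one vertex $r$ to the clique, adjacent to all of $X$ and to nothing in $Y$, and use threshold $k+1$. Since $r$ has no private neighbour, it is internal only when it is the root. The overhead is then at most one for every start vertex:

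\begin{itemize}
\item If the root is $r$, only $r$ is extra.
\item If the root is in $X$, then $r$ and all of $Y$ are leaves, and the root itself heads the sequence.
\item If the root is $y_0\in Y$, the second vertex $c_1\in X$ is adjacent to $y_0$, hence non-isolated in $G$. You must keep $c_1$ as the first element of the sequence instead of discarding it as you propose; only $y_0$ is extra.
\end{itemize}

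In every case at least $k$ internal vertices lie in $X$. List the first $k$ of them in $\sigma$-order. The first is non-isolated (assume without loss of generality that $X$ has no isolated vertices). Every later one has a child whose leftmost neighbour it is. That child is not adjacent to the first vertex, so it lies in $Y$, and it is not adjacent to any earlier vertex of the list. Hence the list is a total dominating sequence.

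For the forward direction, use the clique prefix $(r,x_1,\dots,x_k)$. Because $r$ has no $Y$-neighbours, each $x_i$ receives a $Y$-child. Your Part~2 argument (internal vertices $\geq k+1$ iff leaves $\leq n-k-1$) is fine as stated.
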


\begin{proof}
    We reduce from \BIP. Let $G$ be a bipartite graph with $V(G) =X \dot\cup Y$ and $k \in \N$. W.l.o.g.~we may assume that $X$ does not contain isolated vertices as such vertices can never be part of a total dominating sequence. We construct the graph $G'$ from $G$ as follows: We add a vertex $r$ to $X$ and make $X$ to a clique. Note that $G'$ is a split graph. We claim that $G$ has a total dominating sequence containing $k$ vertices of $X$ if and only if there is an $\A$-ordering of $G'$ whose $\cf$-tree has $\geq k + 1$ internal vertices.

    First assume that $(v_1, \dots, v_k)$ is a total dominating sequence of $G$ containing $k$ vertices of $X$. Consider an $\A$-ordering $\sigma$ starting with the prefix $(r,v_1, \dots, v_k)$. Since $\A$ is a clique starter, there exists such an $\A$-ordering. Let $T$ be the \cf-tree of $\sigma$. Obviously, $r$ is an internal vertex of $T$. As $(v_1, \dots, v_k)$ is a total dominating sequence, for each vertex $v_i$ with $i \in [k]$, there is a vertex in the set $N(v_i) \setminus (N(r) \cup \bigcup_{j=1}^{i-1} N(v_i))$. This vertex is a child of $v_i$ in $T$ and, thus, $v_i$ is an internal vertex of $T$.

    Now assume that there is an $\A$-ordering $\sigma$ of $G'$ whose $\cf$-tree has at least $k+1$ internal vertices. If the start vertex is an element of $X$, then all vertices of $X$ are its children. Thus, no vertex of $Y$ can be an internal vertex of $T$. If the start vertex is an element of $Y \cup \{r\}$, then the second vertex must be an element of $X \setminus \{r\}$ since $\A$ is a connected graph search. Again, no other vertex of $Y$ can be an internal vertex of $T$ as all vertices of $X$ are children of either the start vertex or the second vertex. Summarizing, there are at least $k$ internal vertices in $X \setminus \{r\}$. Let $(v_1, \dots, v_k)$ be the ordering of the first $k$ internal vertices of $X \setminus \{r\}$ in $\sigma$. We claim that $(v_1, \dots, v_k)$ is a total dominating sequence of $G$. First consider $v_1$. Since $v_1$ is not isolated in $G$, it has at least one neighbor in $Y$ and, thus, $N_G(v_1) \setminus \bigcup_{j=1}^0 N_G(v_j)$ is not empty. So consider a vertex $v_i$ with $i \geq 2$. This vertex has a child $x$ in $T$. Vertex $x$ is not adjacent to any vertex $v_j$ with $j < i$. Thus, $x$ cannot be an element of $X \cup \{r\}$ since these vertices are adjacent to $v_1$. So $x \in Y$ and, thus, $x \in N_G(v_i) \setminus \bigcup_{j=1}^{i-1} N_G(v_j)$.
\end{proof}

Similar as for the minimization problem, we can show that \MaxInF{} is \W-complete at least for GS.

\begin{theorem}\label{thm:gs-w1-complete}
    \MaxInF{} of GS is \W-complete.
\end{theorem}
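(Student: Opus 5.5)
Hardness is already given by \cref{thm:max-int-f-w-hard}, since GS is a connected clique starter. What remains is membership in \W. The plan is an FPT reduction from \MaxInF{} of GS to a problem known to lie in \W, and the natural target is \BIP{}, which is \W-complete by~\cite{scheffler2025parameterized}. The bridge is a characterization analogous to \cref{thm:forcing-domination-tree}: $G$ has a GS \cf-tree with at least $k$ internal vertices if and only if $G$ has a generic $Z$-sequence of length at least $k$. The two directions go as follows.
\begin{itemize}
\item Given such a sequence, extend it to a GS ordering. Each $v_i$ has a child in $N(v_i)\setminus\bigcup_{j<i}N[v_j]$, so it is internal.
\item Conversely, take the internal vertices in their $\sigma$-order $(v_1,\dots,v_k)$. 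A child of $v_i$ has no neighbor before $v_i$ in $\sigma$, so it witnesses the $Z$-condition. The sequence need not be a prefix of a GS ordering, but we need not require this: build a GS ordering with the internal vertices first, which is possible since the internal vertices form a connected subtree containing the root. A vertex then still has a child, because its child's earlier neighbors in the new ordering are only internal vertices, and those already precede it in $\sigma$. This gives a generic $Z$-sequence.
\end{itemize}

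Next, express ``generic $Z$-sequence of length $\geq k$'' as a one-sided total dominating sequence problem. Let $X$ and $Y$ be two copies of $V(G)$, and join $x_u$ to $y_w$ whenever $w\in N(u)$. The $Z$-condition uses closed neighborhoods on the left, i.e.\ $N(v_i)\setminus\bigcup_{j<i}N[v_j]$. To encode $N[v_j]$, also mark $v_j$ itself as dominated when $v_j$ is chosen. This can be done with the gadget of~\cite{scheffler2025parameterized} or by routing through \WCS{} directly.

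The main obstacle is the connectivity (prefix) requirement together with this closed/open asymmetry. I would therefore not go through \BIP{}, but instead show membership directly by reduction to \textsc{Weighted Circuit Satisfiability} on circuits of bounded weft and depth. Guess $k$ positions, with variables $z_{i,u}$ meaning ``$v_i=u$''. The constraints are:
\begin{itemize}
\item exactly one vertex per position;
\item connectivity, i.e.\ each $v_i$ with $i\geq 2$ is adjacent to some $v_j$ with $j<i$;
\item a witness for each $i$: some $w\in N(v_i)$ with $w\notin N[v_j]$ for all $j<i$.
\end{itemize}
The witness constraint has an existential over $w$ followed by a universal over $j$, which raises the weft. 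To avoid this, also guess the witnesses $w_i$ as weight-$k$ variables $y_{i,w}$. Every constraint then becomes a conjunction over pairs $(i,j)$ of small-fanin clauses such as $\neg z_{j,u}\vee\neg y_{i,w}$ for $w\in N[u]$, together with large ORs only of the form ``exactly one per position''. This is an antimonotone-style 2-CNF plus one-per-block constraints, which lies in \W{} by the standard multicolored clique style argument. The total weight is $2k$, so the problem is in \W. Combined with \cref{thm:max-int-f-w-hard}, this gives \W-completeness.
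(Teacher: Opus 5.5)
Your final route (hardness from \cref{thm:max-int-f-w-hard}, and membership via a direct reduction to \WCS{} that guesses the $k$ internal vertices plus one child-witness for each) is the paper's route. Your encoding, however, has a gap at the connectivity constraint.

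Guessing the $w_i$ removes the existential in the $Z$-condition. But ``$v_i$ is adjacent to some $v_j$ with $j<i$'' has its own existential over $j$. In your variables it reads $\neg z_{i,u}\vee\bigvee_{j<i}\bigvee_{u'\in N(u)} z_{j,u'}$. That is a large OR of positive literals below the top AND, so the circuit has weft~2 and you only get membership in \WT{}. Your claim that every constraint becomes a conjunction of small-fanin clauses is therefore false for this one. Likewise, the ``at least one per position'' ORs must not appear as gates. They have to come from the exact total weight together with at-most-one clauses, as in the paper's formulas $X$, $Y$, $Z$.

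Connectivity also cannot simply be dropped. Take the tree with edges $xa,xb,xy,yc,yd$. The sequence $(d,b,y,x)$ satisfies the $Z$-condition with witnesses $y,x,c,a$. But every GS \cf-tree of a tree is the tree itself rooted at the start vertex, so at most $3$ vertices are internal.

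The missing idea is to guess this existential as well. The paper adds variables $z(j,i)$ for $j<i$, meaning that the $i$-th internal vertex is adjacent to the $j$-th one (in the completeness direction, its \cf-parent). It enforces them with the small clauses $\neg z(j,i)\vee\neg x(v,j)\vee\neg x(w,i)$ for non-adjacent $v,w$, giving total weight $3k-1$.

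This is sound because the \cf-parent of each internal vertex is an earlier internal vertex. Hence the first $k$ internal vertices in $\sigma$-order already form a GS prefix, and your detour through a reordered search is unnecessary.

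In your multicolored formulation, the same fix is to let the candidates for position $i$ be triples $(u,w,j)$ with $w\in N(u)$ and $j<i$. Every remaining constraint is then a pairwise condition between two positions: $w_i\notin N[u_{i'}]$ for $i'<i$, and $u_{j_i}\in N(u_i)$. This yields a \textsc{Multicolored Clique} instance on the compatibility graph, and hence membership in \W.
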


\begin{proof}
    We have to give a \FPT{} reduction from \MaxInF{} of GS to \WCS{} for circuits of weft~1 and bounded depth. We use a similar idea as has been used in \cite{scheffler2025parameterized} to show \W-completeness of Grundy domination problems. We use three types of variables:

    \begin{itemize}
        \item $x(v,i)$ is true if and only if vertex $v$ is the $i$-th vertex in the GS ordering that is an internal vertex of the $\cf$-tree.
        \item $y(v,i)$ is true if $v$ is a child of the $i$-th vertex in the GS ordering that is an internal vertex of the $\cf$-tree.
        \item $z(j,i)$ is true if the $i$-th internal vertex in the GS ordering is adjacent to the $j$-th internal vertex with $j < i$.
    \end{itemize}

    First, we want to ensure that for every $i$ there is at most one variable of every type that is set true. To this end, we use the following three formulas:

    \begin{align*}
        X &:= \bigwedge_{i \in [k]} \bigwedge_{\substack{v,w \in V(G) \\ v \neq w}} \big(\lnot x(v,i) \lor \lnot x(w,i)\big) \\
        Y &:= \bigwedge_{i \in [k]} \bigwedge_{\substack{v,w \in V(G) \\ v \neq w}} \big(\lnot y(v,i) \lor \lnot y(w,i)\big) \\
        Z &:= \bigwedge_{\substack{i \in [k] \setminus \{1\}}} \bigwedge_{\substack{j,j' \in [i-1] \\ j \neq j'}} \big(\lnot z(j,i) \lor \lnot z(j',i)\big)
    \end{align*}

    Next, we ensure that no vertex appears twice in the ordering of the internal vertices.
    \begin{align*}
        X' &:= \bigwedge_{v \in V(G)} \bigwedge_{\substack{i,j \in [k] \\ i \neq j}} \big(\lnot x(v,i) \lor \lnot x(v,j)\big)
    \end{align*}

    Now, we have to ensure that the children are in fact children in the $\cf$-tree.
    \begin{align*}
        A &:= \bigwedge_{i \in [k]} \bigwedge_{j \in [i-1]} \bigwedge_{v \in V(G)} \bigwedge_{w \in N[v]} \big( \lnot y(w,i) \lor \lnot x(v,j) \big) \\
        B &:= \bigwedge_{i \in [k]} \bigwedge_{v \in V(G)} \bigwedge_{w \notin N(v)} \big( \lnot y(w,i) \lor \lnot x(v,i) \big)
    \end{align*}

    Formula $A$ ensures that if $w$ is the child of the $i$-th internal vertex, then it is neither adjacent nor equal to one internal vertex before the $i$-th vertex. Formula $B$ ensures that if $w$ is the child of the $i$-th internal vertex, then the $i$-th internal vertex is neither $w$ nor a non-neighbor of $w$.

    Finally, we have to ensure that the assignment of the $z$-variables match the adjacencies in the graph.
    \begin{align*}
        C := \bigwedge_{i \in [k] \setminus \{1\}} \bigwedge_{j \in [i-1]} \bigwedge_{v \in V(G)} \bigwedge_{w \notin N(v)} \big(\lnot z(j,i) \lor \lnot x(v,j) \lor \lnot x(w,i)\big)
    \end{align*}

    The final formula is $F := X \land Y \land Z \land A \land B \land C$. It is easy to check that the circuit has weft~1 and depth $\leq 4$.

    We claim that the circuit has a satisfying assignment of weight $3k - 1$ if and only if the graph $G$ has a $\cf$-tree of GS with at least $k$ internal vertices. 
    
    First assume there is a satisfying assignment of weight $3k - 1$. As mentioned, $X$, $Y$, and $Z$ ensure that for every $i$, there is at most one true $x$-, $y$-, and $z$-variable (while for $i = 1$ there are no $z$-variables). So if the assignment has weight $3k - 1$, there are exactly $k$ true $x-$variables and $k$ true $y$-variables as well as $k-1$ true $z$-variables. Due to $X'$, the $k$ vertices with a true $x$-variable are distinct. Let $v_1, \dots, v_k$ be the vertices such that $x(v_i, i)$ is true for all $i \in [k]$. For every $i \in [k] \setminus \{1\}$, there is a $j < i$ such that $z(j,i)$ is true. Due to $C$, it then holds that $v_i$ and $v_j$ are adjacent. Thus, $\sigma = (v_1, \dots, v_k)$ is the prefix of a GS ordering. It remains to show that the \cf-tree $T$ of every GS ordering with prefix $\sigma$ has at least $k$ internal vertices. For every $i \in [k]$, there is a $w$ such that $y(w,i)$ is true. Due to $A$, none of the vertices $v_1, \dots, v_{i-1}$ is adjacent or equal to $w$. Due to $B$, $v_i$ is adjacent to $w$. Thus, $w$ is a child of $v_i$ in $T$ and $v_i$ is an internal vertex of $T$. Hence, $T$ has at least $k$ internal vertices.

    Now assume that $\sigma$ is a GS ordering of $G$ whose $\cf$-tree $T$ has at least $k$ internal vertices. Let $(v_1, \dots, v_k)$ be the ordering of the first internal vertices of $T$ in $\sigma$. We set $x(v_i,i)$ to true for every $i \in [k]$. Let for every $i \in [k]$, $w_i$ be a child of $v_i$ in $T$. We set $y(w_i,i)$ to true. Note that the parent of vertex $v_i$ is an internal vertex and is to the left of $v_i$ in $\sigma$. Hence, the parent of $v_i$ with $i > 2$ is some vertex $v_j$ with $j < i$. We set $z(j,i)$ to true. All other variables are set to false. It is obvious that this assignment has weight $3k-1$. Furthermore, it is easy to check that it satisfies $F$.
\end{proof}

Again, we have to leave open whether \MaxInF{} of other clique starters is in \W. Nevertheless, \NP-completeness holds as long as the orderings of the search can be recognized in polynomial time.

\subsection{Algorithms}\label{sec:inner-algorithm}

The results of the last section imply that we do not have to look for \FPT{} algorithms for \MinInF{} or \MaxInF{} for any graph search that is a clique starter. Here, we will consider the existence of \XP{} algorithms for these problems. As seen above (see \cref{thm:tree-dom-equiv}), \MinInF{} of GS is equivalent to \ConDom{} which can be solved straightforwardly in $n^{\Oc(k)}$ time. The proof of \cref{thm:gs-w1-complete} presents a reduction of \MaxInF{} of GS to \WCS{} that increases the parameter only linearly. Since \WCS{} can be solved in $n^{\Oc(k)}$ time, this implies the following.

\begin{theorem}
    \MinInF{} and \MaxInF{} of GS can be solved in $n^{\Oc(k)}$ time.
\end{theorem}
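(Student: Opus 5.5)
We prove the two parts separately, following the remarks made just before the statement.

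\textbf{\MinInF{} of GS.} By \cref{corol:f-max-gs}, a GS \cf-tree with at most $k$ internal vertices (that is, at least $n-k$ leaves) exists if and only if $G$ has a connected dominating set of size at most $k$. One small point needs checking here: the root always counts as internal, so for $n \geq 2$ the statement ``at most $k$ internal vertices'' is exactly ``at least $n-k$ leaves''. The trivial case $n=1$ can be handled directly.

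So the algorithm proceeds as follows.
\begin{enumerate}
\item Enumerate all vertex subsets $D$ with $|D| \le k$; there are at most $\sum_{i\le k}\binom{n}{i} = n^{\Oc(k)}$ of them.
\item For each $D$, test in polynomial time whether $G[D]$ is connected and whether $D$ dominates $G$.
\item If some $D$ passes, output it, or, using the constructive direction of \cref{thm:tree-dom-equiv}, output a GS ordering of $G[D]$ extended to a GS ordering of $G$.
\end{enumerate}
The total running time is $n^{\Oc(k)}$.

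\textbf{\MaxInF{} of GS.} We would use the reduction from the proof of \cref{thm:gs-w1-complete}. The circuit $F$ has $\Oc(kn + k^2)$ variables and polynomial size, and satisfying assignments of weight exactly $3k-1$ correspond to GS \cf-trees with at least $k$ internal vertices.

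To solve \WCS{} for this instance, enumerate all weight-$(3k-1)$ assignments, at most $\binom{\Oc(kn+k^2)}{3k-1}$ of them, and evaluate $F$ on each in polynomial time. Since $k \le n$, this count is $(kn+k^2)^{\Oc(k)} = n^{\Oc(k)}$.

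We could also exploit the structure directly. It suffices to guess the first $k$ internal vertices $v_1,\dots,v_k$ in order, together with one child $w_i$ for each $v_i$, which gives $n^{2k}$ guesses. For each guess we check two things:
\begin{itemize}
\item $(v_1,\dots,v_k)$ is a GS prefix, meaning each $v_i$ with $i\ge2$ has a neighbor among $v_1,\dots,v_{i-1}$;
\item each $w_i$ is adjacent to $v_i$ and is neither equal nor adjacent to any $v_j$ with $j<i$.
\end{itemize}
This is exactly what the proof of \cref{thm:gs-w1-complete} shows to be sufficient and necessary: any GS ordering extending this prefix makes each $v_i$ internal.

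\textbf{Main obstacle.} The only step needing care is confirming that the reduction's parameter blow-up and instance size are polynomial, so that brute force really stays within $n^{\Oc(k)}$. I would make this transparent by presenting the direct $n^{2k}\cdot\mathrm{poly}(n)$ guessing algorithm above.
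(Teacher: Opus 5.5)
Your proposal is correct and follows the paper's argument: \MinInF{} is reduced to \ConDom{} and solved by enumerating vertex sets of size at most $k$, and \MaxInF{} is reduced to the weight-$(3k-1)$ \WCS{} instance from the proof of \cref{thm:gs-w1-complete}, which is then solved by exhaustive search. Your direct $n^{2k}\cdot\mathrm{poly}(n)$ algorithm, which guesses $v_1,\dots,v_k$ and one child $w_i$ for each $v_i$, is just an explicit unfolding of that brute force, and your check of it is sound (the parent of each $v_i$ with $i\ge 2$ is an internal vertex preceding it, so it lies among $v_1,\dots,v_{i-1}$).
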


Next we will give a similar result for BFS. To this end, we show that a BFS \cf-tree with a certain set of internal vertices $S$ can be computed using only the knowledge about the ordering of the vertices in $S$.

\begin{lemma}\label{lemma:bfs-xp}
    Let $T_\sigma$ be the \cf-tree of some BFS ordering $\sigma$ and let $S$ be the set of the first $k$ internal vertices of $T_\sigma$. Let $\rho'$ be the ordering of $S$ in $\sigma$ and let $\rho$ be a vertex ordering of $G$ starting with $\rho'$. Let $\tau$ be the $\text{BFS}^+_\rho$ ordering of $G$ and let $T_\tau$ be its \cf-tree. Then, every $v \in S$ has the same children in $T_\sigma$ and $T_\tau$.
\end{lemma}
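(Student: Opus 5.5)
The approach is an induction along $\rho' = (s_1, \dots, s_k)$. The induction shows that $\tau$ processes $s_1, \dots, s_k$ in this order, that no vertex of $\tau$ before $s_i$ other than $s_1, \dots, s_{i-1}$ has a child in $T_\tau$, and that each $s_i$ receives the same children in both trees.

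The first step is a characterization of children in an \cf-tree of a BFS (in fact of any search) ordering. A vertex $w$ is a child of $v$ if and only if $w \in N(v)$, $v$ precedes $w$, and $w$ has no neighbor before $v$. I would show that this is equivalent to
\[ w \in N(v) \setminus \bigcup_{u \text{ internal},\, u \prec v} N[u]. \]
For one direction, suppose $w$ has a neighbor or equals a vertex $u \prec v$ that is a leaf. Then the parent of $w$ is at or before $u$. If $w = u$ or the parent is strictly before $u$, that parent is an internal vertex before $v$. Otherwise the parent is $u$ itself, which contradicts $u$ being a leaf. Hence the children of $s_i$ in $T_\sigma$ are exactly $N(s_i) \setminus \bigcup_{j<i} N[s_j]$, since $s_1, \dots, s_{i-1}$ are precisely the internal vertices before $s_i$ in $\sigma$. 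The same identity holds in $T_\tau$ once the induction shows that the internal vertices of $\tau$ before $s_i$ are exactly $s_1, \dots, s_{i-1}$. This reduces the lemma to that claim. Also $s_1$ is the root of $T_\sigma$, because the root is internal for $n \ge 2$, and $s_1$ is the start of $\tau$ because it is leftmost in $\rho$.

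For the induction step, assume the claim for $s_1, \dots, s_{i-1}$, so these have the same children in both trees. In $\sigma$, the vertex $s_i$ is a child of some $s_j$ with $j<i$, so it is also a child of $s_j$ in $T_\tau$. Since $\rho$ puts $S$ first, $s_i$ precedes every sibling outside $S$ in $\tau$; siblings inside $S$ keep their $\rho'$ order.

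Now take any vertex $x \notin S$ with $x \prec_\tau s_i$. By the BFS queue discipline, the parent of $x$ in $T_\tau$ is at or before $s_j$, so it is some $s_l$ with $l \le j$, and by the hypothesis $x$ is also a child of $s_l$ in $T_\sigma$. If $l<j$, BFS in $\sigma$ gives $x \prec_\sigma s_i$. If $l=j$, the $\rho$-tie-break excludes $x \notin S$. So $x$ precedes $s_i$ in $\sigma$ and is not in $S$, hence $x$ is a leaf of $T_\sigma$. By the characterization, $N(x) \subseteq \bigcup_{j'<i} N[s_{j'}]$. To conclude that $x$ is also a leaf of $T_\tau$, I need that every neighbor of $x$ is already covered by some $s_{j'}$ processed before $x$ in $\tau$, not merely before $s_i$.

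I expect this last point to be the main obstacle: transferring ``leaf'' from $\sigma$ to $\tau$ requires comparing the relative positions of $x$ and the $s_{j'}$ in both orderings. I would first try to strengthen the induction hypothesis so that it also fixes the positions of the $s_{j'}$ among all vertices of the first $i-1$ queue segments. Concretely, I would carry along that every vertex dequeued before $s_i$ in $\tau$ has all its neighbors inside $\{s_1, \dots, s_{i-1}\} \cup \bigcup_{j'<i} N(s_{j'})$. The argument for that would exploit the fact that BFS processes vertices in the same layer order in both $\sigma$ and $\tau$.
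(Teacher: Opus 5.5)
Your reduction is sound. The identity that the children of $v$ are exactly $N(v)\setminus\bigcup_{u\text{ internal},\,u\prec v}N[u]$ holds for every search ordering. So the lemma does reduce to showing that the internal vertices of $T_\tau$ preceding $s_i$ in $\tau$ are exactly $s_1,\dots,s_{i-1}$. Your argument that every $x\notin S$ with $x\prec_\tau s_i$ satisfies $x\prec_\sigma s_i$ and is a leaf of $T_\sigma$ is also correct.

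The gap is the decisive step: that such an $x$ is also a leaf of $T_\tau$. You leave this open, and the strengthening you propose does not close it. The invariant that every vertex dequeued before $s_i$ has all its neighbors in $\{s_1,\dots,s_{i-1}\}\cup\bigcup_{j'<i}N(s_{j'})$ is just the inclusion $N(x)\subseteq\bigcup_{j'<i}N[s_{j'}]$ you already derived. It says nothing about positions in $\tau$. So it cannot rule out a neighbor $y$ of $x$ whose covering vertices $s_{j'}$ all come after $x$ in $\tau$, and which therefore gets $x$ as its parent in $T_\tau$.

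There is a smaller omission as well. You assert that $\tau$ visits $s_1,\dots,s_k$ in this order, but never show that all $s_m$ with $m<i$, and no $s_m$ with $m>i$, precede $s_i$ in $\tau$. This is needed for the internal vertices before $s_i$ in $\tau$ to be exactly $s_1,\dots,s_{i-1}$. It does follow from the same parent-comparison and $\rho$-tie-break argument you used for $x\notin S$.

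The missing idea is to apply your induction hypothesis to \emph{children}, not to neighborhoods:
\begin{enumerate}
\item Suppose $y$ is a child of $x$ in $T_\tau$. Then $y\neq s_1$, so $y$ has a parent $p$ in $T_\sigma$.
\item Since $x\in N(y)$, we have $p\preceq_\sigma x$. As $x$ is a leaf of $T_\sigma$, in fact $p\prec_\sigma x\prec_\sigma s_i$.
\item Being internal in $T_\sigma$ and preceding $s_i\in S$, $p$ equals some $s_{j'}$ with $j'<i$.
\item By the hypothesis, $y$ is then also a child of $s_{j'}\neq x$ in $T_\tau$, a contradiction.
\end{enumerate}

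The paper gets the same effect differently. It inducts over BFS layers and carries a positional invariant: within a layer, a vertex of $S$ that precedes $w$ in $\sigma$ also precedes $w$ in $\tau$. Together with layering, this is exactly the statement ``the covering $s_{j'}$ precedes $x$ in $\tau$'' that you were missing. As written, however, your proposal does not yet prove the lemma.
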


\begin{proof}
     First observe that the start vertex of $\tau$ is the first vertex of $\rho$ and the first vertex of $\rho$ is the first vertex of $\rho'$. As the start vertex is by definition an internal vertex of the \cf-tree, it holds that the start vertex of $\sigma$ is an element of $S$. As $\sigma$ contains $\rho'$ as subordering, it must hold that $\sigma$ also starts with the first vertex of $\rho'$. Thus, $\sigma$ and $\tau$ start with the same vertex and, consequently, the layers of $\sigma$ and $\tau$ are identical.
      
    We show by induction that for every layer the following statements are true:

    \begin{enumerate}[(T1)]
        \item If $v$ and $w$ are in layer~$i$, $v \in S$ and $v \prec_\sigma w$, then $v \prec_\tau w$.\label{item:bfs-xp1}
        \item Every vertex of $S$ in layer~$i$ has the same set of children in $T_\sigma$ as in $T_\tau$.\label{item:bfs-xp2}
    \end{enumerate}

    Note that these statements are trivially true for the zeroth layer as $\sigma$ and $\tau$ have the same start vertex $r$ and all elements of the first layer are the children of $r$. So we may assume that the two statements hold for some layer~$i$. First, we prove statement~(T\ref{item:bfs-xp1}) for layer~$i+1$. Let $v$ and $w$ be in layer~$i+1$ such that $v \in S$ and $v \prec_\sigma w$.
    Since layer~$i+1$ contains at least one of the first $k$ internal vertices of $T$, the parents of $v$ and $w$ are also in $S$. Since (T\ref{item:bfs-xp2}) holds for layer~$i$, $v$ has the same parent $p_v$ in $T_\tau$ and $T_\sigma$ and $w$ has the same parent $p_w$ in $T_\tau$ and $T_\sigma$. If $p_v$ and $p_w$ are not identical, then $p_v$ is before $p_w$ in $\sigma$. As (T\ref{item:bfs-xp1}) holds for layer~$i$, $p_v$ is before $p_w$ also in $\tau$ and, thus, it follows that $v \prec_\tau w$. So we may assume that $p_v$ and $p_w$ are the same vertex. Since $v \prec_\sigma w$ and $v \in S$, it holds that $v \prec_\rho w$ (either $w$ is to the right of $v$ in $\rho'$ or $w$ is not in $S$ and, thus, to the right of $v$ in $\sigma$). Thus, the tie-breaking of $\text{BFS}^+_\rho$ puts $v$ before $w$ in $\tau$.

    It remains to show that (T\ref{item:bfs-xp2}) holds for layer~$i+1$. Let $x$ be an element of layer~$i+2$ and assume, for contradiction, that the parent of $x$ in $T_\sigma$ is $p_{\sigma}$, the parent of $x$ in $T_\tau$ is $p_{\tau} \neq p_{\sigma}$ and at least one of $p_{\sigma}$ and $p_{\tau}$ is in $S$. Observe, that $p_{\sigma}$ has to be an element of $S$, because either $p_{\sigma}$ is the only element of $\{p_{\sigma},p_{\tau}\}$ contained in $S$, or, if $p_{\tau}$ is in $S$, then $p_{\sigma}$ has to be in $S$ as well, because it is an internal vertex of $T_\sigma$, preceding $p_{\tau} \in S$ in $\sigma$. Furthermore, $p_{\sigma} \prec_\sigma p_{\tau}$ but $p_{\tau} \prec_\tau p_{\sigma}$. This contradicts (T\ref{item:bfs-xp1}) for layer~$i+1$.
\end{proof}

Using this result, we can give \XP{} algorithms for \MinInF{} and \MaxInF{} of BFS.

\begin{theorem}\label{thm:bfs-xp}
    \MinInF{} and \MaxInF{} of BFS can be solved in $\Oc(n^{k+2})$ time.
\end{theorem}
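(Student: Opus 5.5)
The algorithm enumerates all ordered sequences $\rho' = (s_1,\dots,s_j)$ of at most $k$ (for \MaxInF{}: exactly $k$) distinct vertices. There are $\Oc(n^k)$ of them. For each sequence we extend $\rho'$ arbitrarily to a full vertex ordering $\rho$ of $G$, run $\text{BFS}^+_\rho$ in linear time to obtain $\tau$, compute its \cf-tree $T_\tau$, and count internal vertices in $\Oc(n+m)$ time. For \MaxInF{} we accept if some $T_\tau$ has at least $k$ internal vertices. For \MinInF{} we accept if some $T_\tau$ has at most $k$ internal vertices. Soundness is immediate, since every $\tau$ produced is a genuine BFS ordering. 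The running time is $\Oc(n^k)\cdot\Oc(n+m)\subseteq\Oc(n^{k+2})$.

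For completeness of \MaxInF{}, let $\sigma$ be a BFS ordering whose \cf-tree has at least $k$ internal vertices. Let $S$ be the first $k$ internal vertices in $\sigma$ and $\rho'$ their order in $\sigma$; this $\rho'$ is enumerated. By \cref{lemma:bfs-xp}, every $v\in S$ has the same children in $T_\tau$ as in $T_\sigma$, hence at least one child, so all $k$ vertices of $S$ are internal in $T_\tau$. Thus $T_\tau$ has at least $k$ internal vertices.

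For completeness of \MinInF{}, let $\sigma$ be a BFS ordering whose \cf-tree has $j\le k$ internal vertices. Take $S$ to be all of them, so $|S|=j$, and apply \cref{lemma:bfs-xp} with $j$ in place of $k$. The lemma speaks of "the first $k$ internal vertices", but its proof only uses that $S$ is a prefix, in $\sigma$, of the internal vertices. So every $v\in S$ keeps exactly its children. The main obstacle is then to exclude new internal vertices outside $S$ in $T_\tau$. This follows by counting. In $T_\sigma$ the vertices of $S$ have, together, all non-root vertices as children, and these children sets are disjoint; they are the same sets in $T_\tau$. Moreover, both trees are rooted at the first vertex of $\rho'$, as shown in the lemma's proof. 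Hence every non-root vertex already has its parent in $S$ in $T_\tau$. A vertex outside $S$ therefore has no child in $T_\tau$, and $T_\tau$ has exactly $j\le k$ internal vertices.

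Two small points remain. The enumeration must include sequences of every length $j\le k$, which adds only $\Oc(n^k)$ cases in total. The degenerate case $n=1$ is handled directly.
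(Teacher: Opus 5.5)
Your proposal is correct and follows the paper's proof. The paper also enumerates the $\Oc(n^k)$ ordered candidate sets $\rho'$, runs $\text{BFS}^+_\rho$ once for each, and uses \cref{lemma:bfs-xp} for completeness. Your \MinInF{} argument (instantiating the lemma with the actual number $j$ of internal vertices and showing that no vertex outside $S$ gains a child in $T_\tau$) spells out what the paper compresses into a single sentence.
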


\begin{proof}
    For \MinInF{}, we consider every choice of $\leq k$ internal vertices $S$ and every possible ordering $\rho'$ of the vertices in $S$. There are $\Oc(n^k)$ choices of internal vertices and orderings. So in total we have to check $\Oc(n^k)$ choices. For every of these choices, we compute the $\text{BFS}^+_\rho$ ordering $\sigma$ for some ordering $\rho$ starting with $\rho'$. If $\sigma$ has an \cf-tree with $\leq k$ internal vertices, we stop and return yes. If not, then we consider the next choice. If we check all choices without finding a correct \cf-tree, then we reject the input. If there is a BFS \cf-tree with at most $k$ internal vertices $S$, then all vertices outside of $S$ must be children of some vertex in $S$. Then, due to \cref{lemma:bfs-xp}, this also holds in the BFS \cf-tree constructed by our procedure for the respective ordering $\rho'$ of $S$.

    The algorithm for \MaxInF{} works analogously. Computing the BFS$^+_\rho$ ordering and its $\cf$-tree as well as the number of internal vertices can be done in $\Oc(n^2)$ time. This implies the given running time bound.
\end{proof}

Due to \cref{thm:np-lbfs}, we cannot give such an algorithm for \MinInF{} of LBFS, unless $\sP = \NP$. This difference in the complexity is in line with the complexity of the \cf-tree recognition problem which is polynomial-time solvable for GS and BFS~\cite{hagerup1985recognition,scheffler2025partial} but \NP-hard for LBFS~\cite{beisegel2021recognition}.

An adaptation of \cref{thm:bfs-xp} for \MaxInF{} of LBFS seems also to be difficult. 
In contrast to BFS, the ordering of the leaves of the tree has a significant influence on the ordering of the following layers (see~\cref{fig:counterexample-lbfs}).

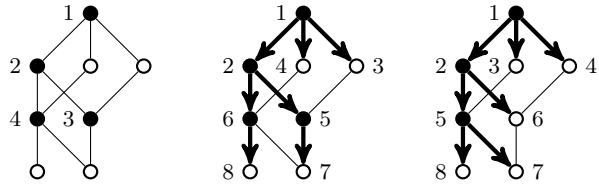
\begin{figure}
    \centering
    \begin{tikzpicture}[xscale=0.7, yscale=0.7]
\footnotesize
\node[vertex, fill, label=180:$1$] (1) at (5,5) {};
\node[vertex, fill, label=180:$2$] (2) at (4,4) {};
\node[vertex] (3) at (5,4) {};
\node[vertex] (4) at (6,4) {};
\node[vertex, fill, label=180:$4$] (5) at (4,3) {};
\node[vertex, fill, label=180:$3$] (6) at (5,3) {};
\node[vertex] (7) at (4,2) {};
\node[vertex] (8) at (5,2) {};

\draw[normaledge] (1) -- (2);
\draw[normaledge] (1) -- (3);
\draw[normaledge] (1) -- (4);
\draw[normaledge] (2) -- (5);
\draw[normaledge] (2) -- (6);
\draw[normaledge] (3) -- (5);
\draw[normaledge] (4) -- (6);
\draw[normaledge] (5) -- (7);
\draw[normaledge] (5) -- (8);
\draw[normaledge] (6) -- (8);

\begin{scope}[xshift=8cm]
\node[vertex, fill, label=180:$1$] (1) at (5,5) {};
\node[vertex, fill, label=180:$2$] (2) at (4,4) {};
\node[vertex, label=180:$3$] (3) at (5,4) {};
\node[vertex, label=0:$4$] (4) at (6,4) {};
\node[vertex, fill, label=180:$5$] (5) at (4,3) {};
\node[vertex, label=0:$6$] (6) at (5,3) {};
\node[vertex, label=180:$8$] (7) at (4,2) {};
\node[vertex, label=0:$7$] (8) at (5,2) {};

\draw[treeedge] (1) -- (2);
\draw[treeedge] (1) -- (3);
\draw[treeedge] (1) -- (4);
\draw[treeedge] (2) -- (5);
\draw[treeedge] (2) -- (6);
\draw[normaledge] (3) -- (5);
\draw[normaledge] (4) -- (6);
\draw[treeedge] (5) -- (7);
\draw[treeedge] (5) -- (8);
\draw[normaledge] (6) -- (8);
\end{scope}

\begin{scope}[xshift=4cm]
\node[vertex, fill, label=180:$1$] (1) at (5,5) {};
\node[vertex, fill, label=180:$2$] (2) at (4,4) {};
\node[vertex, label=180:$4$] (3) at (5,4) {};
\node[vertex, label=0:$3$] (4) at (6,4) {};
\node[vertex, fill, label=180:$6$] (5) at (4,3) {};
\node[vertex, fill, label=0:$5$] (6) at (5,3) {};
\node[vertex, label=180:$8$] (7) at (4,2) {};
\node[vertex, label=0:$7$] (8) at (5,2) {};

\draw[treeedge] (1) -- (2);
\draw[treeedge] (1) -- (3);
\draw[treeedge] (1) -- (4);
\draw[treeedge] (2) -- (5);
\draw[treeedge] (2) -- (6);
\draw[normaledge] (3) -- (5);
\draw[normaledge] (4) -- (6);
\draw[treeedge] (5) -- (7);
\draw[normaledge] (5) -- (8);
\draw[treeedge] (6) -- (8);
\end{scope}
\end{tikzpicture}
    \caption{Example that shows that the idea of \cref{lemma:bfs-xp} is not valid for LBFS. The left figure shows the input graph. The filled vertices are the elements of $S$ and the given numbers describe the fixed ordering of $S$. The graph in the middle shows an LBFS ordering whose \cf-tree fulfills the condition on its internal vertices. However, the right graph shows an LBFS ordering that could be computed by LBFS$^+$ but has less internal vertices. In particular, vertices $3$ and $4$ of the left graph have different children in the trees of the middle and the right graph.}
    \label{fig:counterexample-lbfs}
\end{figure}

\section{Conclusion}\label{sec:conclusion}

We have studied the Maximum (Minimum) Leaf Spanning Tree problem for various search algorithms. The achieved results can be contrasted with the results of Bergougnoux et al.~\cite{bergougnoux2025parameterized}, who studied the same problems for DFS trees. The complexity is effectively inverted, i.e., the problems for DFS are \W-hard when parameterized by the number of leaves, but in \FPT{} for GS and BFS-type searches and vice versa. This coincides with the intuitive assumption that a DFS will commonly lead to few leaves while a BFS yields many leaves.

\Cref{tab:results-tree} shows some remaining questions: We are rather sure that we can adapt the para-\NP-hardness proof of \MinInF{} to other clique starters, such as Maximum Cardinality Search (MCS) or Maximal Neighborhood Search (MNS). We also expect that \MaxInF{} is para-\NP-hard for LBFS for larger values of~$k$ (the case of $k=3$ can be solved by enumeration).

All of the results shown here are for first-in trees only. It is possible to state the same problems for last-in trees as well. This problem is addressed for a variety of searches in a separate article which is in preparation.

\bibliography{many-leaves}

\begin{thebibliography}{10}

\bibitem{aazami2008hardness}
Ashkan Aazami.
\newblock {\em Hardness results and approximation algorithms for some problems on graphs}.
\newblock PhD thesis, University of Waterloo, 2008.
\newblock URL: \url{http://hdl.handle.net/10012/4147}.

\bibitem{barioli2013parameters}
Francesco Barioli, Wayne Barrett, Shaun~M. Fallat, H.~Tracy Hall, Leslie Hogben, Bryan~L. Shader, Pauline van~den Driessche, and Hein van~der Holst.
\newblock Parameters related to tree-width, zero forcing, and maximum nullity of a graph.
\newblock {\em J. Graph Theory}, 72(2):146--177, 2013.
\newblock \href {https://doi.org/10.1002/JGT.21637} {\path{doi:10.1002/JGT.21637}}.

\bibitem{beisegel2019end}
Jesse Beisegel, Carolin Denkert, Ekkehard K{\"o}hler, Matja{\v{z}} Krnc, Nevena Piva{\v{c}}, Robert Scheffler, and Martin Strehler.
\newblock On the end-vertex problem of graph searches.
\newblock {\em Discrete Math. Theor. Comput. Sci.}, 21, 2019.
\newblock \href {https://doi.org/10.23638/DMTCS-21-1-13} {\path{doi:10.23638/DMTCS-21-1-13}}.

\bibitem{beisegel2021recognition}
Jesse Beisegel, Carolin Denkert, Ekkehard K{\"{o}}hler, Matja{\v z} Krnc, Nevena Piva{\v c}, Robert Scheffler, and Martin Strehler.
\newblock The recognition problem of graph search trees.
\newblock {\em {SIAM} J. Discrete Math.}, 35(2):1418--1446, 2021.
\newblock \href {https://doi.org/10.1137/20M1313301} {\path{doi:10.1137/20M1313301}}.

\bibitem{beisegel2024graph}
Jesse Beisegel, Ekkehard K{\"{o}}hler, Fabienne Ratajczak, Robert Scheffler, and Martin Strehler.
\newblock Graph search trees and the intermezzo problem.
\newblock In {\em {MFCS} 2024}, volume 306 of {\em LIPIcs}, pages 22:1--22:18, 2024.
\newblock \href {https://doi.org/10.4230/LIPICS.MFCS.2024.22} {\path{doi:10.4230/LIPICS.MFCS.2024.22}}.

\bibitem{bergougnoux2025parameterized}
Benjamin Bergougnoux, Nello Blaser, Michael~R. Fellows, Petr~A. Golovach, Frances~A. Rosamond, and Emmanuel Sam.
\newblock On the parameterized complexity of lineal topologies (depth-first spanning trees) with many or few leaves.
\newblock {\em J. Comput. Syst. Sci}, 154:103680, 2025.
\newblock \href {https://doi.org/10.1016/J.JCSS.2025.103680} {\path{doi:10.1016/J.JCSS.2025.103680}}.

\bibitem{bhyravarapu2025parameterized}
Sriram Bhyravarapu, Lawqueen Kanesh, Madhumita Kundu, Daniel Lokshtanov, and Saket Saurabh.
\newblock Parameterized algorithms for power edge set and zero forcing set.
\newblock In {\em {IWOCA} 2025}, pages 349--361, 2025.
\newblock \href {https://doi.org/10.1007/978-3-031-98740-3_25} {\path{doi:10.1007/978-3-031-98740-3_25}}.

\bibitem{bodlaender1993linear}
Hans~L. Bodlaender.
\newblock On linear time minor tests with depth-first search.
\newblock {\em J. Algorithms}, 14(1):1--23, 1993.
\newblock \href {https://doi.org/10.1006/JAGM.1993.1001} {\path{doi:10.1006/JAGM.1993.1001}}.

\bibitem{bonsma2008spanning}
Paul~S. Bonsma and Florian Zickfeld.
\newblock Spanning trees with many leaves in graphs without diamonds and blossoms.
\newblock In {\em {LATIN} 2008}, volume 4957 of {\em LNCS}, pages 531--543, 2008.
\newblock \href {https://doi.org/10.1007/978-3-540-78773-0_46} {\path{doi:10.1007/978-3-540-78773-0_46}}.

\bibitem{bresar2017grundy}
Bo\v{s}tjan Bre\v{s}ar, Csilla Bujt{\'{a}}s, Tanja Gologranc, Sandi Klav\v{z}ar, Ga\v{s}per Ko\v{s}mrlj, Bal{\'{a}}zs Patk{\'{o}}s, Zsolt Tuza, and M{\'{a}}t{\'{e}} Vizer.
\newblock Grundy dominating sequences and zero forcing sets.
\newblock {\em Discrete Optim.}, 26:66--77, 2017.
\newblock \href {https://doi.org/10.1016/J.DISOPT.2017.07.001} {\path{doi:10.1016/J.DISOPT.2017.07.001}}.

\bibitem{bresar2014dominating}
Bo\v{s}tjan Bre\v{s}ar, Tanja Gologranc, Martin Milani\v{c}, Douglas~F. Rall, and Romeo Rizzi.
\newblock Dominating sequences in graphs.
\newblock {\em Discrete Math.}, 336:22--36, 2014.
\newblock \href {https://doi.org/10.1016/J.DISC.2014.07.016} {\path{doi:10.1016/J.DISC.2014.07.016}}.

\bibitem{bresar2016total}
Bo\v{s}tjan Bre\v{s}ar, Michael~A. Henning, and Douglas~F. Rall.
\newblock Total dominating sequences in graphs.
\newblock {\em Discrete Math.}, 339(6):1665--1676, 2016.
\newblock \href {https://doi.org/10.1016/J.DISC.2016.01.017} {\path{doi:10.1016/J.DISC.2016.01.017}}.

\bibitem{charbit2014influence}
Pierre Charbit, Michel Habib, and Antoine Mamcarz.
\newblock Influence of the tie-break rule on the end-vertex problem.
\newblock {\em Discrete Math. Theor. Comput. Sci.}, 16(2):57--72, 2014.
\newblock \href {https://doi.org/10.46298/DMTCS.2081} {\path{doi:10.46298/DMTCS.2081}}.

\bibitem{corneil2016tie}
Derek~G. Corneil, J{\'{e}}r{\'{e}}mie Dusart, Michel Habib, Antoine Mamcarz, and Fabien de~Montgolfier.
\newblock A tie-break model for graph search.
\newblock {\em Discrete Appl. Math.}, 199:89--100, 2016.
\newblock \href {https://doi.org/10.1016/J.DAM.2015.06.011} {\path{doi:10.1016/J.DAM.2015.06.011}}.

\bibitem{corneil2010end}
Derek~G. Corneil, Ekkehard K{\"o}hler, and Jean-Marc Lanlignel.
\newblock On end-vertices of {L}exicographic {B}readth {F}irst {S}earches.
\newblock {\em Discrete Appl. Math.}, 158(5):434--443, 2010.
\newblock \href {https://doi.org/10.1016/j.dam.2009.10.001} {\path{doi:10.1016/j.dam.2009.10.001}}.

\bibitem{corneil2008unified}
Derek~G. Corneil and Richard Krueger.
\newblock A unified view of graph searching.
\newblock {\em {SIAM} J. Discrete Math.}, 22(4):1259--1276, 2008.
\newblock \href {https://doi.org/10.1137/050623498} {\path{doi:10.1137/050623498}}.

\bibitem{cygan2015param}
Marek Cygan, Fedor~V. Fomin, {\L}ukasz Kowalik, Daniel Lokshtanov, D{\'a}niel Marx, Marcin Pilipczuk, Micha{\l} Pilipczuk, and Saket Saurabh.
\newblock {\em Parameterized Algorithms}.
\newblock Springer, 2015.
\newblock \href {https://doi.org/10.1007/978-3-319-21275-3} {\path{doi:10.1007/978-3-319-21275-3}}.

\bibitem{downey1995fixed}
Rodney~G. Downey and Michael~R. Fellows.
\newblock Fixed-parameter tractability and completeness~{I:} basic results.
\newblock {\em {SIAM} J. Comput.}, 24(4):873--921, 1995.
\newblock \href {https://doi.org/10.1137/S0097539792228228} {\path{doi:10.1137/S0097539792228228}}.

\bibitem{downey2013fundamentals}
Rodney~G. Downey and Michael~R. Fellows.
\newblock {\em Fundamentals of Parameterized Complexity}.
\newblock Texts in Computer Science. Springer, 2013.
\newblock \href {https://doi.org/10.1007/978-1-4471-5559-1} {\path{doi:10.1007/978-1-4471-5559-1}}.

\bibitem{fellows1992well}
Michael~R. Fellows and Michael~A. Langston.
\newblock On well-partial-order theory and its application to combinatorial problems of {VLSI} design.
\newblock {\em {SIAM} J. Discrete Math.}, 5(1):117--126, 1992.
\newblock \href {https://doi.org/10.1137/0405010} {\path{doi:10.1137/0405010}}.

\bibitem{fellows2000coordinatized}
Michael~R. Fellows, Catherine McCartin, Frances~A. Rosamond, and Ulrike Stege.
\newblock Coordinatized kernels and catalytic reductions: An improved {FPT} algorithm for max leaf spanning tree and other problems.
\newblock In {\em {FST} {TCS} 2000}, volume 1974 of {\em LNCS}, pages 240--251, 2000.
\newblock \href {https://doi.org/10.1007/3-540-44450-5_19} {\path{doi:10.1007/3-540-44450-5_19}}.

\bibitem{fomin2013linear}
Fedor~V. Fomin, Serge Gaspers, Saket Saurabh, and St{\'{e}}phan Thomass{\'{e}}.
\newblock A linear vertex kernel for maximum internal spanning tree.
\newblock {\em J. Comput. Syst. Sci}, 79(1):1--6, 2013.
\newblock \href {https://doi.org/10.1016/J.JCSS.2012.03.004} {\path{doi:10.1016/J.JCSS.2012.03.004}}.

\bibitem{fujie2003exact}
Tetsuya Fujie.
\newblock An exact algorithm for the maximum leaf spanning tree problem.
\newblock {\em Comput. Oper. Res.}, 30(13):1931--1944, 2003.
\newblock \href {https://doi.org/10.1016/S0305-0548(02)00117-X} {\path{doi:10.1016/S0305-0548(02)00117-X}}.

\bibitem{GareyJohnson}
Michael~R. Garey and David~S. Johnson.
\newblock {\em Computers and Intractability: A Guide to the Theory of NP-completeness}.
\newblock W. H. Freeman, 1979.

\bibitem{golumbic1980algorithmic}
Martin~Charles Golumbic.
\newblock {\em Algorithmic Graph Theory and Perfect Graphs}, volume~57 of {\em Annals of Discrete Mathematics}.
\newblock Elsevier, 2nd edition, 2004.
\newblock \href {https://doi.org/10.1016/S0167-5060(04)80052-9} {\path{doi:10.1016/S0167-5060(04)80052-9}}.

\bibitem{habib2000lexbfs}
Michel Habib, Ross~M. McConnell, Christophe Paul, and Laurent Viennot.
\newblock Lex-{BFS} and partition refinement, with applications to transitive orientation, interval graph recognition and consecutive ones testing.
\newblock {\em Theor. Comput. Sci.}, 234(1-2):59--84, 2000.
\newblock \href {https://doi.org/10.1016/S0304-3975(97)00241-7} {\path{doi:10.1016/S0304-3975(97)00241-7}}.

\bibitem{hagerup1985biconnected}
Torben Hagerup.
\newblock Biconnected graph assembly and recognition of {DFS} trees.
\newblock Technical Report A 85/03, Universit\"at des Saarlandes, 1985.
\newblock \href {https://doi.org/10.22028/D291-26437} {\path{doi:10.22028/D291-26437}}.

\bibitem{hagerup1985recognition}
Torben Hagerup and Manfred Nowak.
\newblock Recognition of spanning trees defined by graph searches.
\newblock Technical Report A 85/08, Universit\"at des Saarlandes, 1985.

\bibitem{hogben2022inverse}
Leslie Hogben, Jephian C.-H. Lin, and Bryan~L. Shader.
\newblock {\em Inverse Problems and Zero Forcing For Graphs}, volume 270 of {\em Mathematical Surveys and Monographs}.
\newblock AMS, 2022.
\newblock \href {https://doi.org/10.1090/surv/270} {\path{doi:10.1090/surv/270}}.

\bibitem{karp21}
Richard~M. Karp.
\newblock Reducibility among combinatorial problems.
\newblock In {\em Complexity of Computer Computations}, pages 85--103. Plenum Press, 1972.
\newblock \href {https://doi.org/10.1007/978-1-4684-2001-2_9} {\path{doi:10.1007/978-1-4684-2001-2_9}}.

\bibitem{Kleinberg06algorithmdesign}
Jon Kleinberg and \'Eva Tardos.
\newblock {\em Algorithm Design}.
\newblock Addison Wesley, 2006.

\bibitem{kneis2011new}
Joachim Kneis, Alexander Langer, and Peter Rossmanith.
\newblock A new algorithm for finding trees with many leaves.
\newblock {\em Algorithmica}, 61(4):882--897, 2011.
\newblock \href {https://doi.org/10.1007/S00453-010-9454-5} {\path{doi:10.1007/S00453-010-9454-5}}.

\bibitem{korach1989dfs}
Ephraim Korach and Zvi Ostfeld.
\newblock {DFS} tree construction: {A}lgorithms and characterizations.
\newblock In {\em {WG} 88}, volume 344 of {\em LNCS}, pages 87--106, 1989.
\newblock \href {https://doi.org/10.1007/3-540-50728-0_37} {\path{doi:10.1007/3-540-50728-0_37}}.

\bibitem{korhonen2021single}
Tuukka Korhonen.
\newblock A single-exponential time 2-approximation algorithm for treewidth.
\newblock In {\em {FOCS} 2021}, pages 184--192, 2021.
\newblock \href {https://doi.org/10.1109/FOCS52979.2021.00026} {\path{doi:10.1109/FOCS52979.2021.00026}}.

\bibitem{kratsch2015end}
Dieter Kratsch, Mathieu Liedloff, and Daniel Meister.
\newblock End-vertices of graph search algorithms.
\newblock In {\em {CIAC} 2015}, volume 9079 of {\em LNCS}, pages 300--312, 2015.
\newblock \href {https://doi.org/10.1007/978-3-319-18173-8_22} {\path{doi:10.1007/978-3-319-18173-8_22}}.

\bibitem{li2022simple}
Peng Li, Jianhui Shang, and Yi~Shi.
\newblock A simple linear time algorithm to solve the {MIST} problem on interval graphs.
\newblock {\em Theor. Comput. Sci.}, 930:77--85, 2022.
\newblock \href {https://doi.org/10.1016/J.TCS.2022.07.012} {\path{doi:10.1016/J.TCS.2022.07.012}}.

\bibitem{li2017deeper}
Wenjun Li, Yixin Cao, Jianer Chen, and Jianxin Wang.
\newblock Deeper local search for parameterized and approximation algorithms for maximum internal spanning tree.
\newblock {\em Inf. Comput.}, 252:187--200, 2017.
\newblock \href {https://doi.org/10.1016/J.IC.2016.11.003} {\path{doi:10.1016/J.IC.2016.11.003}}.

\bibitem{lin2019zero}
Jephian C.-H. Lin.
\newblock Zero forcing number, {G}rundy domination number, and their variants.
\newblock {\em Linear Algebra Appl.}, 563:240--254, 2019.
\newblock \href {https://doi.org/10.1016/j.laa.2018.11.003} {\path{doi:10.1016/j.laa.2018.11.003}}.

\bibitem{manber1990recognizing}
Udi Manber.
\newblock Recognizing breadth-first search trees in linear time.
\newblock {\em Inf. Process. Lett.}, 34(4):167--171, 1990.
\newblock \href {https://doi.org/10.1016/0020-0190(90)90155-Q} {\path{doi:10.1016/0020-0190(90)90155-Q}}.

\bibitem{prieto2003either}
Elena Prieto and Christian Sloper.
\newblock {Either/Or}: Using \textsc{Vertex Cover} structure in designing {FPT}-algorithms --- {T}he case of $k$-\textsc{Internal Spanning Tree}.
\newblock In {\em {WADS} 2003}, volume 2748 of {\em LNCS}, pages 474--483, 2003.
\newblock \href {https://doi.org/10.1007/978-3-540-45078-8_41} {\path{doi:10.1007/978-3-540-45078-8_41}}.

\bibitem{raible2010amortized}
Daniel Raible and Henning Fernau.
\newblock An amortized search tree analysis for $k$-leaf spanning tree.
\newblock In {\em {SOFSEM} 2010}, volume 5901 of {\em LNCS}, pages 672--684, 2010.
\newblock \href {https://doi.org/10.1007/978-3-642-11266-9_56} {\path{doi:10.1007/978-3-642-11266-9_56}}.

\bibitem{reich16complexity}
Alexander Reich.
\newblock Complexity of the maximum leaf spanning tree problem on planar and regular graphs.
\newblock {\em Theor. Comput. Sci.}, 626:134--143, 2016.
\newblock \href {https://doi.org/10.1016/J.TCS.2016.02.011} {\path{doi:10.1016/J.TCS.2016.02.011}}.

\bibitem{rong2022graph}
Guozhen Rong, Yixin Cao, Jianxin Wang, and Zhifeng Wang.
\newblock Graph searches and their end vertices.
\newblock {\em Algorithmica}, 84(9):2642--2666, 2022.
\newblock \href {https://doi.org/10.1007/s00453-022-00981-5} {\path{doi:10.1007/s00453-022-00981-5}}.

\bibitem{rong2026linear}
Guozhen Rong, Biao Yuan, Yongjie Yang, and Zhen Zhang.
\newblock A linear-time algorithm for the {MCS} end-vertex problem on chordal graphs: {A} bonus-driven search strategy.
\newblock In {\em {SOSA} 2026}, pages 298--311, 2026.
\newblock \href {https://doi.org/10.1137/1.9781611978964.23} {\path{doi:10.1137/1.9781611978964.23}}.

\bibitem{rosetarjanlueker76}
Donald~J. Rose, R.~Endre Tarjan, and George~S. Lueker.
\newblock Algorithmic aspects of vertex elimination on graphs.
\newblock {\em {SIAM} J. Comput.}, 5(2):266--283, 1976.
\newblock \href {https://doi.org/10.1137/0205021} {\path{doi:10.1137/0205021}}.

\bibitem{scheffler2025parameterized}
Robert Scheffler.
\newblock On the parameterized complexity of {G}rundy domination and zero forcing problems, 2025.
\newblock \href {https://arxiv.org/abs/2508.18104} {\path{arXiv:2508.18104}}.

\bibitem{scheffler2025partial}
Robert Scheffler.
\newblock The partial search order problem.
\newblock {\em Electron. J. Comb.}, 32(4), 2025.
\newblock \href {https://doi.org/10.37236/12654} {\path{doi:10.37236/12654}}.

\bibitem{scheffler2025leaves}
Robert Scheffler.
\newblock On the leaves of graph search trees.
\newblock {\em Ars Math. Contemp.}, 26, 2026.
\newblock \href {https://doi.org/10.26493/1855-3974.3238.2a8} {\path{doi:10.26493/1855-3974.3238.2a8}}.

\bibitem{sharma2022algorithms}
Gopika Sharma, Arti Pandey, and Michael~C. Wigal.
\newblock Algorithms for maximum internal spanning tree problem for some graph classes.
\newblock {\em J. Comb. Optim.}, 44(5):3419--3445, 2022.
\newblock \href {https://doi.org/10.1007/S10878-022-00897-4} {\path{doi:10.1007/S10878-022-00897-4}}.

\bibitem{spinrad1995algorithms}
Jeremy Spinrad and R.~Sritharan.
\newblock Algorithms for weakly triangulated graphs.
\newblock {\em Discrete Appl. Math.}, 59(2):181--191, 1995.
\newblock \href {https://doi.org/10.1016/0166-218X(93)E0161-Q} {\path{doi:10.1016/0166-218X(93)E0161-Q}}.

\bibitem{tarjan1984linear}
Robert~E. Tarjan and Mihalis Yannakakis.
\newblock Simple linear-time algorithms to test chordality of graphs, test acyclicity of hypergraphs, and selectively reduce acyclic hypergraphs.
\newblock {\em {SIAM} J. Comput.}, 13(3):566--579, 1984.
\newblock \href {https://doi.org/10.1137/0213035} {\path{doi:10.1137/0213035}}.

\bibitem{west2001introduction}
Douglas~B. West.
\newblock {\em Introduction to Graph Theory}.
\newblock Prentice Hall, 2001.

\bibitem{zehavi2018k-leaf}
Meirav Zehavi.
\newblock The $k$-leaf spanning tree problem admits a klam value of~39.
\newblock {\em Eur. J. Comb.}, 68:175--203, 2018.
\newblock \href {https://doi.org/10.1016/J.EJC.2017.07.018} {\path{doi:10.1016/J.EJC.2017.07.018}}.

\end{thebibliography}

\newpage

\appendix

\section[Proof of Theorem 3.11]{Proof of \cref{thm:tw}}\label{appendix}
The proof mainly rephrases the algorithm and proofs from Section~4.3 of~\cite{scheffler2025parameterized} and is adapted appropriately. We include the full proof here to facilitate the understanding instead of just pointing to \cite{scheffler2025parameterized} and highlighting the changes.

The algorithm uses an extended version of \emph{nice} tree-decompositions. In these decomposition the tree is considered to be a rooted binary tree. Furthermore, there are particular types of nodes. We will use a tree decomposition consisting of the following five node types.

\begin{description}
    \item[Leaf] Such a node is a leaf of $T$. Its bag is empty. 
    \item[Introduce Node] Such a node $t$ has exactly one child $t'$. There is a vertex $v \notin X_{t'}$ such that $X_{t} = X_{t'} \cup \{v\}$. We say that $v$ \emph{is introduced} at $t$. 
    \item[Forget Node] Such a node $t$ has exactly one child $t'$. There is a vertex $v \in X_{t'}$ such that $X_{t} = X_{t'} \setminus \{v\}$. We say that $v$ \emph{is forgotten} at $t$. 
    \item[Rule Node] Such a node $t$ has exactly one child $t'$ and it holds that $X_t = X_{t'}$. Furthermore, the parent of $t$ is a forget node. These tree nodes will be the place of our algorithm where we decide which particular rules are applied to the forgotten vertex.
    \item[Join Node] Such a node has exactly two children $t_1$ and $t_2$ and it holds that $X_t = X_{t_1} = X_{t_2}$.
\end{description}

Note that we also assume that the bag of the root node of the tree is empty, i.e., for every vertex of the graph there is exactly one forget node where this node is forgotten. For every $t \in V(T)$, we write $N_t(v)$ as short form of $N(v) \cap X_t$. Furthermore, for every $t \in V(T)$, we define $V^t := \{ v \mid v \in X_{t'} \text{ for some descendant $t'$ of $t$}\}$. As usual, our algorithm consists of a dynamic programming approach. For every node/bag of the tree decomposition, we will consider certain signatures that describe which decision we have made for the vertices within the bag. These signatures are six-tuples of the form $\Omega = (\Gamma, \Phi, b_\Gamma, b_\Phi, \D, \omega)$ whose entries are explained in the following.

\begin{description}
    \item[$\Gamma$-Type] For every vertex $v$, we have a value $\Gamma(v) \in \{\Z,\bot\}$ describing whether $v$ becomes blue via the $\Z$-rule ($\Gamma(v) = Z$) or whether it is part of the \Z-forcing set ($\Gamma(v) = \bot$).
    \item[$\Phi$-Type] For every vertex $v$, we have a value $\Phi(v) \in (\Z, E, \bot\}$ describing whether $v$ colors some other vertex blue ($\Phi(v) = \Z$), is the last vertex that is colored blue ($\Phi(v) = E$) or none of the two ($\Phi(v) = \bot$). 
    \item[Function $b_\Gamma$] For every vertex $v$, we have a boolean value $b_\Gamma(v)$. This value is $\true$ if and only if $\Gamma(v) = \bot$ or the vertex that colors $v$ blue was already fixed.
    \item[Function $b_\Phi$] For every vertex $v$, we have a boolean value $b_\Phi(v)$. This value is $\true$ if and only if $\Phi(v) \in \{E,\bot\}$ or the vertex $x$ that is colored blue by $v$ was already fixed.
    \item[Function $b_\Pi$] For every vertex $v$, we have a boolean value $b_\Pi(v)$. This value is $\true$ if and only if $\Gamma(v) = \bot$ or $\Phi(v) = E$ or the neighbor $x$ of $v$ that is white when $v$ becomes blue was already fixed.
    \item[Boolean Value $\lambda$] This value is true if and only if some vertex has already got the $\Phi$-value~$E$.
    \item[Dependency Graph \D] The dependency graph $\mathfrak{D}$ is a directed graph. For every vertex $v \in X_t$, the digraph $\D$ contains an \emph{event node} $\gamma_v$ and if $\Phi(v) \neq \bot$, it also contains an event node $\phi_v$. The vertex $\gamma_v$ represents the event when $v$ is made blue. The vertex $\phi_v$ represents the event when the $\Z$-rule is applied to $v$. An arc from one of these event nodes $x$ to another event node $y$ implies that the event represented by $x$ must happen before the event represented by $y$.
    \item[Weight $\omega$] Every signature is associated with a value $\omega \in \{0, \dots, n\} \cup \infty$ describing how many vertices in $V^t \setminus X_t$ have been assigned the $\Gamma$-type $\bot$, or -- in case that $\omega = \infty$ -- describing that $\D$ is not acyclic. We call signatures with $\omega < \infty$ \emph{valid}.
\end{description}

The main ideas of the algorithm are the following:

\begin{itemize}
    \item When a vertex $v$ is introduced in its introduce node, then we fix the $\Gamma$- and $\Phi$-type of $v$, i.e., we specify which rule type should make $v$ blue and whether some rule is applied to $v$ to make some other vertex blue. For both decisions we do not yet specify the other vertex of the corresponding rule as this vertex might not be part of the bag. Nevertheless, fixing the type of rule already implies some arcs for the dependency graph $\D$.
    \item We lazily fix the vertices of the rules in the rule nodes. This means, we only specify the other vertex $w$ of a rule concerning $v$ if either $v$ or $w$ is forgotten in the next bag. Again, fixing these vertices implies several arcs of the dependency graph~$\D$. These arcs also concern vertices that are neither $v$ nor $w$. For example, if we want to apply rule $v \z w$, then all other neighbors of $v$ in the bag must be blue before the $Z$-rule is applied to $v$. Conversely, assume $x$ is a neighbor of $v$ and it is fixed that some $T$-rule is applied to $x$ that colors some vertex different from $v$ blue. Then $v$ must become blue before the $T$-rule is applied to $x$. 
\end{itemize}

In the following, we will describe the detailed procedures for every possible type of tree node~$t$. In this descriptions, we will say that we \emph{bypass $v$ in $\D$} if we consider the subgraph $\D_v$ of $\D$ induced by the vertices $\gamma_v$ and $\phi_v$ as well as the union of their in-neighborhoods and out-neighborhoods and add all edges of the transitive closure of $\D_v$ to~$\D$.

\begin{description}
    \item[Leaf] There is one signature with $\omega=0$, empty functions $\Gamma$, $\Phi$, $b_\Gamma$, $b_\Phi$, and empty digraph~$\D$.
    \item[Introduce Node] Let $v$ be the introduced vertex and let $t'$ be the child of $t$. For every valid signature $\Omega$ of $t'$, we consider every possible choice $(\Gamma(v), \Phi(v))$ from $\{(\Z,\Z), (\Z, \bot),\allowbreak (\bot, \Z), (\bot, \bot)\}$ and -- if $\lambda(\Omega) = \false$ -- also $(\Z, E)$. For every of these choices we create a signature of $t$ by keeping the weight $\omega$ and adding vertex $\gamma_v$ and -- if $\Phi(v) = \Z$ -- vertex $\phi_v$ to $\D$. 
    
    If $\Gamma(v) = \bot$, then we set $b_\Gamma = b_\Pi = \true$. If $\Phi(v) \in \{\bot,E\}$, then we set $b_\Phi = \true$. If $\Phi(v) = E$, then we set $b_\Pi(v) = \true$. If $\Phi(v) = \Z$, then we add the arc $(\gamma_v, \phi_v)$ to $\D$.
    \item[Forget Node] Let $v$ be the forgotten vertex and let $t'$ be the child of $t$. For every valid signature of $t'$ where $b_\Gamma(v) = \true$, $b_\Phi(v) = \true$, and $b_\Pi(v) = \true$, we construct a signature of $t$ as follows. We remove $v$ from the functions $\Gamma$, $\Phi$, $b_\Gamma$ and $b_\Phi$.  We construct the new dependency graph by bypassing $v$ in $\D$ and deleting $\gamma_v$ and $\phi_v$ afterwards. If $\Gamma(v) = \bot$, then we increase the weight $\omega$ by one.
    \item[Join Node] Let $t_1$ and $t_2$ be the children of $t$. We say that $\Omega^1$ of $t_1$ and $\Omega^2$ of $t_2$ with $\Omega^i = (\Gamma^i, \Phi^i, b_\Gamma^i, b_\Phi^i, b_\Pi^i, \lambda^i, \D^i, \omega^i)$ are \emph{compatible} if 
    \begin{itemize}
        \item $\Gamma^1 = \Gamma^2$, $\Phi^1 = \Phi^2$,
        \item for every vertex $v$ with $\Gamma(v) \neq \bot$ it holds that $b^1_\Gamma(v) \land b^2_\Gamma(v) = \false$
        \item for every vertex $v$ with $\Phi(v) \notin \{\bot,E\}$ it holds that $b^1_\Phi(v) \land b^2_\Phi(v) = \false$
        \item for every vertex $v$ with $\Gamma(v) \neq \bot$ and $\Phi(v) \notin \{\bot,E\}$ it holds that $b^1_\Pi(v) \land b^2_\Pi(v) = \false$ 
        \item $\lambda^1 \land \lambda^2 = \false$.
        \item the union of $\D^1$ and $\D^2$ is acyclic. 
    \end{itemize}
    For every of those pairs of signatures, we create a signature of $t$ by keeping $\Gamma^1$ and $\Phi^1$, setting $b_\Gamma(v) = b^1_\Gamma(v) \lor b^2_\Gamma(v)$, $b_\Phi(v) = b^1_\Phi(v) \lor b^2_\Phi(v)$, $b_\Pi(v) = b^1_\Pi(v) \lor b^2_\Pi(v)$, $\lambda = \lambda^1 \lor \lambda^2$, $\D = \D_1 \cup \D_2$, and $\omega = \omega_1 + \omega_2$.
    \item[Rule Node] Let $t'$ be the child of $t$ and let $v$ be the vertex that is forgotten in the parent of~$t$. For every valid signature $\Omega$ of $t'$, we create signatures of $t$ in the following four-step procedure:
    \begin{enumerate}[(R1)]
        \item If $b_\Gamma(v) = \false$, then create for every $f \in N_t(v)$ with $\Phi(f) = \Gamma(v)$ and $b_\Phi(f) = \false$ a new signature $\Omega(f)$ which is a copy of $\Omega$. Vertex $f$ will be the vertex that colors $v$ blue. If $b_\Gamma(v) = \true$, i.e., the vertex coloring $v$ blue is already fixed, then create only one such copy $\Omega(\varnothing)$. \label{r1}
        \item For every of the signatures $\Omega(f)$ created in Step~(R\ref{r1}), do the following: If $b_\Phi(v) = \false$, then create for every $g \in N_t(v)$ with $\Gamma(g) = \Phi(v)$ and $b_\Gamma(g) = \false$, a new signature $\Omega(f,g)$ which is a copy of $\Omega(f)$. Vertex $g$ will be the vertex that is colored blue by $v$. If $b_\Phi(v) = \true$, i.e., the vertex colored blue by $v$ has already been fixed, then create only one such copy $\Omega(f,\varnothing)$.\label{r2}
        \item For every of the signatures $\Omega(f,g)$ created in Step~(R\ref{r2}), do the following: If $b_\Pi(v) = \false$, then create for every $h \in N_t(v)$ with $\Gamma(h) = \Z$ a new signature $\Omega(f,g,h)$ which is a copy of $\Omega(f,g)$. Vertex $h$ will be the vertex that is still white when $v$ is colored blue. If $b_\Pi(v) = \true$, then create only one such copy $\Omega(f,g,\varnothing)$.\label{r2b}
        \item If $\Gamma(v) = \Z$, then for every of the signatures $\Omega(f,g,h)$ created in Step~(R\ref{r2b}) and every set $W \subseteq X_t$ such that $b_\Pi(w) = \false$ for all $w \in S$, do the following: Create a new signature $\Omega(f,g,h,S)$ which is a copy of $\Omega(f,g,h)$. Vertex $v$ will be the vertex that is still white when the vertices of $W$ are colored blue. If $\Gamma(v) = \bot$, then create only one such copy $\Omega(f,g,h,\emptyset)$.\label{r2c}
        \item For every of the created signatures $\Omega(f,g,h,W)$, do the following:\label{r3}
         \begin{enumerate}
        \item If $f \neq \varnothing$, then add $(\phi_f, \gamma_v)$ to $\D$; set $b_\Gamma(v) = b_\Phi(f) = \true$. This means that we apply rule $f \z v$.\label{r3a}
        \item If $g \neq \varnothing$, then add $(\phi_v, \gamma_g)$ to $\D$; set $b_\Gamma(g) = b_\Phi(v) = \true$. This means that we apply rule $v \z g$.\label{r3b}
        \item If $h \neq \varnothing$, then add $(\gamma_v, \gamma_h)$ to $\D$; set $b_\Pi(v) = \true$. This means that $h$ is white when $v$ is colored blue.\label{r3b2}
         \item For each $w \in W$, add $(\gamma_w, \gamma_v)$ to $\D$; set $b_\Pi(w) = \true$. This means that $v$ is white when $w$ is colored blue.\label{r3b3}
        \item For each $w \in N_t(v) \setminus\{f\}$ with $\Phi(w) = \Z$, add arc $(\gamma_v,\phi_w)$ to~$\D$. If $\Phi(v) = E$, then also add arc $(\phi_w, \gamma_v)$.\label{r3c}
        \item If $\Phi(v) = \Z$, then for each $w \in N_t(v)$ with $w \neq g$, add arc $(\gamma_w,\phi_v)$ to $\D$. If $\Phi(w) = E$, then also add arc $(\phi_v,\gamma_w)$.\label{r3d}
        \item If $\Phi(v) = E$, then for each $w \in X_t \setminus \{v\}$, add arc $(\gamma_w,\gamma_v)$ to $\D$.\label{r3e}
        \item If there is some $w \in X_t \setminus \{v\}$ with $\Phi(w) = E$, then add arc $(\gamma_v,\gamma_w)$ to $\D$.\label{r3f}
        \end{enumerate}
    \item  Check for every of the signatures $\Omega(f,g)$ whether its dependency graph $\D$ is acyclic. If not, then set its weight $\omega$ to $\infty$.
    \end{enumerate}
\end{description}

Note that if we construct two signatures of the same node that only differ in the weight~$\omega$, then we only keep the signature with smaller weight.

In the following, we will prove that there is a valid signature of the root node with weight $\omega = k$ if and only if there is an $\R$-forcing set of $G$ of size $k$. First assume that there is a valid signature $\Omega_{\tilde{t}}$ of the root node ${\tilde{t}}$. We define the \emph{signature tree $\T$ of $\Omega_{\tilde{t}}$} to be the set of signatures that contains $\Omega_{\tilde{t}}$ and exactly one signature $\Omega_t$ for every $t \in V(T)$ such that the signature $\Omega_t \in \T$ was created using the signature $\Omega_{t'}$ where $t'$ is a child of $t$. We define $\fR$ to be the set of rules applied in~(R\ref{r3}) to compute the signatures of $\T$. Furthermore, we define $\D^*$ to be the union of the dependency graphs of all signatures in $\T$.

The following two lemmas present some properties of $\D^*$.

\begin{lemma}\label{lemma:rule-arcs}
If $p \z q$ is in $\fR$, then $(\gamma_p, \phi_p), (\phi_p, \gamma_q) \in \D^*$ and for all $r \in N(p)$ with $r \neq q$, it holds that $(\gamma_r, \phi_p) \in \D^*$. Furthermore, either there is a vertex $s \in N(q)$ with $(\gamma_q, \gamma_s) \in \D^*$ or $\gamma_q$ is the unique $\gamma$-vertex of some vertex $v$ with $\Gamma(v) \neq \bot$ with outdegree~0 in~$\D^*$.
\end{lemma}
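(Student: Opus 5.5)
We read the statement off the construction: every arc of $\D^*$ was added at some introduce or rule node of the signature tree $\T$, and arcs are never removed, except when they are bypassed at forget nodes. The bypass only adds arcs of the transitive closure before deleting the forgotten event nodes, so $\D^*$ contains every arc ever inserted. The plan is therefore to locate, for a rule $p \z q \in \fR$, the node of $\T$ where each required arc was inserted.

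\emph{First part.} The rule $p \z q$ enters $\fR$ at a rule node $t$ in step (R\ref{r3a}) (with $f=p$, $v=q$) or (R\ref{r3b}) (with $v=p$, $g=q$); in both cases the arc $(\phi_p,\gamma_q)$ is added there. Since $\Phi(p)=\Z$ by the filtering in (R\ref{r1})/(R\ref{r2}), the introduce node of $p$ added $(\gamma_p,\phi_p)$. For $r \in N(p)\setminus\{q\}$: the edge $pr$ lies in some bag, so there is a first forget node at which one of $p,r$ is forgotten, and it is preceded by a rule node $t_0$ whose bag contains both $p$ and $r$. If $r$ is the vertex handled at $t_0$, step (R\ref{r3c}) with $v=r$, $w=p$ adds $(\gamma_r,\phi_p)$, provided $p$ is not the vertex $f$ coloring $r$. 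That proviso holds because $p$ colors only $q$: $b_\Phi(p)$ is set to \true{} once and blocks further choices, and the join-node compatibility conditions forbid two different fixings in two subtrees. If $p$ is handled at $t_0$, step (R\ref{r3d}) with $v=p$, $w=r \neq g$ adds $(\gamma_r,\phi_p)$. Here we must check that $g$ is indeed $q$ or $\varnothing$, which again follows from uniqueness of the fixing.

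\emph{Second part.} This concerns the white-neighbor condition of the $\Z$-rule. We argue via $b_\Pi(q)$. Since $\Gamma(q)=\Z$, $b_\Pi(q)$ starts as \false{} unless $\Phi(q)=E$, and the forget node of $q$ requires $b_\Pi(q)=\true$. Hence either $\Phi(q)=E$, or $b_\Pi(q)$ was set in (R\ref{r3b2}) or (R\ref{r3b3}). In (R\ref{r3b2}) with $v=q$ and $h \in N_t(q)$, the arc $(\gamma_q,\gamma_h)$ is added. In (R\ref{r3b3}) with $w=q$ and the handled vertex $v \in N(q)$, the arc $(\gamma_q,\gamma_v)$ is added. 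Here we need $W \subseteq N_t(v)$; the intended reading of step (R\ref{r2c}) restricts $W$ to neighbors, and we would make this explicit.

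In the remaining case $\Phi(q)=E$, we show that $\gamma_q$ has outdegree $0$ among arcs into $\gamma$-nodes of non-$\bot$ vertices and is the unique such node. The uniqueness of $E$ comes from $\lambda$: the introduce rule and the join condition $\lambda^1\land\lambda^2=\false$ ensure at most one vertex of type $E$. Steps (R\ref{r3e}) and (R\ref{r3f}) then make every other $\gamma_w$ with $w$ sharing a bag precede $\gamma_q$. The main obstacle is this outdegree-zero claim. Arcs $(\phi_w,\gamma_q)$ point into $\gamma_q$, not out of it, but we must verify that no step ever adds an arc leaving $\gamma_q$. In particular, (R\ref{r3c}) adds $(\gamma_q,\phi_w)$ when $q$ is handled, so the outdegree statement likely has to be read as ``no arc to another $\gamma$-node.'' We would settle this reading first and then check each step (R\ref{r3a})--(R\ref{r3f}) case by case.
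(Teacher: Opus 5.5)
Your first part is correct and more self-contained than the paper, which only cites Statement~1 of Lemma~4.10 of \cite{scheffler2025parameterized}. Your non-$E$ case of the second part is the paper's $b_\Pi(q)$ argument. Your remark that (R\ref{r2c}) must restrict $W$ to neighbors of $v$ is a fair point that the paper glosses over.

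The gap is the case $\Phi(q)=E$. You leave it open and propose to weaken the outdegree claim, but that proposal rests on a misreading of (R\ref{r3c}). When the handled vertex $v$ has $\Phi(v)=E$, that step adds not only $(\gamma_v,\phi_w)$ but also the reverse arc $(\phi_w,\gamma_v)$. For $v=q$ it therefore creates the $2$-cycle $\gamma_q\to\phi_w\to\gamma_q$, and the signature receives weight $\infty$. Such a signature cannot belong to $\T$, because every signature of $\T$ has an acyclic dependency graph: every node type only extends valid child signatures, and join nodes demand an acyclic union. The missing idea is exactly this use of validity. The same argument disposes of (R\ref{r3d}) when $q$ is a neighbor of a forcing vertex $v$ with $q\neq g$, since that step then adds both $(\gamma_q,\phi_v)$ and $(\phi_v,\gamma_q)$.

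The other sources of an out-arc at $\gamma_q$ are excluded directly:
\begin{itemize}
\item The introduce node adds $(\gamma_q,\phi_q)$ only when $\Phi(q)=\Z$.
\item (R\ref{r3a}), (R\ref{r3b}) and (R\ref{r3e}) only add arcs leaving $\phi$-nodes or arcs entering $\gamma_q$.
\item (R\ref{r3b2}) with $v=q$ and (R\ref{r3b3}) with $w=q$ both need $b_\Pi(q)=\false$. However, $b_\Pi(q)$ is set to \true{} when $q$ is introduced with $\Phi(q)=E$ and is never reset.
\item (R\ref{r3f}) with $v=q$ needs a second vertex of type $E$, which $\lambda$ forbids.
\item A bypass arc leaving $\gamma_q$ presupposes an earlier arc leaving $\gamma_q$.
\end{itemize}
Hence $\gamma_q$ has outdegree exactly $0$ in $\D^*$, as stated; this is the paper's case analysis.

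Your weakened reading (``no arc from $\gamma_q$ to another $\gamma$-node'') would also not suffice downstream. In \cref{lemma:tw-hin} the white-neighbor condition of the rule coloring $q$ is discharged by $q$ being the last vertex colored blue. An arc $(\gamma_q,\phi_w)$, combined with $(\phi_w,\gamma_g)$ for the target $g$ of $w$ (which exists by your first part), would force $g$ to turn blue after $q$.
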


\begin{proof}
    The first sentence follows by Statement~1 of Lemma~4.10 in \cite{scheffler2025parameterized}.

    For the second sentence, we distinguish two cases. First assume that $\Phi(q)$ has not been $E$ and in the signatures of the signature tree $\T$. Then $b_\Pi(q)$ was set to \false{} in the introduce node of $q$. In the forget node of $q$, the value $b_\Pi(q)$ has been \true{}, so there is some rule node where $b_\Pi(q)$ was set to \true. This happened in (R\ref{r3b2}) or in (R\ref{r3b3}). In that case, there has been some vertex $s$ such that $(\gamma_q,\gamma_s)$ was introduced to the dependency graph $\D$. Thus, this edge is also in $\D^*$. 
    
    It remains to show that if $\Phi(q) = E$, then the outdegree of $\gamma_w$ is zero. Assume for contradiction that $\Phi(q) = E$ but $\gamma_q$ has an outgoing arc in $\D^*$. Note that $b_\Pi(q) = \true$ in all signatures of bags that contain $q$. Then there must be such an arc that was not introduced when bypassing some other vertex. In the following we consider all ways how such an arc can be constructed. 
    
     First, it could be added in the introduce node of $q$. However, this only happens if $\Phi(q) = \Z \neq E$. So assume that it happens in the rule node $t$ of $q$. If this happens in (R\ref{r3b2}), then $h \neq \varnothing$ and, thus, $b_\Pi(q)$ has been $\false$ in (R\ref{r2b}), a contradiction to the observation above. if this happens in (R\ref{r3c}), then the reverse arc of the respective arc is also added to $\D$. So the signature would not be valid. It could happen in (R\ref{r3f}). However, then there would be another vertex with $\Phi(w) = E$. This is not possible as the value $\lambda$ ensures that such a vertex is introduced at most once. 
     
     The same arguments can be used if the arc is introduced in the rule node of another vertex. The uniqueness of the vertex $\gamma_q$ follows by the fact that at most one vertex of $\Phi$-type $E$ is introduced.
\end{proof}

\begin{lemma}\label{lemma:dependency-cycle}
   $\D^*$ is acyclic. 
\end{lemma}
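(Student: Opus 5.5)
We will prove that $\D^*$ is acyclic by contradiction: we assume a directed cycle $\mathcal{C}$ in $\D^*$ and transfer it to a cycle in the dependency graph of a single signature of $\T$. That signature was declared valid, so it cannot contain a cycle. This mirrors the argument for Statement~2 of Lemma~4.10 in \cite{scheffler2025parameterized}; the only new arcs are those from (R\ref{r2b})--(R\ref{r3b3}) and (R\ref{r3e})--(R\ref{r3f}). Each of these is added inside a rule node between event nodes of vertices in the same bag $X_t$, exactly like the old arcs, so the transfer argument should go through unchanged.

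\textbf{Locality of arcs.} Every arc of $\D^*$ is introduced in some node $t$ of $T$ between event nodes of two vertices of $X_t$. It is introduced either in the introduce node, in (R\ref{r3}) of a rule node, or through the union at a join node. The exception is arcs added by bypassing at forget nodes, and these are transitive consequences of arcs already present. So for each arc $(a,b)$ of $\D^*$ we can fix a node $t_{ab}$ whose signature in $\T$ contains $(a,b)$, with both endpoint vertices in $X_{t_{ab}}$. Since the bags containing a fixed vertex $x$ form a connected subtree $T_x$, every arc incident to an event node of $x$ lives in $T_x$.

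\textbf{Pushing the cycle upward.} Among the vertices whose event nodes occur on $\mathcal{C}$, pick the one, $x$, whose forget node lies lowest, or any choice compatible with the tree structure. Let $t$ be the child of the forget node of $x$, that is, the rule node of $x$. I then argue by induction over $T$, processed from the leaves upward. Claim: for every node $t$, if a cycle of $\D^*$ uses only arcs created in the subtree below $t$, then the dependency graph of $\Omega_t$ contains a closed walk through the event nodes of those cycle vertices still in $X_t$, or else it already contained a cycle at an earlier node.

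The claim follows from three invariants. At forget nodes, bypassing replaces any subpath through $\gamma_v,\phi_v$ by a direct arc, so walks are preserved. At join nodes, the dependency graph is the union of both children's graphs. At rule and introduce nodes, arcs are only added. Applying the claim at the root, or at the first node where all cycle vertices except those already bypassed are present, yields a cycle in one signature's $\D$. If that signature is a rule node, its weight would have been set to $\infty$. If it is a join node, compatibility would have excluded it. Either way this contradicts the validity of $\Omega_{\tilde t}$, since a signature with weight $\infty$ propagates only to weight $\infty$.

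\textbf{Main obstacle.} The hard part is the join case. There a cycle may alternate between arcs coming from $t_1$ and from $t_2$, passing through vertices already forgotten on one side. The argument must use that every forgotten vertex on one side has been bypassed. The cycle therefore shortcuts to arcs between bag vertices in $\D^1\cup\D^2$, and this union is exactly what is checked for acyclicity. I would formalize this by induction on the number of cycle vertices not in the current bag, replacing maximal segments outside $X_t$ by bypass arcs.
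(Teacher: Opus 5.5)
Your diagnosis is right. The paper gives no argument of its own; it only says that the proof of Lemma~4.11 (not 4.10) of \cite{scheffler2025parameterized} carries over. It carries over for the reason you give: acyclicity of $\D^*$ uses only four facts. Every arc joins event nodes of vertices of one bag, forget nodes bypass, join nodes take unions, and the acyclicity tests reject cyclic signatures. None of these depends on which arcs the rule nodes add.

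However, the induction you actually set up does not go through, and it breaks exactly at the join node you flag. Your hypothesis only covers cycles all of whose arcs were created below a single child. A cycle through a join node $t$ generally alternates between walks below $t_1$ and walks below $t_2$. None of these walks is a cycle, so the hypothesis says nothing about them. The sentence ``the cycle therefore shortcuts to arcs between bag vertices in $\D^1\cup\D^2$'' is the unproved step. One bypass removes only one vertex, while such a walk may pass through many vertices forgotten at different forget nodes and in different branches below $t_i$. Two smaller points: applied at the root, whose bag is empty, your claim is vacuous; and since only rule and join nodes test acyclicity, you must note that introduce and forget nodes preserve it.

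There are two ways to repair this. The first is to induct on reachability rather than on cycles. Let $\D^*_t$ be the union of the graphs of $\T$ in the subtree of $t$. Show that a walk of positive length in $\D^*_t$ between event nodes of vertices of $X_t$ yields such a walk in $\D_t$. At a join node, locality implies that no arc links a vertex forgotten below $t_1$ to one forgotten below $t_2$, so every excursion outside $X_t$ stays on one side. Then take a lowest $t$ with $C\subseteq\D^*_t$.

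The simpler repair is to finish the idea you started and dropped. Choose a vertex $x$ on the cycle $C$ such that no other vertex of $C$ is forgotten strictly below the forget node of $x$, and let $r_x$ be its rule node. Every arc of $C$ at $\gamma_x$ or $\phi_x$ was created at some node $s$ whose bag contains $x$ and the other endpoint's vertex $y$. On the path from $s$ up to $r_x$, every bag contains $x$, and also $y$, since otherwise $y$ would be forgotten below the forget node of $x$. Arcs are never removed while both endpoints stay in the bag, so all these arcs lie in the graph at $r_x$.

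Now there are two cases. If $C$ uses only event nodes of $x$, or some passage through $x$ returns to the node it came from, then the graph at $r_x$ contains a cycle, contradicting the test at $r_x$. Otherwise, the bypass at the forget node of $x$ puts the shortcut arc for every passage of $C$ through $x$ into $\D^*$. This gives a closed walk of positive length in $\D^*$ on fewer vertices of $G$. Induction on the number of vertices finishes the proof, and it needs only the rule-node tests.
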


\begin{proof}
    The proof works exactly like the proof of Lemma~4.11 in \cite{scheffler2025parameterized}.
\end{proof}

These lemmas enable us to prove the existence of an \R-forcing set if we have found a valid signature in the root node.

\begin{lemma}\label{lemma:tw-hin}
    If there is a valid signature $\Omega_{\tilde{t}}$ in the root node ${\tilde{t}}$ with weight $\omega = k$, then there is an $\Z$-forcing set of $G$ of size $k$.
\end{lemma}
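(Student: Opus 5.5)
We take as candidate forcing set $S := \{v \in V(G) \mid \Gamma(v) = \bot\}$, where $\Gamma(v)$ is the value fixed for $v$ in the signatures of the signature tree $\T$ of $\Omega_{\tilde t}$. Since $\Gamma$ is set once in the introduce node of $v$ and then copied unchanged, this is well defined. The weight $\omega$ is increased by one exactly at the forget node of each vertex with $\Gamma$-type $\bot$. Every vertex is forgotten exactly once, and the root bag is empty. Hence $|S| = \omega = k$.

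To show that $S$ is a $\Z$-forcing set, we linearize $\D^*$. By \cref{lemma:dependency-cycle}, $\D^*$ is acyclic, so we fix a topological ordering of its event nodes. For every $v \notin S$, the forget node of $v$ requires $b_\Gamma(v) = \true$. Therefore some rule node applied a rule $f \z v$ in (R\ref{r3a}), and this rule lies in $\fR$. By the compatibility conditions at join nodes and the flags $b_\Gamma, b_\Phi$, each $v \notin S$ receives exactly one incoming rule, and each $f$ with $\Phi(f)=\Z$ performs exactly one rule. We now process the rules of $\fR$ in the order of their $\phi$-nodes in the topological ordering. We claim by induction that when the rule $p \z q$ is reached, its preconditions hold, and that a vertex is blue at any moment iff it lies in $S$ or its $\gamma$-node precedes the current position.

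The verification uses \cref{lemma:rule-arcs}. First, $(\gamma_p,\phi_p) \in \D^*$, so $p$ is already blue. Second, for every $r \in N(p) \setminus \{q\}$ we have $(\gamma_r,\phi_p) \in \D^*$, so all other neighbors of $p$ are blue. Third, $(\phi_p,\gamma_q) \in \D^*$, so $q$ is still white, and hence $q$ is the unique white neighbor of $p$. For the extra condition of the $\Z$-rule, \cref{lemma:rule-arcs} gives two cases.
\begin{itemize}
\item There is $s \in N(q)$ with $(\gamma_q,\gamma_s) \in \D^*$. Then $s$ is still white when $q$ is colored, and $s \notin S$ because only vertices of $\Gamma$-type $\Z$ obtain such arcs via (R\ref{r2b}) and (R\ref{r2c}).
\item $\gamma_q$ is the unique sink among the $\gamma$-nodes of non-$S$ vertices. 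The arcs of (R\ref{r3e}) and (R\ref{r3f}) then force $\gamma_q$ to be last, so $q$ is the unique white vertex.
\end{itemize}
Since every $v \notin S$ is the target of exactly one rule, all vertices end up blue.

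The main obstacle is the bookkeeping that ties events in $\D^*$ to actual colorings. We must ensure that every required arc was really inserted in some rule node whose bag contains both endpoints, including the arcs that were only implicitly preserved through bypassing at forget nodes. For this we rely on \cref{lemma:rule-arcs} as stated, and argue that bypassing preserves reachability. Thus arcs of $\D^*$ between deleted nodes still correspond to precedences in any topological order of the union of all dependency graphs. The argument follows Lemma~4.12 of \cite{scheffler2025parameterized}.
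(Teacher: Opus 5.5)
Your choice of $S$, the count $|S|=\omega=k$, and the use of \cref{lemma:rule-arcs,lemma:dependency-cycle} match the paper, but there is a genuine gap in the linearization step. You fix an \emph{arbitrary} topological ordering of $\D^*$ and apply the rules in the order of their $\phi$-nodes, so $q$ turns blue when $\phi_p$ is reached. However, $\gamma_q$ may lie far behind $\phi_p$, so your invariant (a vertex is blue iff it is in $S$ or its $\gamma$-node precedes the current position) is false. Your first three checks survive this: from $\gamma_r\prec\phi_p$ you still get that the forcer of $r$ was processed earlier. The new $Z^*$-condition does not survive. The arc $(\gamma_q,\gamma_s)\in\D^*$ only gives $\gamma_q\prec\gamma_s$; it does not give that the $\phi$-node of the forcer of $s$ comes after $\phi_p$. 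The $E$-case has the same problem.

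Here is a concrete failure. Take the tree with edges $ab,bc,cd,ce$, $S=\{a,d\}$, and $\fR=\{a\z b,\ d\z c,\ c\z e\}$ with $\Phi(e)=E$; this is a legal sequence in that order. Apart from arcs into $\gamma_e$ and transitive arcs, $\D^*$ consists of $(\gamma_a,\phi_a),(\phi_a,\gamma_b),(\gamma_d,\phi_d),(\phi_d,\gamma_c),(\gamma_c,\phi_c),(\phi_c,\gamma_e),(\gamma_b,\phi_c),(\gamma_d,\phi_c),(\gamma_b,\gamma_c),(\gamma_c,\gamma_e)$. Hence $\gamma_a,\gamma_d,\phi_d,\phi_a,\gamma_b,\gamma_c,\phi_c,\gamma_e$ is a topological ordering. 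Your procedure applies $d\z c$ first. After that, $b$ has no white neighbour while $e$ is still white, so $a\z b$ is not a legal $Z^*$-step.

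The missing idea is the claim in the paper's proof. There is a topological sorting of $\D^*$ that starts with the $\gamma$-nodes of $S$ and in which, for every rule $v\z w$ in $\fR$, $\phi_v$ is the direct predecessor of $\gamma_w$. The paper obtains it as follows: for each such rule and each non-bypassing arc $(x,\gamma_w)$, add the arc $(x,\phi_v)$, and show that the resulting $\D^{**}$ is still acyclic. Once $\phi_p$ and $\gamma_q$ are consecutive, your invariant and your checks become correct.

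A lighter repair is to apply the rules in the order of $\gamma_q$ instead of $\phi_p$. Then the blue set just before $p\z q$ is exactly $S\cup\{x\notin S:\gamma_x\prec\gamma_q\}$, and all four conditions follow from \cref{lemma:rule-arcs} by transitivity.

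Separately, your $E$-case argument is incomplete. The arcs of (R\ref{r3e}) and (R\ref{r3f}) only involve vertices that share a bag with $q$. The right reason that $\gamma_q$ is last among the $\gamma$-nodes outside $S$ is this: by \cref{lemma:rule-arcs}, every other such node has an arc to another such node. So in the finite acyclic $\D^*$, every such path must end at the unique sink $\gamma_q$.
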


\begin{proof}
    Consider the signatures in the signature tree $\T$. Note that a vertex of $G$ has the same $\Gamma$-type in all signatures of $\T$. Hence, there are exactly $k$ vertices of $G$ that got the $\Gamma$-type~$\bot$. Let $S$ be the set of those vertices. Since forget nodes only adopt those signatures from their child where the forgotten vertex has $b_\Gamma$-value $\true$, every vertex that is not in $S$ has been colored blue by some $\Z$-rule. Due to \cref{lemma:dependency-cycle}, the complete dependency graph $\D^*$ is acyclic. Thus, there is a topological sorting of $\D^*$. We claim that there is also a sorting fulfilling certain properties.

    \begin{claim}
        There is a topological sorting $\sigma$ of $\D^*$ such that $\sigma$ starts with the $\gamma$-nodes of the vertices in $S$ and if there is a rule $v \z w$ in $\fR$, then $\phi_v$ is the direct predecessor of $\gamma_w$ in the sorting. 
    \end{claim}

    \begin{claimproof}
        It is easy to check that a vertex $\gamma_x$ only gets an incoming edge if $\Gamma(x) = \Z$ or $\Phi(x) = E$. Since both properties do not hold for the vertices in $S$, their $\gamma$-nodes have in-degree zero and, thus, can be taken at the beginning of the topological sorting.

        We construct the graph $\D^{**}$ from $\D^*$ by adding for each $v \z w$ in $\fR$ and each arc $(x,\gamma_w) \in D^*$ that is not a bypassing arc also the arc $(x,\phi_v)$. We claim that $\D^{**}$ is still acyclic. Assume for contradiction that this would be not case. Then there must be a cycle $C$ in $\D^{**}$ that does not contain any bypassing arc as such arcs can be replaced by non-bypassing arcs.  Let $C$ be a cycle that contains the least number of arcs that were added to $\D^{**}$. As $\D^*$ is acyclic, $C$ must contain some new arc $(x,\phi_v)$, i.e., there have been the rule $v \z w$ in $\fR$ and the non-bypassing arc $(x,\gamma_w)$ in $\D^*$.  We observe that all non-bypassing arcs going into $\gamma_w$ have to come either from $\phi_v$ or are coming from some vertex $\gamma_u$. This holds because the only other possibility is the construction of an arc from some $\phi$-vertex to $\gamma_w$ in (R\ref{r3c}) or (R\ref{r3d}) but this would have closed a cycle in the dependency graph of some signature. This also implies that the only non-bypassing arc in $\D^{**}$ that leaves $\phi_v$ is $(\phi_v,\gamma_w)$. Therefore, $C$ must contain $(\phi_v,\gamma_w)$ and we can replace $(x,\phi_v)$ and $(\phi_v,\gamma_w)$ in $C$ by $(x,\gamma_w)$ and get a cycle that contains less of the newly added arcs; a contradiction.

        If we now construct a topological sorting of $\D^{**}$, then for every rule $v \z w$, we can take $\gamma_w$ directly after $\phi_v$, due to the added arcs. This topological sorting is also a topological sorting of $\D^*$
    \end{claimproof}

    So let $\sigma$ have the form as stated in the claim. We arrange the $\Z$-rules in $\fR$ according to the order of the respective event nodes in $\sigma$. Note that this is possible since the respective $\phi$- and $\gamma$-nodes appear consecutively in $\sigma$. We claim that this ordering of the rules is a valid ordering, i.e., all respective vertices have the right color. This follows from \cref{lemma:rule-arcs}. If we have a rule $p \z q$, then $p$ is already blue and $q$ is still white (due to $(\gamma_p,\phi_p), (\phi_p, \gamma_q) \in \D^*$). All other neighbors $r$ of $p$ are blue since $(\gamma_r, \phi_p) \in \D^*$. Furthermore, there is some white neighbor of $q$ (since $(\gamma,q,\gamma_s) \in D^*$ or $q$ is the last vertex to become blue. This finalizes the proof.
\end{proof}

It remains to show that an $\Z$-forcing set of size $\leq k$ can also be found by our algorithm.

\begin{lemma}\label{lemma:tw-rueck}
    If there is an $\Z$-forcing set of $G$ of size $\leq k$, then there is a valid signature $\Omega_{\tilde{t}}$ in the root node ${\tilde{t}}$ with weight $\omega \leq k$.
\end{lemma}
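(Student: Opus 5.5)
Given a $Z^*$-forcing set $S$ of size at most $k$, I fix a successful forcing process: an ordered sequence of rules $R_1,\dots,R_\ell$ with $R_i = u_i \z v_i$ that colors all of $G$ blue. From it I read off all the data the signatures record. A vertex gets $\Gamma(v)=\bot$ iff $v\in S$, and $\Gamma(v)=\Z$ otherwise. For $\Phi$: the vertex colored last gets $\Phi=E$ (if $S\neq V(G)$), a vertex $u$ that performs some rule $u\z v$ gets $\Phi(u)=\Z$, and all others get $\bot$. If the last colored vertex also forces someone, we check that this cannot happen, since it is the unique white vertex at that moment. For every $v$ with $\Gamma(v)=\Z$ and $\Phi(v)\neq E$, the rule condition provides a white neighbor $\pi(v)$ of $v$ at the moment $v$ is colored. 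This gives the witness function behind $b_\Pi$.

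I then define a global timeline. The event $\gamma_v$ for $v\in S$ happens at time $0$. For the $i$-th rule, $\phi_{u_i}$ happens at time $i-\frac12$ and $\gamma_{v_i}$ at time $i$. Every arc that the algorithm adds, in the introduce nodes, in (R\ref{r3a})--(R\ref{r3f}), and by bypassing, should point forward in time. I would check this rule by rule:
\begin{itemize}
\item $(\gamma_v,\phi_v)$ points forward because a forcing vertex is blue.
\item $(\phi_f,\gamma_v)$ points forward by construction.
\item $(\gamma_v,\gamma_{\pi(v)})$ points forward because $\pi(v)$ is still white when $v$ is colored.
\item The arcs of (R\ref{r3c}) and (R\ref{r3d}) point forward because all neighbors of a forcing vertex other than its target are blue before it forces. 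For (R\ref{r3c}) this also needs that a vertex forcing someone else never forces after its neighbor is colored, and the reverse arc with $E$ is consistent because $E$ is last.
\item The arcs of (R\ref{r3e}) and (R\ref{r3f}) point forward because the $E$-vertex is colored last.
\end{itemize}
Bypassing only adds transitive arcs, so it preserves forward orientation. Hence every dependency graph produced along the way is acyclic.

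Next I construct, by induction from the leaves of the nice tree decomposition upward, a signature $\Omega_t$ at every node $t$ consistent with this fixed process. At introduce nodes I take the choice $(\Gamma(v),\Phi(v))$ dictated by the process; $\lambda$ prevents a second $E$, and there is only one. At rule nodes I make the lazy choices:
\begin{itemize}
\item $f$ is the forcer of $v$, if it is in the bag and not yet fixed.
\item $g$ is the vertex forced by $v$, if it is in the bag and not yet fixed.
\item $h=\pi(v)$, if it is in the bag and not yet fixed.
\item $W$ is the set of not-yet-fixed bag vertices $w$ with $\pi(w)=v$.
\end{itemize}
At join nodes, each partner relation (forcer, forcee, witness) is fixed in exactly one branch, namely the one containing the rule node where it was first fixed. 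This gives the required disjointness of the $b$-flags and of $\lambda$, and acyclicity of $\D^1\cup\D^2$ follows from the global timeline. At forget nodes I must show that all three flags are \true{}. The edge between $v$ and its forcer, its forcee and its witness lies in some bag. If the partner is forgotten before $v$, the relation was fixed at the partner's rule node, through (R\ref{r2c}) for the witness case, since $W$ covers exactly this. Otherwise it is fixed at $v$'s own rule node.

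The weight counts forgotten $\bot$-vertices, so the root signature has $\omega=|S|\le k$. Because the algorithm keeps the minimum-weight signature among those with equal remaining data, the stored one has $\omega\le k$. The main obstacle is the bookkeeping in the forget-node step: I must show that each partner relation is fixed exactly once along the unique path, and in particular that the witness relation for $b_\Pi$ gets fixed whichever endpoint is forgotten first. Here I would mirror the corresponding lemma in \cite{scheffler2025parameterized}, adding the new $\pi$/$W$ case.
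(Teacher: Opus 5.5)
Your proposal is correct and is essentially the paper's argument. The paper likewise fixes the forcing sequence, makes every introduce- and rule-node choice agree with it, and obtains acyclicity by embedding all dependency arcs into the digraph $\D^\sigma$ generated by the linear order of the rules, which is exactly your timeline; you additionally spell out the forget-node and join bookkeeping that the paper leaves implicit. One phrase needs correcting: in the $E$-cases of (R\ref{r3c}) and (R\ref{r3d}) the algorithm adds an arc \emph{together with} its reverse, a 2-cycle, so these arcs are not ``consistent because $E$ is last'' but are never created at all, since any neighbor of the last-colored vertex that forces must force that vertex itself (otherwise it would have two white neighbors), and that forcing pair is exactly the one excluded by $w\neq f$ resp.\ $w\neq g$ (or one of its endpoints has already left the bag).
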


\begin{proof}
    Let $S$ be an $\Z$-forcing set of size $\leq k$ and let $\sigma = (R_1, \dots, R_\ell)$ be the sequence of $\Z$-rules applied to $G$. For every tree node, we only consider signatures that match $S$ and $\sigma$, i.e, if $v \in S$, then $\Gamma(v) = \bot$ and if there is a rule $v \z w$, then $\Phi(v) = \Z$ and $\Gamma(w) = \Z$. In every rule node we apply exactly the rule to the forgotten vertex that is part of $\sigma$. We claim that doing this, we construct a valid signature of the root node $\tilde{t}$ with weight $\omega \leq k$.

    To this end, we define the dependency graph $\D^\sigma$. Here, we add arcs with respect to $\sigma$, i.e., for every rule we add the arcs as given in \cref{lemma:rule-arcs}. Furthermore, if rule $p \z q$ is applied before rule $r \z s$, then we add the arc $(\gamma_q,\phi_r)$. Let $\D^\sigma$ be the transitive closure of the digraph containing all these arcs. It is clear that the graph $\D^\sigma$ is acyclic since it represents the linear order of the rules of $\sigma$. Furthermore, the graph $\D^*$ is a subgraph of $\D^\sigma$ and, thus, $\D^*$ is also acyclic. As the dependency graphs of every signature are subgraphs of $\D^*$, they are also acyclic and, thus, all these signatures are valid. Hence, the algorithm has computed a valid signature of the root node $\tilde{t}$. Since we only have used $|S|$ many vertices with $\Gamma$-value $\bot$, the weight of this signature has to be at most $|S| \leq k$.
\end{proof}

Finally, we can prove \cref{thm:tw}.

\begin{proof}[Proof of \cref{thm:tw}]
    \Cref{lemma:tw-hin,lemma:tw-rueck} show that the algorithm works correctly. So consider the running time. We apply Korhonen's algorithm~\cite{korhonen2021single} to compute a tree decomposition of width $\leq 2d +1$ in $2^{\Oc(d)} \cdot n$ time. It is easy to see that we can bring such a decomposition into the form using our five node types in $\Oc(n)$ time. Let $\ell = 2d + 2$. Then for a particular tree node $t$, there are at most $2^\ell \cdot 3^\ell \cdot 2^\ell \cdot 2^\ell \cdot 2^\ell \cdot 2 \cdot 2^{\Oc(\ell^2)} \in 2^{\Oc(d^2)}$ different choices for $(\Gamma, \Phi, b_\Gamma, b_\Phi, b_\Pi, \lambda, \D)$. For every of these choices, we have at most one signature of $t$. Since there are $\Oc(n)$ tree nodes, the total number of signatures is bounded by $2^{\Oc(d^2)} \cdot n$. The number of steps used for one of these signatures is polynomial in $d$. Overall, this implies a running time of $\Oc(d^c) \cdot 2^{\Oc(d^2)} \cdot n = 2^{\Oc(d^2)} \cdot n$.
\end{proof}

\end{document}